\newtheorem{corollary}{Corollary}
\newtheorem{lemma}{Lemma}
\newtheorem{remark}{Remark}
\newtheorem{theorem}{Theorem}
\newtheorem{assumption}{Assumption}
\def \OO {\mathrm{O}}
\def \oo {\mathrm{o}}
\def \idle {k}
\begin{document}

\title{Serving Content with Unknown Demand:\\ the High-Dimensional Regime}
\author{
	\IEEEauthorblockN{Sharayu Moharir, Javad Ghaderi, Sujay Sanghavi and Sanjay Shakkottai} \\
}

\maketitle

\begin{abstract}
In this paper we look at content placement in the high-dimensional regime: there are $n$ servers, and $\OO(n)$ distinct types of content. Each server can store and serve $\OO(1)$ types at any given time. Demands for these content types arrive, and have to be served in an online fashion; over time, there are a total of $\OO(n)$ of these demands. We consider the algorithmic task of content placement: determining which types of content should be on which server at any given time, in the setting where the demand statistics (i.e. the relative popularity of each type of content) are not known a-priori, but have to be inferred from the very demands we are trying to satisfy. This is the high-dimensional regime because this scaling (everything being $\OO(n)$) prevents consistent estimation of demand statistics; it models many modern settings where large numbers of users, servers and videos/webpages interact in this way.

We characterize the performance of \textit{any} scheme that separates learning and placement (i.e. which use a portion of the demands to gain some estimate of the demand statistics, and then uses the same for the remaining demands), showing it is order-wise strictly suboptimal. We then study a simple adaptive scheme - which myopically attempts to store the most recently requested content on idle servers - and show it outperforms schemes that separate learning and placement. Our results also generalize to the setting where the demand statistics change with time. Overall, our results demonstrate that separating the estimation of demand, and the subsequent use of the same, is strictly suboptimal.
\end{abstract}

\section{Introduction}
\label{sec:intro}

Ever increasing volumes of multimedia content is now requested and
delivered over the Internet. Content delivery systems (e.g., YouTube
\cite{Youtube}), consisting of a large collection of servers (each
with limited storage/service capability), process and service these
requests. Naturally, the storage and content replication strategy
(i.e., what content should be stored on each of these servers) forms an
important part of the service and storage architecture.\footnote{An earlier version of this work appears in the
	Proceedings of ACM Sigmetrics, Austin, USA, June 2014
	\cite{SGSS14}.}.

Two trends have emerged in such settings of large-scale distributed
content delivery systems. First, there has been a sharp rise in not
just the volume of data, but indeed in \textit{the number
	of content-types} (e.g., number of distinct YouTube videos) that are
delivered to users~\cite{Youtube}. Second, the popularity and demand
for most of this content is \textit{uneven and ephemeral}; in many
cases, a particular content-type (e.g., a specific video clip) becomes
popular for a small interval of time after which the demand
disappears; further a large fraction of the content-types languish in
the shadows with almost no demand~\cite{Gill07,temporal13}.

To understand the effect of these trends, we study a stylized model
for the content placement and delivery in large-scale distributed
content delivery systems. The system consists of $n$ servers, each
with constant storage and service capacities, and $\alpha n$
content-types ($\alpha$ is some constant number). We consider the
scaling where the system size $n$ tends to infinity. The requests for
the content-types arrive dynamically over time and need to be served
in an online manner by the free servers storing the corresponding
contents. The requests that are ``deferred'' (i.e., cannot be
immediately served by a free server with requested content-type) incur
a high cost. To ensure reliability, we assume that there are
alternate server resources (e.g., a central server with large enough
backup storage and service capacity, or additional servers that can be
freed up on-demand) that can serve such deferred requests.

The performance of any content placement strategy crucially depends on
the popularity distribution of the content. Empirical studies in many
services such as YouTube, Peer-to-Peer (P2P) VoD systems, various
large video streaming systems, and web caching,
\cite{Gill07,BC99,YZ06,IRF04,VA02} have shown that access for
different content-types is very inhomogeneous and typically matches
well with power-law (Zipf-like) distributions, i.e., the request rate
for the $i$-th most popular content-type is proportional to $i^{-\beta}$,
for some parameter $\beta>0$.  For the performance analysis, we assume
that the content-types have a popularity that is governed by some
power-law distribution with unknown $\beta$ and further this
distribution changes over time.

Our objective is to provide efficient content placement strategies
that minimize the number of requests deferred. It is natural to expect
that content placement strategies in which more popular content-types
are replicated more will have a good performance. However, there is
still a lot of flexibility in designing such strategies and the extent
of replication of each content-type has to be determined.  Moreover, the
requests arrive dynamically over time and popularities of different
content-types might vary significantly over time; thus the content
placement strategy needs to be online and robust.

The fact that the number of contents is very large and their
popularities are time-varying creates two new challenges that are not
present in traditional queueing systems. First, it is imperative to
{\em measure the performance of content replication strategies over
	the time scale in which changes in popularities occur}. In
particular, the steady-state metrics typically used in queueing
systems are not a right measure of performance in this context.
Second, the number of content-types is enormous and {\em learning the
	popularities of all content-types over the time scale of interest is
	infeasible}. This is in contrast with traditional multi-class
multi-server systems where the number of demand classes does not scale
with the number of servers (low-dimensional setting) and thus learning
the demand rates can be done in a time duration that does not scale
with the system size.

\subsection{Contributions}

The main contributions of our work can be summarized as follows.

\textbf{Modeling Contribution:} We recognize that we are in the high-dimensional regime with unknown demand, that it is fundamentally different from the low-dimensional setting (finite number of content-types) and propose a model that captures this difference.

\textbf{Analytical Contributions:} In Section \ref{subsec:LBSSP}, we show that in this high-dimensional setting where the demand statistics are not known a-priori, the ``\textit{learn-and-optimize}'' approach, i.e., learning the demand statistics from requests and then locally caching content on
servers using the estimated statistics, is strictly
sub-optimal, even when using high-dimensional estimators such as the
Good-Turing estimator \cite{MS00} (Theorem~\ref{thm:converse}). This is in contrast to the
conventional low-dimensional setting where the ``learn-and-optimize'' approach is asymptotically optimal.

In addition, in Section \ref{subsec:MYOPIC}, we study an adaptive content
replication strategy which myopically attempts to cache the most
recently requested content-types on idle servers. Our key result is
that even this simple adaptive strategy strictly outperforms
\textit{any} content placement strategy based on the
``learn-and-optimize'' approach (Theorem~\ref{thm:MYOPIC_static_arrival_rates}). Our results also generalize to the
setting where the demand statistics change with time (Theorems~\ref{thm:converse_2} and \ref{thm:MYOPIC_changing_arrival_rates}).

Overall, our results demonstrate that separating the estimation of
demands and the subsequent use of the estimations to design optimal
content placement policies is deprecated in the high-dimensional
setting.

\subsection{Organization and Basic Notations}
The rest of this paper is organized as follows. We describe our system model and setting in Section \ref{sec:system_model}. The main results are presented in Section \ref{sec:main_results}. Our simulation results are discussed in Section \ref{sec:simulation_results}. Section \ref{proofs} contains the proofs of some of our results. Section \ref{sec:related} gives an overview of related works. We finally end the paper with conclusions. 

Some of the basic notations are as follows. Given two functions $f$ and $g$, we write $f=\OO(g)$ if $\limsup_{n \to \infty} |f(n)/g(n)|<\infty$. $f=\Omega(g)$ if $g=O(f)$. If both $f=\OO(g)$ and $f=\Omega(g)$, then $f=\Theta(g)$. Similarly, $f=\oo(g)$ if $\limsup_{n\to \infty} |f(n)/g(n)|=0$, and $f=\omega(g)$ if $g=\oo(f)$. The term $w.h.p.$ means with high probability as $n \to \infty$.  

\section{Setting and Model}
\label{sec:system_model}
In this section, we consider a stylized model for large scale
distributed content systems that captures two emerging trends, namely,
a large number of content types, and uneven and time-varying demands.

\subsection{Server and Storage Model}

The system consists of $n$ front-end servers, each of which can hold one content piece, and serve one user, at any time. In addition, there is a back-end server that stores the entire
catalog of $m$ content-types (one copy of each content-type, e.g., a
copy of each YouTube video). The contents can be copied from the
back-end server and placed on the front-end servers. 

Since we are interested in the scaling performance, as $n,m \to \infty$, for clarity we assume that there are $n$ servers and each server can store $1$ content and can serve $1$
request at any time. If instead of one content, each front-end server can store at most $d>1$ content pieces ($d$ is a constant) and serve at most $d$ requests at each time, the performance can be bounded from above by the performance of a system with $d n$ servers with a storage of $1$ each, and from below by that of another system with $n$ servers with a storage of $1$ each. Thus asymptotically in a scaling-sense, the system is still equivalent to a system of $n$ servers where each server can store $1$ content and can serve $1$ content request at any time.


\subsection{Service Model}
When a request for a content arrives, it is routed to an idle
(front-end) server which has the corresponding content-type stored on
it, if possible. We assume that the service time of each request is
exponentially distributed with mean 1. The requests have to be served
in an online manner; further service is non-preemptive, i.e., once a
request is assigned to a server, its service cannot be interrupted and
also cannot be re-routed to another server. Requests that cannot be
served (no free server with requested content-type) incur a high cost
(e.g., need to be served by the back-end server, or content needs to
be fetched from the back-end server and loaded on to a new server). As
discussed before, we refer to such requests as deferred requests. The
goal is to design content placement policies such that the number of
requests deferred is minimized.

\subsection{Content Request Model}
There are $m$ content-types (e.g., $m$ distinct YouTube videos). We
consider the setting where the number of content-types $m$ is very
large and scales linearly with the system size $n$, i.e., $m = \alpha
n$ for some constant $\alpha>1$. We assume that requests for
each content arrive according to a Poisson process and request rates
(popularities) follow a Zipf distribution. Formally, we make the
following assumptions on the arrival process.
\begin{assumption}(Arrival and Content Request Process)
	\label{ass:zipf}
	\begin{itemize}
		\item[-] The arrival process for each content-type $i$ is a Poisson
		process with rate $\lambda_i$.
		
		\item[-] The load on the system at any time is $\bar{\lambda} < 1$, where
		$
		\bar{\lambda} = \dfrac{\sum_{i=1}^m \lambda_i}{n}.
		$
		
		\item[-] Without loss of generality, content-types are indexed in the
		order of popularity. The request rate for content-type $i$ is
		$\lambda_i = n \bar{\lambda}p_i$ where $p_i \propto i^{-\beta}$ for
		some $\beta> 0$. This is the Zipf distribution with parameter
		$\beta$.

	\end{itemize}
\end{assumption}
We have used the Zipf distribution to model the popularity
distribution of various contents because empirical studies in many
content delivery systems have shown that the distribution of
popularities matches well with such distributions, see e.g.,
\cite{Gill07}, \cite{BC99}, \cite{YZ06}, \cite{IRF04}, \cite{VA02}.

\subsection{Time Scales of Change in Arrival Process}
\label{subsec:change_model}
A key trend discussed earlier is the time-varying nature of
popularities in content delivery systems \cite{Gill07,temporal13}. For
example, the empirical study in \cite{Gill07} (based on 25 millions
transactions on YouTube) shows that daily top 100 list of videos
frequently changes.
To understand the effect of this trend on the performance of content
placement strategies, we consider the following two change models.\\

\noindent \textbf{Block Change Model:} In this model, we assume that
the popularity of various content-types remains constant for some
duration of time $T(n)$, and then changes to some other arbitrarily
chosen distribution that satisfies Assumption \ref{ass:zipf}. Thus
$T(n)$ reflects the time-scale over which changes in popularities
occur. Under this model, we characterize the performance of content
placement strategies over such a time-scale $T(n)$. \\

\noindent \textbf{Continuous Change Model:} Under this model, we
assume that each content-type has a Poisson clock at some constant
rate $\nu>0$. Whenever the clock of content-type $i$ ticks,
content-type $i$ exchanges its popularity with some other content-type
$j$, chosen uniformly at random. Note that the average time over which
the popularity distribution ``completely'' changes is comparable to
that of the Block Change Model; however, here the change occurs incrementally
and continuously. Note that this model ensures that the
content-type popularity always has the Zipf distribution.
Under this model, we characterize the performance of content placement
strategies over constant intervals of time.

\section{Main Results and Discussion}
\label{sec:main_results}

In this section, we state and discuss our main results. The
proofs are provided in Section \ref{proofs}.


\subsection{Separating Learning from Content Placement}
\label{subsec:LBSSP}
In this section, we analyze the performance of storage policies which
separate the task of learning and that of content placement as
follows. Consider time intervals of length $T(n)$. The operation of
the policy in each time interval is divided into two phases:\\

\noindent \textbf{Phase 1. Learning:} Over this interval of time, use
the demands from the arrivals (see Figure
\ref{fig:learning_based_storage}) to estimate
the content-type popularity statistics. \\\\
\noindent \textbf{Phase 2. Storage:} Using the estimated popularity of
various content-types, determine which content-types are to be replicated
and stored on each server. The storage is fixed for the remaining time
interval. The content-types not requested even once in the learning
phase are treated equally in the storage phase. In other words, the
popularity of all {\em unseen} content-types in the learning phase is
assumed to be the same.\\

\begin{figure}[h]
	\begin{center}
		\includegraphics[scale=0.4]{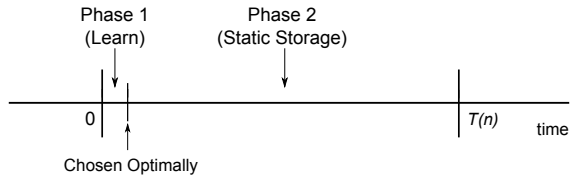}
		\caption{Learning-Based Static Storage Policies -- \sl The interval
			$T(n)$ is split into the Learning and Storage phases. The length of time spent in the Learning phase can be chosen optimally using the knowledge of the value of $T(n)$ and the Zipf parameter $\beta$. \label{fig:learning_based_storage}}
	\end{center}
\end{figure}

Further, we allow the interval of time for the Learning phase potentially
to be chosen optimally using knowledge of $T(n)$ (the interval over
which statistics remain stationary) and $\beta$ (the Zipf parameter
for content-types popularity).

%

This is a natural class of policies to consider because it is obvious
that popular content-types should be stored on more servers than the
less popular content-types. Therefore, knowing the arrival rates can
help in the design of better storage policies. Moreover, for the
content-types which are not seen in the learning phase, the storage
policy has no information about their relative popularity. It is
therefore natural to treat them as if they are equally popular.

The replication and storage in Phase~2 (Storage) can be performed by
\textit{any} static policy that relies on the knowledge (estimate) of
arrival rates, e.g., the proportional placement policy \cite{LLM12}
where the number of copies of each content-type is proportional
to its arrival rate, or the storage policy of \cite{LLM13} which was
shown to be approximately optimal in the steady state.

We now analyze the performance of learning-based static storage
policies under the Block Change Model defined in Section
\ref{subsec:change_model} where the statistics remain invariant over
the time intervals of length $T(n)$. The performance metric of
interest is the number of requests deferred by any policy belonging to
class of learning-based static storage policies in the interval of
interest. We assume that at the beginning of this interval, the
storage policy has no information about the relative popularity of
various content-types. Therefore, we start with an initial loading
where each content-type is placed on exactly one server. This loading is
not changed during Phase 1 (the learning phase) at the end of which,
the content-type on idle servers is changed as per the new storage
policy. As mentioned before, this storage is not changed for the
remaining duration in the interval of interest.



Theorem 1 in \cite{SGSS14} provides a lower bound on the number of requests
deferred by any learning-based static storage policy for the Block Change Model for the Zipf distribution with parameter $\beta > 2$. The following theorem provides a stronger bound on the performance of all learning based policies to extend this result for $\beta > 1$. This includes the case where $\beta  = 1.2$, known to be a good fit for Video on Demand (VoD) systems \cite{fricker2012impact}.

\begin{theorem}
	\label{thm:converse}
	Under Assumption \ref{ass:zipf} and the Block Change Model defined in
	Section~\ref{subsec:change_model}, for $\beta > 1$, if $T(n) =
	\Omega(1)$, the expected number
	of requests deferred by any learning-based static storage policy is
	$\Omega\big(\min\{(nT(n))^{\frac{1}{2-1/\beta}},n\}\big)$.
\end{theorem}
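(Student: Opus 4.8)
The plan is to isolate the fundamental tension between the two phases and show that no choice of the learning-phase length can make both phases cheap. Fix any learning-based static policy and let $\tau$ denote the (possibly $\beta$- and $T(n)$-dependent) length of its learning phase. Since deferrals in the learning window $[0,\tau)$ and in the storage window $[\tau,T(n))$ are incurred by disjoint sets of arrivals, the total number of deferred requests is at least $L(\tau)+S(\tau)$, the sum of a learning-phase lower bound and a storage-phase lower bound. For the learning-phase bound, note that during $[0,\tau)$ the loading is frozen at one copy per content-type, so the single server responsible for the most popular type $i=1$ (which has rate $\lambda_1=n\bar\lambda p_1=\Theta(n)$, using that $\sum_i i^{-\beta}$ converges for $\beta>1$) receives $\Theta(n\tau)$ requests but, being a rate-one server, can complete at most $\tau+\OO(\sqrt{\tau})$ of them w.h.p.\ (its completions form a rate-one renewal process, independently of whether this type is in the initial loading). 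A Poisson/Chernoff concentration argument then gives $L(\tau)=\Omega(n\tau)$ in expectation.

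The storage-phase bound is the crux, and the place where the high-dimensional obstruction bites. A content-type is \emph{seen} during learning only if it records at least one arrival, which occurs w.h.p.\ precisely for indices $i$ up to $i^\star := (n\tau)^{1/\beta}$; a Poisson concentration and tail-sum computation show the number of seen types is $\Theta(i^\star)$, so at least $(\alpha-1)n-\oo(n)$ content-types remain unseen. Because the unseen types are statistically \emph{exchangeable}, any static policy must treat them symmetrically, and since only $n$ servers exist it can store at most $n$ of the $\Theta(n)$ unseen types; hence a requested unseen type fails to be stored with probability at least $1-1/\alpha>0$. The request mass carried by unseen types over the storage window is $\Theta\big(nT(n)\sum_{i>i^\star}p_i\big)=\Theta\big(nT(n)\,(i^\star)^{1-\beta}\big)$, where $\beta>1$ is exactly what makes the tail sum behave like $(i^\star)^{1-\beta}$; multiplying by the symmetry-induced deferral probability yields $S(\tau)=\Omega\big(nT(n)\,(n\tau)^{(1-\beta)/\beta}\big)$.

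It remains to balance the two bounds. Writing $f(\tau)=c_1 n\tau + c_2\,nT(n)\,(n\tau)^{(1-\beta)/\beta}$, the first term increases and the second decreases in $\tau$, so $f$ is minimized where they coincide, at $n\tau=\Theta\big((nT(n))^{\beta/(2\beta-1)}\big)$, giving
\[
\min_{\tau} f(\tau)=\Theta\big((nT(n))^{\beta/(2\beta-1)}\big)=\Theta\big((nT(n))^{1/(2-1/\beta)}\big),
\]
since $\beta/(2\beta-1)=1/(2-1/\beta)$. One checks the minimizer satisfies $\tau\le T(n)$, so it is admissible, and $T(n)=\Omega(1)$ guarantees the Poisson concentration invoked above. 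As this holds for every $\tau$, it is a lower bound for every learning-based static policy. The truncation at $n$ covers the regime $T(n)>n^{1-1/\beta}$, where the balanced quantity already exceeds $n$ and the bound is simply reported as $\Omega(n)$, reflecting saturation at the number of distinct storable types.

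The hardest step is the storage-phase estimate: it requires simultaneously controlling (i) the concentration of the count of seen types around $(n\tau)^{1/\beta}$, and (ii) the genuinely information-theoretic claim that exchangeability of the unseen types prevents any static allocation---even one built from a Good-Turing-type estimate of the total unseen mass---from covering more than a bounded fraction of the unseen request stream, since knowing \emph{how much} mass is unseen reveals nothing about \emph{which} of the $\Theta(n)$ unseen types will actually be requested.
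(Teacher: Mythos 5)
Your proposal is correct in substance and shares the paper's overall skeleton---lower-bound the deferrals in each phase separately, then balance over the learning length (your time parameter $\tau$ plays the role of the paper's arrival count $n^{\gamma}$)---but both phase bounds are executed by genuinely different arguments. For the learning phase, the paper (Lemmas \ref{lemma:E_1}, \ref{lemma:sum_of_exponentials} and \ref{lemma:E_2}) bounds the aggregate service capacity of all $\OO(n^{\gamma/\beta})$ servers whose content-types are requested, concluding that at most $\OO(n^{\gamma/\beta}\log n)$ of the first $n^{\gamma}$ arrivals can be served; you instead track only the single server holding the most popular type, whose $\Theta(n\tau)$ arrivals exceed its $\OO(\tau)$ renewal capacity. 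This is simpler and yields the same order, $\Omega(\text{arrivals in Phase 1})$. The more significant divergence is in the storage phase: the paper's Lemma \ref{lemma:phase2_drops} splits into two cases according to how many servers the policy devotes to unseen types, and its harder Case II (almost all servers allocated to unseen content) requires a Good-Turing-style Chernoff estimate that the seen types retain mass $1-\OO(\log n/n^{\gamma})$, plus a sum-of-exponentials capacity argument showing the few servers left for seen types force $\Omega(nT(n))$ deferrals. You bypass this case entirely by counting: at most $n$ distinct types can be stored while $\alpha n - \oo(n)$ types are unseen, so exchangeable treatment of unseen types makes a requested unseen type unstored with probability at least $1-1/\alpha-\oo(1)>0$ \emph{regardless} of the allocation. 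This is a legitimate and cleaner route, but note what it buys and what it costs: it leans crucially on the strict inequality $\alpha>1$ in Assumption \ref{ass:zipf} (if $\alpha$ were $1$, your deferral probability degenerates and something like the paper's Case II becomes necessary), whereas the paper's two-case analysis is robust to the unseen pool being merely comparable to the number of servers. Two points to tighten: your count ``$(\alpha-1)n-\oo(n)$'' is really the number of unseen-\emph{and-unstorable} types (the unseen pool itself has size $\alpha n-\oo(n)$), which is exactly what makes $1-1/\alpha$ the right constant; and in the balancing step you should say explicitly that for $\tau\geq T(n)/2$, where the storage window no longer contains $\Theta(nT(n))$ arrivals and $S(\tau)$ is unavailable, the bound $L(\tau)=\Omega(nT(n))$ alone already dominates $(nT(n))^{1/(2-1/\beta)}$, and that for $n\tau=\Omega(n^{\beta})$, where the unseen tail mass vanishes, $L(\tau)=\Omega(n^{\beta})=\omega(n)$ covers the truncated regime---this mirrors the paper's separate Case II (learning phase longer than $n$ arrivals) and is what makes your one-sentence ``saturation at $n$'' remark rigorous.
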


We therefore conclude that even if the division of the interval of
interest into Phase~1 (Learning) and Phase 2 (Storage) is done in the
optimal manner, no learning-based static storage policy can defer
fewer than $\Omega\big((nT(n))^{\frac{1}{2-1/\beta}}\big)$ jobs in the interval of
interest. Therefore, Theorem \ref{thm:converse} provides a fundamental
lower bound on the number of jobs deferred by any policy which
separates learning and storage. It is worth pointing out that this
result holds even when the time-scale of change in statistics is quite
slow. Thus, even when $T(n)$, the time-scale over which statistics
remains invariant, goes to infinity and the time duration of the two phases
(Learning, Storage) is chosen optimally based on $\beta$, $T(n)$, $\Omega\big(\min\{(nT(n))^{\frac{1}{2-1/\beta}},n\}\big)$ requests are still deferred.

The next theorem provides a lower bound on the number of requests deferred by any learning-based static storage policy for the Continuous Change Model.  As before, we assume that at the beginning of this interval, the storage policy has no information about content popularity and therefore, we start with an initial loading where each content-type is placed on exactly one server.

\begin{theorem}
	\label{thm:converse_2}
	Under Assumption \ref{ass:zipf} and the Continuous Change Model defined in
	Section~\ref{subsec:change_model}, for $\beta > 1$, if $T(n) =
	\Omega(1)$, the expected number
	of requests deferred by any learning-based static storage policy is
	$\Omega\big(\min\{(nT(n))^{\frac{1}{2-1/\beta}},n\}\big)$.
\end{theorem}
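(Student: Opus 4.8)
The plan is to mirror the proof of Theorem~\ref{thm:converse} for the Block Change Model and to isolate the single place where the continuous change actually enters. Fix an arbitrary learning-based static storage policy and let $\tau \le T(n)$ denote the duration it spends in the learning phase, chosen with knowledge of $\beta$ and $T(n)$. I would lower-bound the expected number of deferred requests by a sum of two contributions -- one incurred during $[0,\tau]$ and one during $[\tau,T(n)]$ -- and show that for \emph{every} choice of $\tau$ this sum is $\Omega\big(n\tau + T(n)\,n^{1/\beta}\tau^{1/\beta-1}\big)$. Minimizing the right-hand side over $\tau>0$ gives the stationary point $\tau^\star \asymp \big(T(n)\,n^{1/\beta-1}\big)^{1/(2-1/\beta)}$ and optimal value $\Theta\big((nT(n))^{1/(2-1/\beta)}\big)$, which after capping by the trivial ceiling $\OO(n)$ on the total deferral rate yields the claimed $\Omega\big(\min\{(nT(n))^{1/(2-1/\beta)},n\}\big)$.

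For the learning-phase term, note that at time $0$ the policy has no information and $m=\alpha n>n$, so the initial loading places at most one copy of each content-type and, averaging over a uniformly random assignment of the Zipf popularities to content-types (which the policy cannot predict, and which only helps it relative to the worst case), the expected number of copies of the true rank-$1$ content-type is $n/m=1/\alpha<1$. Hence the $\omega(1)$-rate content-types defer at aggregate rate $\Omega(n)$ during $[0,\tau]$, contributing $\Omega(n\tau)$. I would also record the learnability threshold $i^\star := (n\tau)^{1/\beta}$: a content-type must generate $\Omega(1)$ requests in $[0,\tau]$ to be provisioned with more than $\OO(1)$ copies. In the non-trivial regime $T(n)=\oo(n^{1-1/\beta})$ the optimal $\tau$ is $\oo(1)$, so only an $\oo(1)$ fraction of content-types swap during learning and this phase is essentially insensitive to whether the change is block or continuous.

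The crux -- and the only genuinely new step relative to Theorem~\ref{thm:converse} -- is the storage-phase bound under continuous change. The key claim is that at \emph{every} time $t\in[\tau,T(n)]$ the arrival-rate mass carried by content-types that occupy current popularity ranks in the band $(i^\star, n^{1/\beta})$ yet hold too few copies to serve their own rate is $\Omega\big(n(i^\star)^{1-\beta}\big)=\Omega\big(n^{1/\beta}\tau^{1/\beta-1}\big)$. I would prove this by combining a copy-counting bound with the swap dynamics. Since the storage is static with only $n$ copies in total, a rank-$r$ request stream (rate on the order of $n r^{-\beta}$) can be served without deferral only by a content-type holding at least $n r^{-\beta}$ copies, and there are at most $\OO(r^{\beta})$ such content-types -- a fraction $\OO(r^{\beta}/m)$ of all $m=\alpha n$ content-types, which is bounded away from $1$ (indeed $\oo(1)$) for every $r=\oo(n^{1/\beta})$. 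Meanwhile the Poisson-clock swaps preserve the Zipf rate-multiset while keeping the band occupancy close to uniform over content-types, so the adequately provisioned content-types cannot track the band and a constant fraction of each band rank's rate goes undefended at all times. Summing the tail $\sum_{i^\star < r < n^{1/\beta}} n r^{-\beta} \asymp n(i^\star)^{1-\beta}$ and integrating over the $\Theta(T(n))$-length storage phase yields $\Omega\big(T(n)\,n^{1/\beta}\tau^{1/\beta-1}\big)$; since continuous change can only de-align a fixed allocation, this rate is at least the one in the Block Change Model.

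I expect this storage-phase claim to be the main obstacle. The difficulty is that the popularities now evolve as a time-correlated random permutation -- a Markov chain on the symmetric group driven by the per-content Poisson clocks -- so one must control, uniformly over the policy's arbitrary static allocation and uniformly over $t$, the fraction of band rate carried by under-provisioned content-types, rather than at a single frozen instant as in the block model. The clean way is to combine the copy-counting bound (which caps the served fraction of each band rank independently of $t$) with a coupling or second-moment estimate showing the swap chain keeps band occupancy close to uniform, so the adequately provisioned set cannot concentrate on the band. Establishing this robustly, while verifying that the learning phase under continuous change still produces the same $\Omega(n\tau)$ term, is the technical heart of the argument; the remaining minimization over $\tau$ and the capping at $n$ are routine.
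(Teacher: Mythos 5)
Your high-level skeleton (two-phase decomposition, deferral count $\Omega(n\tau)$ in learning, trade-off $n\tau + T(n)\,n^{1/\beta}\tau^{1/\beta-1}$ optimized at $\tau^\star$, capping at $n$) matches the paper's proof of Theorem~\ref{thm:converse}, and your learning-phase term agrees with Lemma~\ref{lemma:E_2} (with $n^\gamma = n\tau$ arrivals). But your storage-phase argument has a genuine gap, and it is exactly the step you flag as the ``technical heart.'' You propose to prove that at \emph{every} $t\in[\tau,T(n)]$ the under-provisioned band mass is $\Omega\big(n(i^\star)^{1-\beta}\big)$ by arguing that the Poisson-clock swaps keep ``band occupancy close to uniform over content-types,'' so that the $\OO(r^\beta)$ adequately provisioned types cannot track the band. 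This mechanism cannot work as stated: at times just after $\tau$, essentially no swaps have occurred (only $\Theta(n\nu t)$ clock ticks in time $t$, and in any case $\Theta(n)$ uniform transpositions are far too few to mix a permutation on $\alpha n$ symbols), so the identity of the band occupants at $t=\tau^+$ is whatever it was during Phase 1 --- no uniformization has happened, and the dynamics give you nothing there. What actually delivers the needed uniformity is not the chain's mixing but the defining restriction on the policy class: all content-types \emph{unseen} in the learning phase are treated symmetrically in the storage phase. Your storage-phase argument never invokes this property, yet without it the claim ``adequately provisioned types cannot concentrate on the band'' is false at the start of Phase 2. Your closing assertion that ``continuous change can only de-align a fixed allocation, so this rate is at least the one in the Block Change Model'' is also an unproven monotonicity claim, not an argument --- swaps can just as well move mass onto stored types.

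The paper's proof avoids the mixing question entirely and is much lighter. It reuses Lemma~\ref{lemma:phase2_drops} verbatim, keyed to the \emph{unseen} set rather than to time-varying rank bands: in Case I (at most $\epsilon n$ servers devoted to unseen types), symmetry over the $\Omega(n)$ unseen types gives each one no copy with probability bounded away from zero, and --- this is the only place continuous change enters --- the unseen set carries rate mass $\Omega\big(n\cdot n^{\gamma(1/\beta-1)}\big)$ at \emph{all} times deterministically, because the swaps preserve the Zipf rate multiset, so a set containing all but $\OO(n^{\gamma/\beta})$ of the types always holds at least the tail mass from rank $\OO(n^{\gamma/\beta})$ onward; Case II (most servers on unseen types) starves the seen types, which retain mass $\ge 1-\OO(\log n/ n^{\gamma})$ but have only $f(n)\le \bar\lambda n/1000$ servers, and the exponential concentration bound (Lemma~\ref{lemma:sum_of_exponentials}) yields $\Omega(nT(n))$ deferrals. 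The only genuinely new verification for the continuous model is the analogue of Lemma~\ref{lemma:E_1}: that only $\OO(n^{\gamma/\beta})$ distinct types are requested during learning despite the shuffling, which follows from a heavy-hitter-change count as in Lemma~\ref{lemma:change_in_heavy_hitters}. So to repair your proposal, replace the band-tracking/mixing claim with the unseen-set symmetry argument plus the multiset-preservation observation; as written, the central storage-phase estimate is both unproven and, near $t=\tau$, unprovable by the route you suggest.
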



Next, we explore \textit{adaptive storage policies} which perform the
task of learning and storage simultaneously.


\subsection{Myopic Joint Learning and Placement}
\label{subsec:MYOPIC}
We next study a natural adaptive storage policy called MYOPIC. In an
adaptive storage policy, depending on the requests that arrive and
depart, the content-type stored on a server can be changed when the server
is idle while other servers of the system might be busy serving
requests. Therefore, adaptive policies perform the tasks of learning
and placement jointly. Many variants of such adaptive policies have
been studied for decades in the context of cache management (e.g. LRU,
LRU-MIN \cite{LRU-MIN95}).


Let $C_i$ refer to the $i^{th}$ content-type, $1 \leq i \leq m$. The
MYOPIC policy works as follows: When a request for content-type $C_i$
arrives, it is assigned to a server if possible, or deferred
otherwise. Recall that a deferred request is a request for which on
arrival, no currently idle server can serve it and thus its service
invokes a backup mechanism such as a back-end server which can serve
it at a high cost. After the assigment/defer decision is made, if
there are no currently idle servers with content-type $C_i$, MYOPIC
replaces the content-type of one of the idle servers with $C_i$. This idle
server is chosen as follows:
\begin{itemize}
	\item[-] If there is a content-type $C_j$ stored on more than one currently
	idle server, the content-type of one of those servers is replaced with
	$C_i$,
	\item[-] Else, place $C_i$ on that currently idle server whose content-type
	has been requested least recently among the content-types on the
	currently idle servers.
\end{itemize}
\noindent For a formal definition of MYOPIC, refer to Figure \ref{fig:MYOPIC}.\\

\begin{figure}[h]
	\hrule
	\vspace{0.1in}
	\begin{algorithmic}[1]
		\STATE On arrival (request for $C_i$) \textbf{do},
		\STATE Allocate request to an idle server if possible.
		\IF {no other idle server has a copy of $C_i$,}
		\IF {$\exists j$: $C_j$ stored on $> 1$ idle servers,}
		\STATE replace $C_j$ with $C_i$ on any one of them.
		\ELSE
		\STATE find $C_j$: least recently requested on idle servers,\\ replace $C_j$ with $C_i$.
		\ENDIF
		\ENDIF
	\end{algorithmic}
	\vspace{0.1in}
	\hrule
	\caption{MYOPIC -- \sl An adaptive storage policy which changes the content stored on idle servers in a greedy manner to ensure that recently requested content pieces are available on idle servers.}
	\label{fig:MYOPIC}
\end{figure}

\begin{remark} Some key properties of MYOPIC are:
	\begin{enumerate}
		\item The content-types on servers can be potentially changed only when
		there is an arrival.
		
		\item The content-type of at most one idle server is changed after each
		arrival. However, for many popular content-types, it is likely that
		there is already an idle server with the content-type, in which case
		there is no content-type change.
		
		\item To implement MYOPIC, the system needs to keep track of the time
		at which the recent most request of each content-type was made.
		
	\end{enumerate}
\end{remark}

The following theorem provides an upper bound on the number of
requests deferred by MYOPIC for the Block Change Model defined in Section
\ref{subsec:change_model}.
\begin{theorem}
	\label{thm:MYOPIC_static_arrival_rates}
	Under Assumption \ref{ass:zipf} and the Block Change Model defined in
	Section~\ref{subsec:change_model}, over any time interval $T(n)$ such
	that $T(n) = \oo(n^{\beta-1})$, the number of requests deferred by
	MYOPIC is $\OO((nT(n))^{1/\beta})$ w.h.p.
\end{theorem}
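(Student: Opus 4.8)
The plan is to show that, on a high-probability event, the total number of deferrals is bounded by the number of \emph{distinct} content-types requested during the interval, and that this number is $\OO((nT(n))^{1/\beta})$. Write $K := (nT(n))^{1/\beta}$, and observe that the hypothesis $T(n)=\oo(n^{\beta-1})$ is exactly the statement $K=\oo(n)$; this gap between $K$ and the number of servers $n$ is what the whole argument exploits. Since $\beta>1$ the Zipf normalizer is $\Theta(1)$, so content-type $i$ receives a Poisson number of requests with mean $\lambda_i T(n)=\Theta((K/i)^\beta)$, which crosses $1$ near $i=K$.

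First I would control two global quantities. (i) Let $D$ be the number of content-types requested at least once in the interval. Then $\EE[D]=\sum_i(1-e^{-\lambda_i T(n)})$; splitting the sum at $i\approx K$ and using $\sum_{i>K}i^{-\beta}=\Theta(K^{1-\beta})$ for $\beta>1$ gives $\EE[D]=\Theta(K)$, and since the per-type arrival processes are independent Poisson processes, a Chernoff bound yields $D=\OO(K)$ w.h.p. (ii) Let $I(t)$ be the number of idle servers at time $t$. Because a server becomes busy only upon actually serving a request, and then for an $\mathrm{Exp}(1)$ time, the busy-server count is stochastically dominated (coupling the common arrival and service clocks) by an $M/M/\infty$ queue of arrival rate $n\bar{\lambda}$, whose occupancy is Poisson$(n\bar{\lambda})$ and concentrates sharply. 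A union bound over the $\Theta(nT(n))=\mathrm{poly}(n)$ jump times --- each deviation having probability $e^{-\Theta(n)}$, which beats $nT(n)=\oo(n^\beta)$ --- gives $I(t)\ge(1-\bar{\lambda})n-\oo(n)=\Theta(n)$ simultaneously for all $t\le T(n)$ w.h.p. Call the intersection of these two events $G$; on $G$ we have $I(t)>D$ for all $t$, since $\Theta(n)>\OO(K)=\oo(n)$.

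The conceptual heart is a structural claim: on $G$, each content-type is deferred at most once, on its first request. The key point is that MYOPIC always maintains an idle copy: immediately after any request for $C_i$ is processed, either an idle server already holds $C_i$, or MYOPIC installs one (possible since $I\ge1$). Between two consecutive requests for $C_i$ this idle copy can disappear only by being overwritten, and an overwrite occurs either by the duplicate rule --- which by definition leaves a second idle copy of the same type --- or by the least-recently-used rule. I would then argue that the LRU rule never removes the \emph{last} idle copy of an already-requested type: in the LRU regime all $I(t)$ idle servers hold distinct types, at most $D$ of which have ever been requested, so on $G$ (where $I(t)>D$) at least one idle server holds a type not yet requested in the interval; such a ``stale'' type is strictly less recent than the just-requested $C_i$, so $C_i$ is never the LRU victim. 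Hence the idle copy of $C_i$ survives until the next request for $C_i$, which is therefore served. Summing over types, the number of deferrals on $G$ is at most $D=\OO(K)=\OO((nT(n))^{1/\beta})$, and since $\PP(G)\to1$ this holds w.h.p.

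I expect the main obstacle to be the uniform-in-time control of $I(t)$: proving $I(t)=\Theta(n)$ for \emph{all} $t\le T(n)$ at once (not merely at a fixed instant) requires the stochastic-domination coupling to the $M/M/\infty$ queue together with a maximal/union bound that survives the growing horizon $T(n)$, and a check that the deferral mechanism only lowers occupancy relative to the dominating process. A secondary point requiring care is the bookkeeping of ``recency'' for never-requested types and for arbitrary initial loadings; I would handle this by declaring any type not requested in $[0,t]$ to be least-recently-used at time $t$, which makes the structural claim insensitive to the initial placement.
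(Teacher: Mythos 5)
Your proposal is correct and follows essentially the same route as the paper's proof: bound the number of distinct content-types requested in the interval by $\OO\big((nT(n))^{1/\beta}\big)$ via the Zipf tail, show by stochastic domination with an $M/M/\infty$ queue plus a union bound that $\Theta(n)$ servers are idle throughout, and conclude with an inductive structural claim that MYOPIC always preserves an idle copy of every already-requested type, so only the first request of each type can be deferred. Your explicit stale-type LRU-victim argument is simply a more carefully spelled-out version of the paper's terse induction step, and your direct estimate $\EE[D]=\sum_i\big(1-e^{-\lambda_i T(n)}\big)=\Theta\big((nT(n))^{1/\beta}\big)$ is a cosmetic variant of the paper's tail-request Chernoff count, so the two arguments coincide in substance.
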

We now compare this upper bound with the lower bound on the number of
requests deferred by any learning-based static storage policy obtained
in Theorem ~\ref{thm:converse}. 

\begin{corollary}
	\label{cor:MYOPIC_static_arrival_rates}
	Under Assumption \ref{ass:zipf}, the Block Change Model defined in
	Section \ref{subsec:change_model}, and for $\beta > 1$, over any time
	interval $T(n)$ such that $T(n) = \Omega(1)$ and $T(n) =
	\oo(n^{\beta-1})$, the expected number of requests deferred by
	any learning-based static storage policy is  $\Omega\big(\min\{(nT(n))^{\frac{1}{2-1/\beta}},n\}\big)$ and the
	number of requests deferred by the MYOPIC policy is $\OO\big((nT(n))^{\frac{1}{\beta}}\big)$
	w.h.p.
\end{corollary}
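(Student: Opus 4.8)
The plan is to obtain this corollary by directly specializing and combining the two preceding theorems, since the statement is simply the conjunction of a lower bound on learning-based static storage policies and an upper bound on MYOPIC under a common set of hypotheses. First I would observe that the lower-bound half is verbatim the conclusion of Theorem~\ref{thm:converse}: that theorem assumes Assumption~\ref{ass:zipf}, the Block Change Model, $\beta>1$, and $T(n)=\Omega(1)$, and concludes that any learning-based static storage policy defers $\Omega(\min\{(nT(n))^{1/(2-1/\beta)},n\})$ requests in expectation. Every one of these hypotheses is among those imposed in the corollary, so the lower bound transfers with no additional work.

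Next I would invoke Theorem~\ref{thm:MYOPIC_static_arrival_rates} for the upper-bound half: under Assumption~\ref{ass:zipf}, the Block Change Model, and $T(n)=\oo(n^{\beta-1})$, MYOPIC defers $\OO((nT(n))^{1/\beta})$ requests w.h.p. Again, each hypothesis here is present in the corollary (the extra restriction $\beta>1$ only strengthens the $\beta>0$ implicit in Assumption~\ref{ass:zipf}). The single point requiring a line of checking is that the two hypothesis sets are mutually compatible: the corollary imposes both $T(n)=\Omega(1)$ and $T(n)=\oo(n^{\beta-1})$, and since $\beta>1$ gives $n^{\beta-1}\to\infty$, this window is nonempty (e.g.\ any $T(n)=\Theta(1)$ lies in it). Hence both theorems apply simultaneously and their conclusions hold together, which is exactly the assertion of the corollary.

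Although no new estimate is needed, I would add the comparison that makes the corollary meaningful, namely that the MYOPIC exponent is strictly smaller than the learning-based one whenever $\beta>1$. Writing the exponents as $1/\beta$ and $1/(2-1/\beta)$, the inequality $1/\beta<1/(2-1/\beta)$ is equivalent to $2<\beta+1/\beta$, which holds strictly for every $\beta>1$ by AM--GM. Thus when the $\min$ in the lower bound is attained by $(nT(n))^{1/(2-1/\beta)}$, MYOPIC is order-wise strictly better; and when the $\min$ saturates at $n$ (i.e.\ $T(n)=\Omega(n^{(\beta-1)/\beta})$), the constraint $T(n)=\oo(n^{\beta-1})$ still forces MYOPIC's bound $(nT(n))^{1/\beta}=\oo(n)$, so MYOPIC remains order-wise below the $\Omega(n)$ learning-based bound. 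The only real ``obstacle'' is therefore purely bookkeeping---confirming that the corollary's hypotheses are precisely the intersection of the two theorems' hypotheses together with $\beta>1$---since all of the analytical content resides in Theorems~\ref{thm:converse} and~\ref{thm:MYOPIC_static_arrival_rates} themselves.
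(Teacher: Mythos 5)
Your proposal is correct and matches the paper's treatment exactly: the corollary is stated as an immediate conjunction of Theorem~\ref{thm:converse} (lower bound) and Theorem~\ref{thm:MYOPIC_static_arrival_rates} (upper bound), with no separate proof given, and the paper's follow-up discussion makes the same exponent comparison you do ($\frac{1}{2-1/\beta} > \frac{1}{\beta}$ for $\beta>1$, and $(nT(n))^{1/\beta} = \oo(n)$ when $T(n)=\oo(n^{\beta-1})$). Your hypothesis-compatibility check and AM--GM justification are a slightly more explicit version of the same bookkeeping.
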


For $\beta > 1$, $\frac{1}{2-1/\beta} > \frac{1}{\beta}$ and for $T(n) =
\oo(n^{\beta-1})$, $(nT(n))^{\frac{1}{\beta}} = \oo(n)$ . Therefore,
from Corollary \ref{cor:MYOPIC_static_arrival_rates}, we conclude that
MYOPIC outperforms all learning-based static storage policies. Note
that:
\begin{itemize}
	\item[i.] Corollary \ref{cor:MYOPIC_static_arrival_rates} holds even
	when the interval of interest $T(n)$ grows to infinity (scaling
	polynomially in $n$), or correspondingly, even when the content-type
	popularity changes very slowly with time.
	
	\item[ii.] Even if the partitioning of the $(T(n))$ into a Learning
	phase and a Static Storage phase is done in an optimal manner with
	the help of some side information $(\beta, T(n))$, the MYOPIC
	algorithm outperforms any learning-based static storage policy.
	
	\item[iii.] Since we consider the high-dimensional setting, the
	learning problem at hand is a large-alphabet learning problem. It is
	well known that standard estimation techniques like using the
	empirical values as estimates of the true statistics is suboptimal
	in this setting. Many learning algorithm like the classical
	Good-Turing estimator \cite{MS00} and other linear estimators
	\cite{VV11} have been proposed, and shown to have good performance
	for the problem of large-alphabet learning. From Corollary
	\ref{cor:MYOPIC_static_arrival_rates}, we conclude that, even if the
	learning-based storage policy uses the best possible large-alphabet
	estimator, it cannot match the performance of the MYOPIC policy.
\end{itemize}


Therefore, in the high-dimensional setting we consider, separating the
task of estimation of the  demand statistics, and the subsequent use
of the same to design a static storage policy, is strictly
suboptimal. This is the key message of this paper.\\

Theorem \ref{thm:MYOPIC_static_arrival_rates} characterizes the
performance of MYOPIC under the Block Change Model, where the
statistics of the arrival process do not change in interval of
interest. To gain further insight into robustness of MYOPIC against
changes in the arrival process, we now analyze the performance of
MYOPIC when the arrival process can change in the interval of interest
according to the Continuous Change Model defined in
Section~\ref{subsec:change_model}.

Recall that under the Continuous Change Model, on average, we expect
$\Theta(n)$ shuffles in the popularity of various content-types in an
interval of constant duration. For the Block Change Model, if $T(n) =
\Theta(1)$, the entire popularity distribution can change at the end
of the block, which is equivalent to $n$ shuffles. Therefore, for both
the change models, the expected number of changes to the popularity
distribution in an interval of constant duration is of the same
order. However, these changes occur constantly but slowly in the
Continuous Change Model as opposed to a one-shot change in the Block
Change Model.

\begin{theorem}
	\label{thm:MYOPIC_changing_arrival_rates}
	Under Assumption \ref{ass:zipf}, and the Continuous Change Model
	defined in Section \ref{subsec:change_model}, the number of requests
	deferred by the MYOPIC storage policy in any interval of constant
	duration is $\OO(n^{1/\beta})$ w.h.p.
\end{theorem}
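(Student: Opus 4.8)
The plan is to show that MYOPIC defers, in essence, only the \emph{first} request directed to each content-type during the interval, and that under the Continuous Change Model only $\OO(n^{1/\beta})$ distinct content-types are requested in a constant interval $[0,T]$, $T=\Theta(1)$; this mirrors the ``cold-miss'' structure underlying Theorem~\ref{thm:MYOPIC_static_arrival_rates}, with the new work being the count under continuous relabeling and the verification that promotions do not break the caching invariant. The governing invariant I would first establish is: once a content-type $C$ is requested at some $\tau\in[0,T]$, MYOPIC keeps at least one idle copy of $C$ available from just after $\tau$ until the next request for $C$, so every request for $C$ after its first is served. Indeed, after each request for $C$ MYOPIC guarantees an idle $C$-copy, and its reload step succeeds because, with load $\bar\lambda<1$, the number of busy servers is stochastically dominated by an $M/M/\infty$ occupancy (Poisson, hence concentrated below $(1-\delta)n$), leaving $\Theta(n)$ idle servers at all times w.h.p. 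Such an idle copy can then be lost before $C$'s next request only if $C$ is evicted as the least-recently-requested type on the idle servers.

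The heart of the argument is that a type requested inside $[0,T]$ is never this LRU victim. At any time $t$, every type already requested in $[0,t]$ has last-request-time in $(0,t]$, whereas every cached type not yet requested in $[0,t]$ has last-request-time $<0$ and is strictly ``staler.'' I would then show a \emph{reservoir} of $\Omega(n)$ idle servers always holds such stale types: the idle pool has size $\Theta(n)$ and is conserved (total busy is $\approx\bar\lambda n$ regardless of which types are popular), while the idle copies of types requested in $[0,T]$ number only $\oo(n)$ --- there are $\OO(n^{1/\beta})$ distinct such types, MYOPIC trims duplicate idle copies first, and heavily-loaded popular types keep only $\OO(1)$ idle copies each. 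Hence the LRU victim is always a stale reservoir type, confirming the invariant. This also disposes of \emph{promotions}: when a relabeling swaps a tail type into a popular rank, its many copies are built up by evicting reservoir slots, but the one-idle-copy guarantee serves every request during this ramp-up, so a promoted type can be deferred only on its single first request.

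Finally I would count the distinct types requested in $[0,T]$. Since the rank-rates are the fixed Zipf values and each type's rank is uniform in stationarity, a type is requested in the interval essentially only if it visits a popular rank (rank $\OO(n^{1/\beta})$) during $[0,T]$ or is hit while in the tail, each event of probability $\Theta(n^{1/\beta-1})$; summing over $\Theta(n)$ types gives expected count $\Theta(n^{1/\beta})$, and a bounded-difference argument over the relabeling clocks and the Poisson arrivals yields concentration, so this count is $\OO(n^{1/\beta})$ w.h.p. Combining with the invariant, every deferred request is a first request to a distinct type, so the number of deferred requests is $\OO(n^{1/\beta})$ w.h.p.

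The main obstacle is the reservoir/never-LRU lemma, made delicate by the full stochastic dynamics: I must show, uniformly over $[0,T]$ and w.h.p., that the idle copies of already-requested types total $\oo(n)$ --- which requires controlling the transient duplicate copies created when busy copies complete service and rejoin the idle pool, and bounding the number of idle spares held by popular types via their utilization. The subtlest point is a burst promotion to a near-top rank that demands $\Theta(n)$ copies: I must verify it neither stalls the reload step nor drains the reservoir, using the flow-balance fact that the simultaneously demoted type releases copies at the same $\Theta(n)$ rate at which the promoted type acquires them, buffered by the $\Theta(n)$ reservoir.
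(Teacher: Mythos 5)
Your overall architecture matches the paper's proof of Theorem~\ref{thm:MYOPIC_changing_arrival_rates}: $\Theta(n)$ idle servers via $M/M/\infty$ domination (Lemmas~\ref{lemma:occupancy} and~\ref{lemma:F_1}), the invariant that only the first request to each type can be deferred (the induction in the proof of Theorem~\ref{thm:MYOPIC_static_arrival_rates}), and an $\OO(n^{1/\beta})$ count of distinct requested types obtained by bounding both the types that ever visit the top $n^{1/\beta}$ ranks and the hits to permanently-tail ranks (Lemmas~\ref{lemma:change_in_heavy_hitters} and~\ref{lemma:types_of_arrivals_changing_arrival_rates}). However, the step you call the heart of the argument --- the ``reservoir lemma'' asserting that, uniformly over the interval, the idle copies of already-requested types total $\oo(n)$, with popular types holding only $\OO(1)$ idle spares --- is a genuine gap: it is both unnecessary and very likely false. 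Under Assumption~\ref{ass:zipf} with $m=\alpha n$ and $\beta>1$ we have $Z(\beta)=\Theta(1)$, so $p_1=\Theta(1)$ and $\lambda_1=\Theta(n)$; the most popular type thus keeps $\Theta(n)$ servers busy, and their service completions return idle duplicates of that type to the pool at rate $\Theta(n)$ per unit time. MYOPIC trims duplicates only passively --- one eviction per arrival that finds no idle copy of its own type --- and its duplicate rule does not single out the most-duplicated type, so nothing caps a hot type's idle duplicates at $\OO(1)$; they can plausibly occupy a constant fraction of the idle pool. A proof routed through this uniform $\oo(n)$ bound would stall (or require a hard, unproven queueing estimate), and your closing worries about burst promotions ``draining the reservoir'' are symptomatic of fighting this unnecessary battle.

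The repair is the paper's much simpler rule-precedence observation: an LRU eviction is attempted only when all contents stored on idle servers are \emph{distinct} (otherwise the duplicate rule fires first, and evicting a duplicate never destroys a type's last idle copy). At such an instant the number of idle servers is at least $(1-\bar{\lambda}-\varepsilon)n$, which, conditioned on the counting lemmas, exceeds the $\OO(n^{1/\beta})$ distinct types requested so far in the interval; hence by pigeonhole some idle server holds a type whose last request predates the interval, and that strictly staler type --- never a type requested within the interval --- is the LRU victim. This one-line induction replaces your reservoir lemma entirely, and arbitrarily many accumulated duplicates become harmless rather than delicate. Your counting step is sound in substance and matches the paper's, though the paper's route is cleaner: it applies a Chernoff bound to the Poisson number of swap events touching the current top $n^{1/\beta}$ ranks, rather than your per-type visit probabilities with a bounded-difference argument, and thereby avoids your extraneous appeal to stationarity of the rank process.
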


In view of Theorem \ref{thm:MYOPIC_static_arrival_rates}, if the
arrival rates do not vary in an interval of constant duration, under
the MYOPIC storage policy, the number of requests deferred in that
interval is $\OO(n^{1/\beta})$ w.h.p. Theorem
\ref{thm:MYOPIC_changing_arrival_rates} implies that the number
of requests deferred in a constant duration interval is of the same
order even if the arrival rates change according to the Continuous
Change Model. This shows that the performance of the MYOPIC policy is
robust to changes in the popularity statistics.

We now compare the upper bound obtained in Theorem \ref{thm:MYOPIC_changing_arrival_rates} for the Continuous Change Model with the lower bound on the performance of any learning-based static storage policy obtained in Theorem \ref{thm:converse_2}.

\begin{corollary}
	\label{cor:MYOPIC_changing_arrival_rates}
	Under Assumption \ref{ass:zipf}, the Continuous Change Model defined in
	Section \ref{subsec:change_model}, and for $\beta > 1$, over any time
	interval of constant duration, the expected number of requests deferred by
	any learning-based static storage policy is $\Omega\big(n^{\frac{1}{2-1/\beta}}\big)$ and the number of requests deferred by the MYOPIC policy is $\OO(n^{\frac{1}{\beta}})$
	w.h.p.
\end{corollary}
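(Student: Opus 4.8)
The plan is to obtain this corollary directly by combining the two results already established for the Continuous Change Model and specializing both to a time interval of constant duration, namely $T(n) = \Theta(1)$.

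For the lower bound, I would apply Theorem~\ref{thm:converse_2}. A constant-duration interval satisfies $T(n) = \Theta(1)$, which in particular gives $T(n) = \Omega(1)$, so the hypotheses of that theorem (Continuous Change Model, $\beta > 1$, $T(n) = \Omega(1)$) are met. The theorem then yields that any learning-based static storage policy defers an expected $\Omega\big(\min\{(nT(n))^{1/(2-1/\beta)}, n\}\big)$ requests. Substituting $T(n) = \Theta(1)$ gives $(nT(n))^{1/(2-1/\beta)} = \Theta\big(n^{1/(2-1/\beta)}\big)$, so it remains only to check which term in the minimum is active.

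The single point requiring verification is that the minimum resolves to its first argument. For $\beta > 1$ we have $1/\beta < 1$, hence $2 - 1/\beta > 1$, hence the exponent satisfies $1/(2-1/\beta) < 1$; therefore $n^{1/(2-1/\beta)} = \oo(n)$, and for all large $n$ the minimum equals $n^{1/(2-1/\beta)}$. This establishes the claimed lower bound $\Omega\big(n^{1/(2-1/\beta)}\big)$ on the expected number of deferred requests for any learning-based static storage policy.

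For the upper bound, I would invoke Theorem~\ref{thm:MYOPIC_changing_arrival_rates} verbatim: under the Continuous Change Model, MYOPIC defers $\OO(n^{1/\beta})$ requests in any interval of constant duration, w.h.p. This is precisely the second assertion of the corollary. Combining the two bounds completes the proof. Since both halves follow immediately from cited theorems, there is essentially no obstacle; the only non-mechanical step is the trivial exponent comparison $1/(2-1/\beta) < 1$ that selects the correct branch of the minimum.
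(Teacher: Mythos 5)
Your proposal is correct and matches the paper's route exactly: the paper treats this corollary as an immediate juxtaposition of Theorem~\ref{thm:converse_2} (specialized to $T(n)=\Theta(1)$) and Theorem~\ref{thm:MYOPIC_changing_arrival_rates}, with the minimum resolving to $n^{1/(2-1/\beta)}$ precisely because $1/(2-1/\beta)<1$ for $\beta>1$. Your explicit verification of that exponent comparison is the only non-trivial step, and it is handled correctly.
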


Thus, even for the Continuous Change Model, MYOPIC outperforms all Learning-based static policies.  Compared to the Block Change Model, Learning-based static policies are ``unsuitable" for the Continuous Change Model due to the following reasons:
\begin{itemize}
	\item[-] Content popularity can change while the system is in the learning phase. This makes the task of estimating content popularity more difficult.
	\item[-] Once storage is optimized for the estimated content popularity (at the end of Phase 1), it is not changed in Phase 2. However, content popularities will change (by a small amount) almost instantaneously after the learning period, thus making the storage suboptimal even if content popularity was estimated accurately in Phase 1.
\end{itemize}

\subsection{Genie-Aided Optimal Storage Policy}
\label{subsec:optimal}

In this section, our objective is to study the setting where the
demand statistics are available ``for free''. For the Block Change Model with known popularity statistics, we show that a simple
adaptive policy is optimal in the class of all policies which know
popularity statistics of various content-types. We denote the class of
such policies as $\mathds{A}$ and refer to the optimal policy as the
GENIE policy.


\noindent Let the content-types be indexed from $i = 1$ to $m$ and let
$C_i$ be the $i^{th}$ content-type. Without loss of generality, we
assume that the content-types are indexed in the order of popularity,
i.e, $\lambda_i \geq \lambda_{i+1}$ for all $i \geq 1$. Let $\idle
(t)$ denote the number of idle servers at time $t$.

The key idea of the GENIE storage policy is to ensure that at any time
$t$, if the number of idle servers is $\idle (t)$, the $\idle (t)$
most popular content-types are stored on exactly one idle server
each. The GENIE storage policy can be implemented as follows. Recall
$C_i$ is the $i^{th}$ most popular content-type. At time $t$,
\begin{itemize}
	\item[-] If there is a request for content-type $C_i$ with $i <
	k(t^-),$ then allocate the request to the corresponding idle
	server. Further, replace the content-type on server storing
	$C_{k(t^-)}$ with content-type $C_i.$
	
	\item[-] If there is a request for content-type $C_i$ with $i >
	k(t^-),$ defer this request. There is no storage update.
	
	\item[-] If there is a request for content-type $C_i$ with $i =
	k(t^-),$ then allocate the request to the corresponding idle
	server. There is no storage update.
	
	\item[-] If a server becomes idle (due to a departure), replace its
	content-type with $C_{\idle (t^-)+1}$.
\end{itemize}

\noindent For a formal definition, please refer to Figure
\ref{policy:GENIE}. \\

\begin{figure}[h]
	\hrule
	\vspace{0.1in}
	\begin{algorithmic}[1]
		\STATE Initialize: Number of idle-servers $:= \idle = n$.
		\WHILE {true}
		\IF {new request (for $C_i$) routed to a server,}
		\IF {$i \neq \idle$,}
		\STATE replace content-type of idle server storing $C_\idle$ with $C_i$
		\ENDIF
		\STATE $\idle \gets \idle-1$
		\ENDIF
		\IF{departure,}
		\STATE replace content-type of new idle server with $C_{\idle+1}$
		\STATE $\idle \gets \idle +1$
		\ENDIF
		\ENDWHILE
	\end{algorithmic}
	\vspace{0.1in}
	\hrule
	\caption{GENIE -- \sl An adaptive storage policy which has content popularity statistics available for ``free". At time $t$, if the number of idle servers is $\idle(t)$, the $\idle(t)$ most popular content-types are stored on exactly one idle server each.}
	\label{policy:GENIE}
\end{figure}
\begin{remark} The implementation of GENIE requires replacing the
	content-type of at most one server on each arrival and departure.
\end{remark}

To characterize the performance of GENIE, we assume that the system
starts from the empty state (all servers are idle) at time $t=0$. The
performance metric for any policy $\mathcal{A}$ is
$D^{(\mathcal{A})}(t)$, defined as the number of requests deferred by
time $t$ under the adaptive storage policy $\mathcal{A}$.  We say that
an adaptive storage policy $\mathcal{O}$ is optimal if
\begin{eqnarray}
\label{eq:stochastic_dominance}
D^{(\mathcal{O})}(t) \leq_{st} D^{(\mathcal{A})}(t),
\end{eqnarray}
for any storage policy $\mathcal{A} \in \mathds{A}$ and any time $t \geq 0$.
Where Equation \ref{eq:stochastic_dominance} implies that,
\begin{eqnarray*}
	\mathbb{P}(D^{(\mathcal{O})}(t)>x) \leq \mathbb{P}(D^{(\mathcal{A})}(t)>x),
\end{eqnarray*}
for all $x \geq 0$ and $t \geq 0$.

\begin{theorem}
	\label{thm:stochastic_dominance}
	If the arrival process to the content-type delivery system is Poisson
	and the service times are exponential random variables with mean 1,
	for the Block Change Model defined in Section
	\ref{subsec:change_model}, let $D^{(\mathcal{A})}(t)$ be the number of
	requests deferred by time $t$ under the adaptive storage policy
	$\mathcal{A} \in \mathds{A}$. Then, we have that,
	\begin{eqnarray*}
		D^{(GENIE)}(t) \leq_{st} D^{(\mathcal{A})}(t),
	\end{eqnarray*}
	for any storage policy $\mathcal{A} \in \mathds{A}$ and any time $t \geq 0$.
\end{theorem}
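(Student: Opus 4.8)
The plan is to prove the stochastic dominance by exhibiting a single coupling of the GENIE system and the arbitrary policy $\mathcal{A}\in\mathds{A}$ under which $D^{(GENIE)}(t)\le D^{(\mathcal{A})}(t)$ holds sample-path-wise for every $t$; then $\{D^{(GENIE)}(t)>x\}\subseteq\{D^{(\mathcal{A})}(t)>x\}$ on every path, and the claimed $\leq_{st}$ follows at once. Both systems are driven by a common pooled arrival clock (a Poisson process of rate $\sum_i\lambda_i$), and service completions are coupled through the memorylessness of the exponential clocks. Crucially, I do \emph{not} feed the two systems the same requested type: at each arrival epoch I draw the type requested in the GENIE system and the type requested in the $\mathcal{A}$ system from a state-dependent coupling of the popularity law $p$ with itself, chosen so that the event ``GENIE defers'' is contained in the event ``$\mathcal{A}$ defers''. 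Under such an arrival coupling every increment of $D^{(GENIE)}$ is matched by an increment of $D^{(\mathcal{A})}$, so the pathwise inequality is automatic, and the whole difficulty is pushed into showing that this coupling can be sustained at every epoch.

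The feasibility of the ``GENIE-defers $\Rightarrow$ $\mathcal{A}$-defers'' coupling at a given epoch is exactly the inequality $\PP(\text{GENIE defers})\le\PP(\mathcal{A}\text{ defers})$, i.e.\ GENIE's instantaneous service probability is at least that of $\mathcal{A}$. GENIE serves a request for $C_i$ iff $i\le k^{G}$, where $k^{G}$ is its idle count, and its idle servers always hold the $k^{G}$ most popular types; $\mathcal{A}$ serves iff $C_i$ currently sits on one of its idle servers. To control this I will carry the invariant that, writing $N^{G}(\ell)$ and $N^{\mathcal{A}}(\ell)$ for the number of idle servers holding a type of popularity rank $\le\ell$ in the two systems, $N^{G}(\ell)\ge N^{\mathcal{A}}(\ell)$ for every $\ell$ (a prefix/majorization dominance expressing that GENIE concentrates its idle capacity on the most popular types). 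Taking $\ell$ large forces $k^{\mathcal{A}}\le k^{G}$, and since GENIE holds the top $k^{G}$ types while $\mathcal{A}$ holds at most $k^{\mathcal{A}}\le k^{G}$ distinct idle types, one gets $\sum_{i\le k^{G}}p_i\ge\sum_{i\in I^{\mathcal{A}}}p_i$, where $I^{\mathcal{A}}$ is the set of types currently idle in $\mathcal{A}$. This is precisely the service-probability inequality that makes the epoch coupling feasible, with a Strassen-type transport argument producing the joint law of the two requested types.

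The steps are therefore: (i) verify the invariant at $t=0$, where both systems are empty and GENIE loads the top $n$ types; (ii) set up the joint arrival-mark construction and the coupled departure construction from the exponential clocks; (iii) check that $N^{G}(\ell)\ge N^{\mathcal{A}}(\ell)$ is preserved by every coupled transition. The easy cases are a request GENIE defers (then $\mathcal{A}$ also defers and nothing moves) and a request both serve. The delicate cases are the asymmetric ones: a departure, where GENIE promotes $C_{k^{G}+1}$ onto the freed server while $\mathcal{A}$ may load an arbitrary type, and, most importantly, an arrival that GENIE serves but $\mathcal{A}$ defers, which lowers $k^{G}$ and can momentarily push it below $k^{\mathcal{A}}$. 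Here the prefix invariant is threatened at the top index, and the argument must exploit that $\mathcal{A}$ having deferred while GENIE served means $\mathcal{A}$ was already deficient on some popular type, so the lost unit of $N^{G}$ at the top is absorbed by pre-existing slack in the inequality at lower indices.

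I expect this preservation step, together with the accompanying feasibility (transport) lemma, to be the principal obstacle: the relation between $k^{G}$ and $k^{\mathcal{A}}$ is not monotone in time (a single-server example already shows it oscillates), so a scalar idle-count comparison is insufficient and it is the full vector dominance $N^{G}(\cdot)\ge N^{\mathcal{A}}(\cdot)$, possibly strengthened to the sorted idle profile, that must be shown invariant. If a direct coupling proves too rigid in the ``GENIE serves, $\mathcal{A}$ defers'' case, the fallback is an interchange argument: modify $\mathcal{A}$ to agree with GENIE for one additional decision, show each such modification does not increase the deferral count in the $\leq_{st}$ sense, and pass to the limit.
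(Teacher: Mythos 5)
Your proposal breaks at its central feasibility lemma, and the invariant you propose to carry is provably not invariant --- indeed it contradicts your own observation. Evaluating $N^{G}(\ell)\ge N^{\mathcal{A}}(\ell)$ at $\ell=m$ (every idle server holds some type) gives $k^{G}\ge k^{\mathcal{A}}$ at all times; but you yourself note the idle counts oscillate, and the systematic drift is the \emph{other} way: precisely because GENIE serves more requests, it tends to have more busy servers and hence fewer idle ones, so no strengthening to a sorted-profile dominance can help, since any such dominance sums to the false scalar comparison. The per-epoch service-probability inequality fails concretely: let $\mathcal{A}\in\mathds{A}$ store type $1$ on every server until, say, five requests for other types have been deferred (GENIE serves them), and then reconfigure its idle servers GENIE-style (policies in $\mathds{A}$ know the $\lambda_i$ and may reload idle storage). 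While GENIE's five extra jobs are still in service we have $k^{\mathcal{A}}=k^{G}+5$ and $\mathcal{A}$'s idle servers hold the top $k^{G}+5$ types, so $\sum_{i\in I^{\mathcal{A}}}p_i=\sum_{i\le k^{G}+5}p_i>\sum_{i\le k^{G}}p_i$: $\mathcal{A}$'s instantaneous serve probability strictly exceeds GENIE's, and the ``GENIE defers $\Rightarrow\mathcal{A}$ defers'' mark coupling cannot be sustained at that epoch. The delicate case you flag (``GENIE serves, $\mathcal{A}$ defers'' lowering $k^{G}$) is thus not a wrinkle to be absorbed by slack in the invariant; it is exactly where the approach collapses.

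The missing idea --- and what the paper actually does --- is to abandon epoch-wise deferral containment and instead compare \emph{cumulative services}. The paper couples only the occupied-server counts $X^{*}$ (GENIE) and $Y^{*}$ ($\mathcal{A}$) together with departure counters $Z_{X^{*}},Z_{Y^{*}}$, using a pooled arrival clock of rate $n\bar{\lambda}$ and a single departure clock of rate $\max\{X^{*},Y^{*}\}$ thinned by $Y^{*}/X^{*}$ (resp.\ $X^{*}/Y^{*}$) so that $Z_{X^{*}}\ge Z_{Y^{*}}$ is preserved. The top-$k$ optimality of GENIE's idle set is invoked \emph{only} in the state $X^{*}=Y^{*}$: there both systems have $k$ idle servers, GENIE's hold the $k$ most popular types, hence $\sum_{i\in\mathbf{C^{\mathcal{A}}}}\lambda_i\le\sum_{i\in\mathbf{C^{GENIE}}}\lambda_i$, and a nested Bernoulli makes ``$Y^{*}$ increments $\Rightarrow X^{*}$ increments.'' When $X^{*}>Y^{*}$ the coupling deliberately \emph{permits} $\mathcal{A}$ to serve while GENIE defers --- the very event your construction forbids --- because the existing unit of slack in $X^{*}\ge Y^{*}$ absorbs it; induction then yields $X^{*}\ge Y^{*}$ and $Z_{X^{*}}\ge Z_{Y^{*}}$ pathwise, whence $D^{X^{*}}=W^{*}-Z_{X^{*}}-X^{*}\le D^{Y^{*}}$, and stochastic dominance follows since the coupled processes have the correct marginal rates. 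Your fallback interchange argument might be salvageable but is left entirely undeveloped; as written, the proposal's key lemma is false and the proof cannot be completed along the proposed lines.
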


Note that this theorem holds even if the $\lambda_i$s are not
distributed according to the Zipf distribution. We thus conclude that
GENIE is the optimal storage policy in the class of all storage
policies which at time $t$, have no additional knowledge of the future
arrivals except the values of $\lambda_i$ for all content-types and
the arrivals and departures in $[0, t)$. Next, we compute a lower
bound on the performance of GENIE.

\begin{theorem}
	\label{thm:GENIE_static_arrival_rates}
	Under Assumption \ref{ass:zipf}, for $\beta > 1$, the Block Change Model defined in Section \ref{subsec:change_model} and if the interval
	of interest is of constant length, the expected number of requests
	deferred by GENIE is $\Omega(n^{2-\beta})$.
\end{theorem}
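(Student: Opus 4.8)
The plan is to exploit the specific structure of GENIE to obtain an essentially \emph{pathwise} lower bound on deferrals, sidestepping any delicate analysis of the idle-server dynamics. The key observation is that GENIE maintains the invariant that, at every time $t$, the $k(t)$ idle servers hold \emph{exactly} the content-types $C_1,\dots,C_{k(t)}$; moreover a server only ever becomes busy while serving some $C_j$ with $j\le k$, so no server at any time stores a content-type of index exceeding the number of servers. Since there are only $n$ servers we always have $k(t)\le n$. Consequently, a request for any content-type $C_i$ with $i>n$ can never find its content on an idle (or indeed any) server, and is therefore deferred with certainty, irrespective of the realization of arrivals and departures. This converts the lower bound into a simple counting problem over a deterministically-defined stream of requests.

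Concretely, I would first formalize the invariant by induction over the arrival/departure events in GENIE's description (serving $C_i$ with $i<k(t^-)$ moves $C_i$ to busy and reinstates $C_{k(t^-)}\to C_i$, keeping the idle set $\{C_1,\dots,C_{k(t^-)-1}\}$; a departure installs $C_{k+1}$), yielding that the content-types present anywhere in the system are always a subset of $\{C_1,\dots,C_n\}$. Hence the number of deferred requests in $[0,T]$ is at least the number of arrivals for content-types $C_{n+1},\dots,C_m$ in that interval. These arrivals form independent Poisson processes, so the expected count equals $T\sum_{i=n+1}^{m}\lambda_i = T\,\bar\lambda n\sum_{i=n+1}^{m}p_i$. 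For the Zipf tail with $\beta>1$ and $m=\alpha n$, $\alpha>1$, the normalizer $\sum_{j=1}^{m} j^{-\beta}$ is bounded above by the finite constant $\sum_{j=1}^{\infty} j^{-\beta}$, while $\sum_{i=n+1}^{m} i^{-\beta}\ge (\alpha-1)n\cdot(\alpha n)^{-\beta}=(\alpha-1)\alpha^{-\beta}\,n^{1-\beta}=\Omega(n^{1-\beta})$. Combining, $\sum_{i=n+1}^{m}\lambda_i=\Omega\big(n\cdot n^{1-\beta}\big)=\Omega(n^{2-\beta})$, and since the interval has constant length $T=\Theta(1)$ the expected number of deferred requests is $\Omega(n^{2-\beta})$.

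I do not expect a genuine obstacle here: the content-types indexed above $n$ supply a deterministic, order-$n^{2-\beta}$ stream of guaranteed deferrals that no favorable behavior of the remaining dynamics can offset, so the only technical content is the inductive invariant and the elementary Zipf tail estimate. The truly delicate part arises only if one seeks a \emph{matching} characterization rather than the stated lower bound: one would then need to show that the idle-server count $k(t)$ concentrates near $(1-\bar\lambda)n$, so that the content-types of index between $(1-\bar\lambda)n$ and $n$ contribute only the same $\Theta(n^{2-\beta})$ order and not more. Since Theorem~\ref{thm:GENIE_static_arrival_rates} asks only for $\Omega(n^{2-\beta})$, this concentration argument is unnecessary and I would omit it.
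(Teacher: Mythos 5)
Your proposal is correct and follows essentially the same route as the paper: both rest on GENIE's invariant that idle servers hold only the top $k(t)\le n$ content-types, so every request for $C_i$ with $i>n$ is deferred with certainty, and both then lower-bound the Zipf tail mass $\sum_{i=n+1}^{\alpha n}p_i=\Omega(n^{1-\beta})$ (you via the elementary bound $(\alpha-1)n\cdot(\alpha n)^{-\beta}$, the paper via an integral comparison) and multiply by the $\Theta(n)$ expected arrivals over the constant-length interval. Your explicit inductive formalization of the invariant, which the paper merely asserts as a property of GENIE, is a harmless strengthening rather than a different argument.
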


From Theorems \ref{thm:MYOPIC_static_arrival_rates} and
\ref{thm:GENIE_static_arrival_rates} we see that there is a gap in the
performance of the MYOPIC policy and the GENIE policy (which has
additional knowledge of the content-type popularity statistics). Since
for the GENIE policy, learning the statistics of the arrival process
comes for ``free'', this gap provides an upper bound on the cost of
serving content-type with \emph{unknown} demands. We compare the
performance of the all the policies considered so far in the next
section via simulations.

%
%
%

As discussed before, the key property of the GENIE storage policy is that at time $t$, if there are $\idle(t)$ idle servers, the policy ensures that exactly one copy of the $\idle(t)$ most popular contents is stored on the idle servers. In Figure \ref{policy:GENIE}, we describe how to preserve this property at all times, in the setting where content popularity remains constant in the interval of interest.  If content popularity is time-varying, as in the case of the Continous Change Model, to maintain this property, the policy needs to have instantaneous knowledge of any change in content popularity. Moreover, contents stored on idle servers might need to be changed at the instant of change in content popularity to ensure that the idle servers store the currently most popular contents at all times.

Since the MYOPIC and GENIE policies are adaptive policies, contents stored on the front-end servers are changed dynamically. Such content changes can be classified into two types: internal fetches and external fetches. An internal fetch occurs when a content is available on at least one front-end server and the storage policy needs to place a copy of this content on an idle front-end server. In such cases, we assume that the new copy is fetched internally from one of the local (front-end) servers storing this content. An external fetch occurs when the content is currently not stored on any of the front-end servers (busy/idle) and hence the copy needs to be fetched externally from the back-end server. The external fetches incur a much higher cost compared to the internal fetches as data transfer from outside is subject to high delay and/or bandwidth consumption. The next theorem provides bounds on the number of external fetches performed to implement the MYOPIC and GENIE policies under the Block Change Model. Since the comparison depends on the initial storage of servers at the beginning of the block, we consider the worst initial case for the MYOPIC policy which is an empty system.
\begin{theorem}
	\label{thm:adaptation_cost}
	Let $V^{P^*}(T)$  be the number of external fetches made while implementing the storage policy $P^*$ in the time-interval $(0,T)$. Under Assumption 1, for $\beta > 1$, the Block Change Model and assuming we start from an empty system, for $T = \OO(1)$,
	\begin{itemize}
		\item [(i)] $V^{(\text{MYOPIC})}(T) = \OO(nT)^{1/\beta}$ w.h.p.
		\item [(ii)] $V^{(\text{GENIE})}(T) = \Omega \{\min \{n, nT \}$\} w.h.p.
	\end{itemize}
\end{theorem}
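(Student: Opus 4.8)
The plan is to handle the two bounds by separate short arguments, each reducing to structure already established. For part~(i) the key observation is that under MYOPIC every external fetch is charged to a distinct deferred request: a fetch of $C_i$ is external only when $C_i$ is held by no front-end server at the instant of the request, so in particular no \emph{idle} server holds $C_i$, the request cannot be served, and it is deferred. Since each arrival triggers at most one placement, this is an injection from external fetches into deferred requests, giving $V^{(\text{MYOPIC})}(T)\le D^{(\text{MYOPIC})}(T)$. Invoking Theorem~\ref{thm:MYOPIC_static_arrival_rates} immediately yields $V^{(\text{MYOPIC})}(T)=\OO((nT)^{1/\beta})$ w.h.p., so no further work is needed there.

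For part~(ii) I first note that under GENIE external fetches can occur \emph{only} on departures: a served arrival places at most a single copy $C_i$ with $i<\idle$, which is simultaneously in service and hence fetched internally, while a deferred arrival triggers no placement. On a departure the idle count rises $\idle\to\idle+1$ and GENIE must load $C_{\idle+1}$, which is external precisely when $C_{\idle+1}$ sits on no busy server. I then lower bound the number of departures. Starting empty, the served-arrival rate is $\sum_{i\le \idle(t)}\lambda_i=\bar\lambda n-\oo(n)$ as soon as $\idle(t)=\Theta(n)$ (the Zipf mass beyond index $\Theta(n)$ is $\OO(n^{2-\beta})=\oo(n)$), so the busy count obeys the fluid law $\dot B\approx\bar\lambda n-B$ and reaches $\Theta(n)$ after a constant time $t_0$; a standard concentration argument then gives $\idle(t)\in[c_1 n,c_2 n]$ uniformly on $[t_0,T]$ and $\Omega(n)$ departures on $[t_0,T]$ w.h.p. (for constant $T$ this is the claimed $\Omega(\min\{n,nT\})=\Omega(n)$).

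It remains to show that all but $\oo(n)$ of these departures are external, i.e. that the internal-on-departure count is $\oo(n)$. In intensity form this count is $\int_0^T (n-\idle(t))\,\mathbf{1}[C_{\idle(t)+1}\text{ in service}]\,dt$, so its expectation is at most $n\int_0^T \PP[C_{\idle(t)+1}\text{ in service}]\,dt$, and everything hinges on the estimate $\PP[C_{\idle(t)+1}\text{ in service}]=\OO(n^{1-\beta})$. Since $\idle(t)+1=\Theta(n)$ w.h.p., the content being loaded has index $\Theta(n)$ and Zipf rate $\Theta(n^{1-\beta})$, and the number of servers holding a fixed such content is dominated by an $M/M/\infty$ queue at that rate, whose occupancy has mean $\OO(n^{1-\beta})$. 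This gives $\EE[\text{internal}]=\OO(n^{2-\beta}T)=\oo(n)$ for $\beta>1$, and Markov's inequality converts it to internal $=\oo(n)$ w.h.p. Subtracting, $V^{(\text{GENIE})}(T)\ge(\text{departures})-(\text{internal})=\Omega(n)=\Omega(\min\{n,nT\})$ w.h.p.

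The hard part is exactly the estimate $\PP[C_{\idle(t)+1}\text{ in service}]=\OO(n^{1-\beta})$: the index $\idle(t)+1$ is itself a random function of the instantaneous load, so one cannot simply multiply the marginals of ``$\idle(t)=j-1$'' and ``$C_j$ in service''. The clean resolution is a decoupling coupling, running a companion system with the Poisson stream of the single type $C_j$ deleted; its idle count differs from $\idle(t)$ by at most the stochastically $\OO(1)$ number of $C_j$-servers, while ``$C_j$ in service'' is independent of that companion count. Summing the resulting near-product bound over $j$ collapses to $\EE[N_{\idle(t)+1}(t)]=\OO(n^{1-\beta})$, where $N_j(t)$ counts servers holding $C_j$. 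Everything else, namely the fluid and concentration bookkeeping for the departure count and the final Markov step, is routine.
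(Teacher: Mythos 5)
Your proposal is correct in substance and, for part (ii), follows the same skeleton as the paper's proof, but with two genuine improvements and one omission. For part (i), the paper equates the external fetches of MYOPIC with the number of unique content types requested, using the always-at-least-one-idle-copy claim from the proof of Theorem \ref{thm:MYOPIC_static_arrival_rates} together with Lemma \ref{lemma:types_of_arrivals}; your injection of external fetches into deferred requests reaches the same bound through the \emph{statement} of Theorem \ref{thm:MYOPIC_static_arrival_rates} rather than its internals, and is valid since an external fetch of $C_i$ indeed presupposes that no idle server holds $C_i$, hence a deferral at that arrival. For part (ii), the paper argues exactly as you do that fetches occur on departures, that $\idle(t^-)+1=\Theta(n)$ w.h.p.\ (via Lemma \ref{lemma:F_1}), and that the occupancy of $C_{\idle(t^-)+1}$ is dominated by an $M/M/\infty$ queue at rate $\OO(n^{1-\beta})$; but it then asserts that ``w.h.p.\ each departure is followed by an external fetch,'' which under a union bound over the $\Theta(nT)$ departures requires $n^{2-\beta}T=\oo(1)$, i.e.\ effectively $\beta>2$. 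Your expectation-plus-Markov accounting (internal-on-departure events have mean $\OO(n^{2-\beta}T)=\oo(n)$, so all but $\oo(n)$ departures are external) repairs this step for all $\beta>1$, and you are right that the dependence between the random index $\idle(t)+1$ and the occupancy of that content needs an argument the paper omits. One caution on your decoupling: whether a $C_j$ request is \emph{accepted} depends on the full system state, so ``$C_j$ in service'' is not literally independent of the companion system; the clean fix is to dominate the $C_j$ occupancy by an $M/M/\infty$ queue fed by the entire type-$j$ Poisson stream (with i.i.d.\ service marks), which is independent of the other streams — your aggregate bound survives unchanged.

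The one genuine gap is that you never use the observation that opens the paper's proof: GENIE, started from an empty system, externally fetches the $n$ most popular contents before the first arrival. Your fluid argument yields $\Omega(n)$ departures only after a constant fill-up time $t_0$, so it covers $T=\Theta(1)$ but not the $T=\oo(1)$ case permitted by $T=\OO(1)$, where $\min\{n,nT\}=nT$: there the departure count from an empty start is only $\Theta(nT^2)=\oo(nT)$ and the departure route cannot deliver the claimed bound, while the initial loading yields $n$ external fetches immediately and settles the claim outright. Adding that single observation closes the case split; with it, your write-up is actually tighter than the paper's own proof in the regime $1<\beta\le 2$.
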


Thus the MYOPIC policy incurs fewer external fetches compared to the GENIE policy. This is not surprising as the GENIE storage policy is designed with the objective of minimizing the number of deferred requests, and hence it is more aggressive in changing the contents stored on servers in order to minimize the probability that the next request is deferred.

\section{Simulation Results}
\label{sec:simulation_results}

We compare the performance of the MYOPIC policy with the performance of the GENIE policy and the following two learning-based static storage policies:
\begin{itemize}
	\item [-] The \emph{``Empirical + Static Storage''} policy uses the empirical popularity statistics of content types in the learning phase as estimates of the the true popularity statistics. At the end of the learning phase, the number of servers on which a content is stored is proportional to its estimated popularity.
	\item [-] The \emph{``Good Turing + Static Storage''} policy uses the
	Good-Turing estimator \cite{MS00} to compute an estimate of the
	missing mass at the end of the learning phase. The missing mass is
	defined as total probability mass of the content types that were not
	requested in the learning phase. Recall that we assume that learning-based static storage policies treat all the missing content-types equally, i.e., all missing content-types are estimated to be equally popular.
	
	Let $M_0$ be the total probability mass of the content types that were not requested in the learning
	phase and $S_1$ be the set of content types which were requested
	exactly once in the learning phase. The Good-Turing estimator of
	the missing mass $(\widehat{M}_0)$ is given by
	$$\widehat{M}_0 = \frac{|S_1|}{\text{number of samples}}.$$
	See \cite{MS00} for details.
	
	Let $N_i$ be the number of times content $i$ was requested in the
	learning phase and $\mathcal{C}_{\text{missing}}$ be the set of
	content-types not requested in the learning phase. The ``Good Turing + Static Storage'' policy computes an estimate of the content-popularity as follows:
	\begin{itemize}
		\item[i:] If $N_i = 0$, $p_i =  \dfrac{\widehat{M}_0}{|\mathcal{C}_{\text{missing}}|}$.
		\item[ii:] If $N_i > 0$, $p_i = (1-\widehat{M}_0) \dfrac{N_i}{\text{number of samples}}$.
	\end{itemize}
	At the end of the learning phase, the number of servers on which a content is stored is proportional to its estimated popularity.
\end{itemize}

We simulate the content distribution system for arrival and service process which satisfy Assumption \ref{ass:zipf} to compare the performance of the four policies mentioned above and also understand how their performance depends on various parameters like system size $(n)$, load $(\bar{\lambda})$ and Zipf parameter $(\beta)$. In Tables \ref{table:diff_n}, \ref{table:diff_beta} and \ref{table:diff_load}, we report the mean and variance of the fraction of jobs served by the policies over a duration of 5~s ($T(n) = 5$).

For each set of system parameters, we repeat the simulations between 1000 to 10000 times for each policy in order to ensure that the standard deviation of the quantity of interest (fraction of jobs served) is small and comparable. For the two adaptive policies (GENIE and MYOPIC), the results are averaged over 1000 iterations and for the learning-based policies (``Empirical + Static Storage'' and ``Good-Turing + Static Storage''), the results are averaged over 10000 iterations. In addition, the results for the learning-based policies are reported for empirically optimized values for the fraction of time spent by the policy in learning the distribution.


In Table \ref{table:diff_n}, we compare the performance of the policies for different values of system size ($n$). For the results reported in Table \ref{table:diff_n}, the ``Empirical + Static Storage'' policy learns for 0.1~s and the ``Good Turing + Static Storage'' policy learns for 0.7~s. The performance of all four policies improves as the system size increases and the adaptive policies significantly outperform the two learning-based static storage policies. Figure \ref{fig:diff_n} is a plot of the mean values reported in Table \ref{table:diff_n}.

\begin{table}[h]
	
	\centering
	\begin{tabular}{l r c c }
		\hline\hline
		Policy & $n$ & Mean & $\sigma$ \\ [0.5ex] 
		\hline 
		GENIE & 200 & 0.9577 & 0.0081  \\
		& 400 & 0.9698 & 0.0045 \\
		& 600 & 0.9752 & 0.0034 \\
		& 800 & 0.9788 & 0.0030 \\
		& 1000 & 0.9814 & 0.0025 \\
		\hline
		MYOPIC & 200 & 0.8995  & 0.0258 \\
		& 400 & 0.9260  & 0.0167  \\
		& 600 & 0.9380  & 0.0132 \\
		& 800 & 0.9481  & 0.0101 \\
		& 1000 & 0.9532 & 0.0080 \\
		\hline
		Empirical + Static Storage & 200 & 0.6292 & 0.0662  \\
		& 400 & 0.6918 & 0.0443 \\
		& 600 & 0.7246 & 0.0353 \\
		& 800 & 0.7464 & 0.0304 \\
		& 1000 & 0.7622 & 0.0268 \\
		\hline
		Good Turing + Static Storage & 200 & 0.6875 & 0.0274  \\
		& 400 & 0.7249 & 0.0180  \\
		& 600 & 0.7443 & 0.0140 \\
		& 800 & 0.7566 & 0.0118 \\
		& 1000 & 0.7651 & 0.0104 \\
		\hline
	\end{tabular}
	\caption{\sl The performance of the four policies as a function of the system size $(n)$ for fixed values of load $\bar{\lambda} = 0.8$ and $\beta = 1.5$. The values reported are the mean and standard deviation ($\sigma$) of the fraction of jobs served. Both adaptive policies (GENIE and MYOPIC) significantly outperform the two learning-based static storage policies.}
	
	\label{table:diff_n} 
\end{table}

\begin{figure}[h]
	\begin{center}
		\includegraphics[scale=0.35]{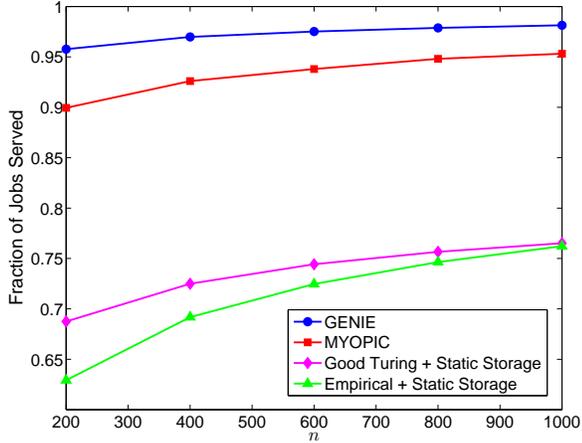}
		\caption{\sl Plot of the mean values reported in Table \ref{table:diff_n} -- performance of the storage policies as a function of system size $(n)$ for $\bar{\lambda} = 0.8$ and $\beta = 1.5$. \label{fig:diff_n}}
	\end{center}
\end{figure}

In Table \ref{table:diff_beta}, we compare the performance of the policies for different values of Zipf parameter $\beta$. For the results reported in Table \ref{table:diff_beta}, the duration of the learning phase for both learning based policies is fixed such that the expected number of arrivals in that duration is 100. The performance of all four policies improves as the value of the Zipf parameter $\beta$ increases, however, the MYOPIC policy outperforms both learning-based static storage policies for all values of $\beta$ considered.

\begin{table}[h]
	\centering
	\begin{tabular}{l c c c }
		\hline\hline
		Policy & $\beta$ & Mean & $\sigma$ \\ [0.5ex] 
		\hline 
		GENIE & 2 & 0.9939 & 0.0026 \\
		& 3 & 0.9996 & 0.0015 \\
		& 4 & 0.9998 & 0.0011 \\
		& 5 & 0.9998 & 0.0012 \\
		& 6 & 0.9998 & 0.0011 \\
		\hline
		MYOPIC & 2 & 0.9778 & 0.0078 \\
		& 3 & 0.9960 & 0.0033 \\
		& 4 & 0.9982 & 0.0026 \\
		& 5 & 0.9990 & 0.0018 \\
		& 6 & 0.9993 & 0.0013 \\
		\hline
		Empirical + Static Storage & 2 & 0.8594 & 0.0194  \\
		& 3 & 0.9228 & 0.0155 \\
		& 4 & 0.9397 & 0.0119 \\
		& 5 & 0.9453 & 0.0095 \\
		& 6 & 0.9495 & 0.0073 \\
		\hline
		Good Turing + Static Storage & 2 & 0.8436 & 0.0235  \\
		& 3 & 0.9198 & 0.0154 \\
		& 4 & 0.9378 & 0.0124 \\
		& 5 & 0.9456 & 0.0094 \\
		& 6 & 0.9491 & 0.0072 \\
		\hline
	\end{tabular}
	\caption{\sl The performance of the four policies as a function of the Zipf parameter $(\beta)$ for fixed values of system size $n = 500$ and load $\bar{\lambda} = 0.9$. The values reported are the mean and standard deviation ($\sigma$) of the fraction of jobs served. The MYOPIC policy outperforms the two learning-based static storage policies for all values of $\beta$ considered.}
	
	\label{table:diff_beta} 
\end{table}

In Table \ref{table:diff_load}, we compare the performance of the policies for different values of load $\bar{\lambda}$. For the results reported in Table \ref{table:diff_load}, the duration of the learning phase for both learning based policies is fixed such that the expected number of arrivals in that duration is 100. The performance of all four policies deteriorates as the load increases, however, for all loads considered, the MYOPIC policies outperforms the two learning-based static storage policies.

\begin{table}[h]
	\centering
	\begin{tabular}{l c c c }
		\hline\hline
		Policy & $\bar{\lambda}$ & Mean & $\sigma$ \\ [0.5ex] 
		\hline 
		GENIE & 0.500 & 0.9892 & 0.0025 \\
		& 0.725 & 0.9788 & 0.0013 \\
		& 0.950 & 0.9531 & 0.0017 \\
		\hline
		MYOPIC & 0.500 & 0.9605 & 0.0113 \\
		& 0.725 & 0.9484 & 0.0105 \\
		& 0.950 & 0.8973 & 0.0221 \\
		\hline
		Empirical + Static Storage & 0.500 & 0.7756 & 0.0222  \\
		& 0.725 & 0.7705 & 0.0238 \\
		& 0.950 & 0.7352 & 0.0235 \\
		\hline
		Good Turing + Static Storage & 0.500 & 0.7849 & 0.0230  \\
		& 0.725 & 0.7589 & 0.0249 \\
		& 0.950 & 0.6869 & 0.0348 \\
		\hline
	\end{tabular}
	\caption{\sl The performance of the four policies as a function of the load $(\bar{\lambda})$ for fixed values of system size $n = 500$ and $\beta = 1.2$. The values reported are the mean and standard deviation ($\sigma$) of the fraction of jobs served. The MYOPIC policy significantly outperforms the two learning-based static storage policies for all loads considered.}
	
	\label{table:diff_load} 
\end{table}

In Figure \ref{fig:adaptation_cost}, we plot the mean value (with error bars of 3$\times$std. dev.) of the number of external fetches made by the MYOPIC and GENIE storage policies for different values of $n$ and $\beta$ for a load of 0.9 averaged over 10000 iterations. As expected, the GENIE storage policy makes more external fetches than the MYOPIC policy.

\begin{figure}[h]
	\begin{center}
		\includegraphics[scale=0.35]{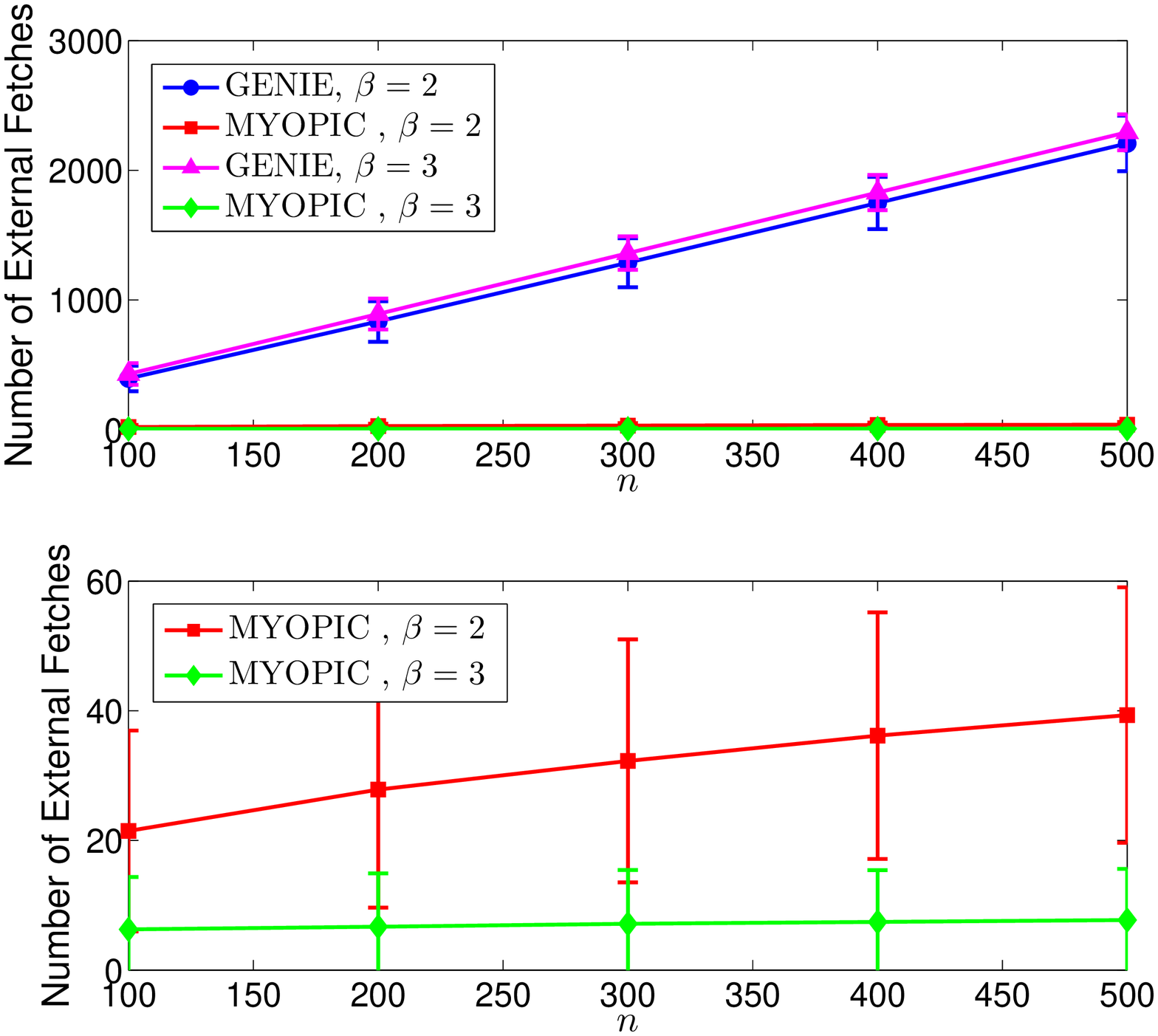}
		\caption{\sl The mean number of external fetches (content fetched from the back-end server to place on a front-end server) by the two adaptive policies as a function of system size $(n)$ for $\bar{\lambda} = 0.9$ and $\beta = 2$ and $3$. The first plot shows the performance of both GENIE and MYOPIC. The second plot focuses only on the performance of the MYOPIC storage policy for clarity. \label{fig:adaptation_cost}}
	\end{center}
\end{figure}

\section{Proofs of Main Results}
\label{proofs}
In this section, we provide the proofs of our results. 
\subsection{Proof of Theorem 1}
\label{sec:proof2}

\noindent We first present an outline of the proof of Theorem \ref{thm:converse}. We consider two cases. We first focus on the case when the learning-based storage policies use fewer than $n$ arrivals to learn the distribution.
\begin{enumerate}
	\item If the learning phase lasts for the first $n^{\gamma}$ arrivals for some $0<\gamma \leq 1$, we show that under Assumption \ref{ass:zipf}, w.h.p., in the learning phase, there are no arrivals for at least $n - \OO(n^{\frac{\gamma}{\beta}})$ content types. (Lemma \ref{lemma:E_1}).
	\item Next, we show that w.h.p., among the first $n^{\gamma}$ arrivals, i.e., during the learning phase, $\Omega(n^{\gamma})$ requests are deferred (Lemma \ref{lemma:E_2}).
	\item Using Lemma \ref{lemma:E_1}, we compute a lower bound on the number of requests deferred in Phase 2 (after the learning phase) by any learning-based static storage policy (Lemma \ref{lemma:phase2_drops}).
	\item Using Steps 2 and 3, we lower bound the number of requests deferred in the interval of interest.
\end{enumerate}

In the case when the learning phase lasts for more than $n$ arrivals, we show that the number of requests deferred in the learning phase alone is $\Omega(n)$, thus proving the theorem for this case.

\begin{lemma}
	\label{lemma:E_1}
	
	Let $E_1$ be the event that in the first $n^{\gamma}$ arrivals, for $0 < \gamma < 1$ no more than $\OO(n^{\frac{\gamma}{\beta}})$ different types of contents are requested. Then,
	\begin{eqnarray}
	\label{eq:E_1}
	\mathbb{P}(E_1^c) = \oo\bigg(\frac{1}{n}\bigg).
	\end{eqnarray}
	for $n$ large enough.
\end{lemma}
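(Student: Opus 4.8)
The plan is to rewrite the number of distinct requested types as a sum of \emph{independent} indicators via Poissonization, and then apply a multiplicative Chernoff bound. Since the arrivals form a superposition of independent Poisson processes with total rate $n\bar\lambda$, the type of each arrival is an i.i.d. draw from $\{p_i\}$, where $p_i = i^{-\beta}/\sum_{j=1}^m j^{-\beta}$. First I would Poissonize: run the arrival process for the fixed time $\tau = 2 n^{\gamma-1}/\bar\lambda$, so that the number of requests for type $i$ is $\mathrm{Poisson}(\lambda_i \tau) = \mathrm{Poisson}(2 n^{\gamma} p_i)$, and, crucially, these counts are independent across $i$. Writing $X_i = \mathbf{1}\{\text{type } i \text{ requested in } [0,\tau]\}$, the number of distinct types seen in $[0,\tau]$ is $N_\tau = \sum_{i=1}^m X_i$, a sum of independent $\mathrm{Bernoulli}(q_i)$ variables with $q_i = 1 - e^{-2 n^{\gamma} p_i}$.

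Next I would bound the mean $\EE[N_\tau] = \sum_i (1 - e^{-2 n^{\gamma} p_i})$. Because $\beta > 1$, the normalizer $\sum_{j=1}^m j^{-\beta}$ stays bounded (it converges to $\zeta(\beta)$), so $p_i = \Theta(i^{-\beta})$. Splitting the sum at the index $i^\star \asymp n^{\gamma/\beta}$ where $n^{\gamma} p_i \asymp 1$: the head $i \le i^\star$ contributes at most $i^\star = \OO(n^{\gamma/\beta})$ terms, each $\le 1$, while for the tail $i > i^\star$ I use $1 - e^{-x} \le x$ together with the integral estimate $\sum_{i > i^\star} i^{-\beta} = \OO\big((i^\star)^{1-\beta}\big)$, giving a tail contribution $\OO\big(n^{\gamma}(n^{\gamma/\beta})^{1-\beta}\big) = \OO(n^{\gamma/\beta})$. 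Hence $\EE[N_\tau] \le C\, n^{\gamma/\beta}$ for some constant $C$.

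For the concentration I would invoke the multiplicative Chernoff bound for independent Bernoullis, $\PP(N_\tau \ge t) \le (e\,\EE[N_\tau]/t)^{t}$. Choosing $t = C' n^{\gamma/\beta}$ with $C' > eC$ makes the base strictly less than $1$ and yields $\PP(N_\tau \ge C' n^{\gamma/\beta}) \le \exp(-c\, n^{\gamma/\beta})$, which decays faster than any polynomial and is therefore $\oo(1/n)$. To pass back to the \emph{first} $n^{\gamma}$ arrivals, I would use monotonicity of the distinct count in the number of observed arrivals: on the event that at least $n^{\gamma}$ arrivals occur by time $\tau$, the distinct count among the first $n^{\gamma}$ arrivals is at most $N_\tau$. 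Since the total number of arrivals in $[0,\tau]$ is $\mathrm{Poisson}(2 n^{\gamma})$, a Poisson lower-tail bound gives $\PP(\text{fewer than } n^{\gamma} \text{ arrivals}) \le \exp(-c' n^{\gamma}) = \oo(1/n)$, and a union bound over these two bad events finishes the proof.

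The main subtlety, and the reason I reach for Poissonization rather than a direct bounded-differences argument on the fixed-size sample, is uniformity over all $\beta > 1$: McDiarmid's inequality applied to the $n^{\gamma}$ draws gives only a tail of order $\exp\big(-\Theta(n^{\gamma(2-\beta)/\beta})\big)$, which fails to beat $1/n$ once $\beta \ge 2$. Poissonization restores exact independence of the per-type indicators, so the Chernoff rate becomes the much stronger $\exp\big(-\Theta(n^{\gamma/\beta})\big)$ for \emph{every} $\beta > 1$. The only care needed is the de-Poissonization coupling above and keeping the $\Theta$-constants in the head/tail split explicit, so that the choice $C' > eC$ is justified.
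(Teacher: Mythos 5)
Your proof is correct, but it takes a genuinely different route from the paper's. The paper never needs independence across content types: it works directly with the first $n^{\gamma}$ arrivals, whose types are i.i.d.\ draws from $\{p_i\}$, and splits at $k = n^{\gamma/\beta}+1$. The head contributes at most $n^{\gamma/\beta}$ distinct types trivially, and --- this is the step that replaces your Poissonization --- the number of distinct \emph{tail} types is bounded by the number of tail \emph{requests}, which is Binomial$(n^{\gamma}, q)$ with $q = \OO\big(n^{-\gamma(\beta-1)/\beta}\big)$ and hence mean $\OO(n^{\gamma/\beta})$; an ordinary Chernoff bound on this binomial (a sum of i.i.d.\ indicators across arrivals, not across types) yields the stated $1/n^2$ failure probability. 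Your diagnosis that a bounded-differences bound on the distinct count fails for $\beta \geq 2$ is accurate, but the paper sidesteps that obstruction without Poissonizing: the crude inequality ``distinct tail types $\leq$ tail requests'' converts the occupancy problem into a binomial tail whose exponent $\exp\big(-\Theta(n^{\gamma/\beta})\big)$ matches what your route achieves. What your approach buys is concentration of the occupancy count $N_{\tau}$ itself, via exact per-type independence of the Bernoulli indicators $1-e^{-2n^{\gamma}p_i}$, at the cost of the de-Poissonization coupling and the Poisson lower-tail estimate on the number of arrivals in $[0,\tau]$; the paper's argument is shorter, stays in the fixed-sample model, and controls the distinct count only through the head/tail surrogate. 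Both arguments implicitly use $\beta > 1$ (you via convergence of the normalizer to $\zeta(\beta)$, the paper via its integral bounds on $Z(\beta)$ and the tail mass), consistent with the regime of Theorem~\ref{thm:converse}, and both deliver $\oo(1/n)$ with large margin.
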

\begin{proof}
	Recall $\lambda_i = \bar{\lambda}n p_i$ where $p_i = \frac{i^{-\beta}}{Z(\beta)}$ for $Z(\beta) = \sum_{i=1}^m i^{-\beta}$.
	\begin{eqnarray*}
		Z(\beta) = \sum_{i=1}^{\alpha n} i^{-\beta} &\geq& \int_{1}^{\alpha n+1} i^{-\beta} di
		\geq \frac{0.9}{\beta-1}
	\end{eqnarray*}
	for $n$ large enough.
	Therefore, for all $i$,
	\begin{eqnarray*}
		p_i \leq \frac{\beta-1}{0.9} i^{-\beta}.
	\end{eqnarray*}
	The total mass of all content types $i = k, .. m = \alpha n$ is
	\begin{eqnarray*}
		\sum_{i=k}^{\alpha n} p_i \leq \sum_{i=k}^{\alpha n} \frac{\beta-1}{0.9} i^{-\beta}
		\leq \int_{k-1}^{\alpha n} \frac{\beta-1}{0.9} i^{-\beta} di
		\leq \frac{1}{0.9} \dfrac{1}{(k-1)^{\beta-1}}.
	\end{eqnarray*}
	
	Now, for $k = (n)^{\gamma/\beta}$ + 1, we have that,
	\begin{eqnarray*}
		\sum_{i=k}^{\alpha n} p_i  \leq \frac{1}{0.9} \frac{n^{\gamma/\beta}}{n^{\gamma}}.
	\end{eqnarray*}
	Therefore, the expected number of requests for content types $k, k+1, .. \alpha n$ is less than $\frac{1}{0.9} (n^{\gamma/\beta})$. Using the Chernoff bound, the probability that there are more than $\frac{2}{0.9} (n^{\gamma/\beta})$ requests for content types $k, k+1, .. \alpha n$ in the interval of interest is less than $ \frac{1}{n^2}$ for $n$ large enough.
	
	Therefore, with probability greater than $ 1-1/n^2$, the number different types of contents requests for in the interval of interest is less than $ n^{\gamma/\beta} + \frac{2}{0.9} (n^{\gamma/\beta})$. Hence the result follows.
\end{proof}

We use the following concentration result for Exponential random variables.
\begin{lemma}
	\label{lemma:sum_of_exponentials}
	Let $X_k$ for $0\leq k \leq v$, be i.i.d. exponential random variables with mean 1, then,
	\begin{eqnarray}
	\label{eq:exp}
	\mathbb{P}\bigg(\sum_{k=1}^{v} X_i \leq a\bigg) \leq  \exp(v-a)\bigg(\frac{a}{v}\bigg)^v.
	\end{eqnarray}
\end{lemma}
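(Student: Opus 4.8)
The plan is to apply the exponential Chernoff (Cramér) method to the lower tail of $S := \sum_{k=1}^{v} X_k$, which is a sum of $v$ independent mean-one exponentials and is therefore Gamma-distributed with mean $v$. First I would fix any $t>0$ and write
$$\PP(S \le a) = \PP\big(e^{-tS} \ge e^{-ta}\big) \le e^{ta}\,\EE\big[e^{-tS}\big],$$
using that $x \mapsto e^{-tx}$ is strictly decreasing, so the event $\{S \le a\}$ coincides with $\{e^{-tS} \ge e^{-ta}\}$, followed by Markov's inequality applied to the nonnegative random variable $e^{-tS}$.

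Next I would compute the relevant transform. By independence, $\EE[e^{-tS}] = \prod_{k=1}^{v}\EE[e^{-tX_k}]$, and for a mean-one exponential a one-line integral gives $\EE[e^{-tX_k}] = \tfrac{1}{1+t}$ for every $t>0$. Substituting this yields the one-parameter family of bounds
$$\PP(S \le a) \le \frac{e^{ta}}{(1+t)^{v}}, \qquad t>0.$$

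It then remains to optimize over $t$. I would minimize $g(t) := ta - v\log(1+t)$; solving $g'(t) = a - \tfrac{v}{1+t} = 0$ gives the optimizer $t^\star = \tfrac{v-a}{a}$, which is admissible ($t^\star > 0$) exactly in the lower-deviation regime $a < v$, i.e.\ the only regime where the claimed inequality is nontrivial. Plugging $1+t^\star = v/a$ and $t^\star a = v-a$ back in collapses the right-hand side to $e^{v-a}(a/v)^{v}$, which is precisely \eqref{eq:exp}.

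I do not anticipate a genuine obstacle: every step here is elementary and this is essentially the textbook Chernoff bound for a Gamma lower tail. The one point deserving a word of care is admissibility of the optimizer — $t^\star>0$ is equivalent to $a<v$, and for $a \ge v$ the displayed right-hand side no longer dominates the probability, so the statement should be read, and will be invoked in the proof of Theorem~\ref{thm:converse}, in the regime $a<v$.
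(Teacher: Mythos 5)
Your proposal is correct and follows essentially the same route as the paper: the paper's proof is exactly this Chernoff bound with $\EE[e^{-tX_i}] = \frac{1}{1+t}$, stating $\min_{t>0} e^{ta}(1+t)^{-v} = e^{v-a}(a/v)^v$ without showing the optimization. You merely fill in the detail the paper leaves implicit (solving for $t^\star = \frac{v-a}{a}$) and correctly flag the admissibility condition $a < v$, which is indeed the only regime in which the bound is nontrivial and the only one used in the proofs of Lemma~\ref{lemma:E_2} and Theorem~\ref{thm:converse}.
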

\begin{proof}
	This follows from elementary calculations, and is provided here for completeness. For any $a$ and $v$, by the Chernoff bound, we have that,
	\begin{eqnarray*}
		\mathbb{P}\bigg(\sum_{k=1}^{v} X_i \leq a\bigg) \leq \min_{t > 0} e^{ta} (E[e^{-tX_i}])^v.
	\end{eqnarray*}
	Since $X_k$ is an exponential random variable with mean 1, we have that,
	\begin{eqnarray*}
		\mathbb{P}\bigg(\sum_{k=1}^{v} X_i \leq a\bigg) \leq \min_{t > 0} e^{ta} \bigg(\frac{1}{1+t} \bigg)^v = \exp(v-a)\bigg(\frac{a}{v}\bigg)^v.
	\end{eqnarray*}
\end{proof}

\begin{lemma}
	\label{lemma:E_2}
	Suppose the system starts with each content piece stored on exactly one server. Let $E_2$ be the event that in the first $n^{\gamma}$ arrivals for $\gamma$ such that $0 < \gamma < 1$, at most $\OO(n^{\gamma/\beta})(\log n+1) $ are served (not deferred). Then, for $\beta > 1$,
	\begin{eqnarray}
	\label{eq:E_2}
	\mathbb{P}(E_2) \geq 1- \frac{1}{\log n}.
	\end{eqnarray}
\end{lemma}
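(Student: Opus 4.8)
The plan is to exploit the fact that during the learning phase the storage is frozen, so each content-type occupies at most one server and can therefore be served only sequentially by that single server. First I would record the key structural observation: since we start with each content-type on at most one server and this loading is unchanged in Phase~1, a request for content-type $C_i$ is served only if the unique server holding $C_i$ is idle at the arrival instant. Consequently, if $C_i$ is served $k$ times during the learning phase, then at least $k-1$ of its (i.i.d.\ mean-$1$ exponential) service times must have completed within the phase, so the first $k-1$ such service times satisfy $\sum_{j=1}^{k-1} X_j \le \tau$, where $\tau$ denotes the random length of the learning phase. This is precisely the event controlled by Lemma~\ref{lemma:sum_of_exponentials}.

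Next I would control $\tau$. Because the superposition of the per-content Poisson streams has constant total rate $n\bar{\lambda}$, the epoch of the $n^{\gamma}$-th arrival is a sum of $n^{\gamma}$ i.i.d.\ $\mathrm{Exp}(n\bar{\lambda})$ interarrival times with mean $n^{\gamma-1}/\bar{\lambda}$, and a standard Chernoff bound gives $\tau \le \tau_0 := 2 n^{\gamma-1}/\bar{\lambda}$ with probability $1-\oo(1/n)$. On this event, and using that the service times are independent of the arrivals (so independent of $\tau$), I would apply Lemma~\ref{lemma:sum_of_exponentials} with $v = k-1$ and $a = \tau_0$ to obtain, for each content-type, $\mathbb{P}(C_i \text{ served} \ge k,\ \tau \le \tau_0) \le \exp((k-1)-\tau_0)\,(\tau_0/(k-1))^{k-1}$. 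Taking $k = \log n + 2$ and using $\tau_0 = \Theta(n^{\gamma-1})$ with $\gamma < 1$ makes the right-hand side super-polynomially small, since the factor $(n^{\gamma-1})^{\log n} = e^{(\gamma-1)(\log n)^2}$ dominates; a union bound over all $m = \alpha n$ content-types then shows that w.h.p.\ \emph{every} content-type is served at most $\log n + 1$ times.

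Finally I would combine this per-content cap with Lemma~\ref{lemma:E_1}. On the event $E_1$, which holds with probability $1-\oo(1/n)$ and requires $\beta > 1$ so that the Zipf tail is summable, at most $\OO(n^{\gamma/\beta})$ distinct content-types are requested among the first $n^{\gamma}$ arrivals, and only requested content-types can contribute served requests. Multiplying the number of served content-types, $\OO(n^{\gamma/\beta})$, by the uniform per-content cap $\log n + 1$ yields the claimed bound $\OO(n^{\gamma/\beta})(\log n + 1)$ on the total number served. Intersecting the three good events ($E_1$, $\tau \le \tau_0$, and the per-content cap) incurs total failure probability $\oo(1/n)$, comfortably below the required $1/\log n$.

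The main obstacle I anticipate is making the middle step fully rigorous: one must argue that the relevant service durations are genuinely i.i.d.\ $\mathrm{Exp}(1)$ and independent of the arrival process and of which server holds $C_i$ (cleanest via a pre-drawn i.i.d.\ service-time stream per server, used in order), so that Lemma~\ref{lemma:sum_of_exponentials} applies conditionally on $\{\tau \le \tau_0\}$; and one must justify the union bound over the \emph{random} set of requested content-types, which is handled most safely by bounding uniformly over all $m$ content-types rather than over the random subset. The bound on $\tau$ and the invocation of $E_1$ are comparatively routine.
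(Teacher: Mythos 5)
Your proposal is correct and takes essentially the same route as the paper's proof: condition on $E_1$, Chernoff-bound the duration of the first $n^{\gamma}$ arrivals (the paper's event $E_3$, your $\tau \le \tau_0$), invoke Lemma~\ref{lemma:sum_of_exponentials} with $v \approx \log n$ and $a = \Theta(n^{\gamma-1})$ to rule out any single server serving more than $\log n$ requests in the phase, and finish with a union bound. Your only deviations are minor refinements --- union bounding over all $\alpha n$ content-types instead of over the random set of $\OO(n^{\gamma/\beta})$ requested ones, and counting $k-1$ completed services per $k$ assignments --- which merely tighten measurability details that the paper's version glosses over.
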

\begin{proof}
	This proof is conditioned on the event $E_1$ defined in Lemma \ref{lemma:E_1}. Conditioned on $E_1$, in the first $n^{\gamma}$ arrivals, at most $\OO(n^{\gamma/\beta})$ different content types are requested. Therefore, at most $\OO(n^{\gamma/\beta})$ servers can serve requests during the first $n^{\gamma}$ arrivals.
	
	Let $E_3$ be the event that the time taken for the first $n^{\gamma}$ arrivals is less than $\frac{2n^{\gamma}}{\bar{\lambda} n}$. Since the expected time for the first $n^{\gamma}$ arrivals is $\frac{n^{\gamma}}{\bar{\lambda} n}$, by the Chernoff bound, $\mathbb{P}(E_3) \geq 1-\oo(1/n)$. The rest of this proof is conditioned on the event $E_3$.
	
	If the system serves (does not defer) more than $\OO(n^{\gamma/\beta}(\log n+1))$ requests in this interval, at least one server needs to serve more than $\log n$ requests. By substituting $a = cn^{-1+\gamma}$ and $v = \log n$ in Lemma \ref{lemma:sum_of_exponentials}, we have that,
	\begin{eqnarray*}
		\mathbb{P}\bigg(\sum_{k=1}^{\log n} X_k \leq cn^{-1+\gamma}\bigg) &\leq&  \exp(\log n-cn^{-1+\gamma})\\
		&&\times \bigg(\frac{cn^{-1+\gamma}}{\log n}\bigg)^{\log n} = \oo \bigg(\frac{1}{n}\bigg).
	\end{eqnarray*}
	Therefore, the probability that a server serves more than $\log n$ requests in an interval of $\frac{2n^{\gamma}}{\bar{\lambda} n}$ time is $ \oo \big(\frac{1}{n}\big)$. Therefore, using the union bound, the probability that none of these $\OO(n^{\gamma/\beta})$ servers serve more than $\log n$ requests each in $\frac{2n^{\gamma}}{\bar{\lambda} n}$ time is greater than $ 1 - \OO(n^{\gamma/\beta})\oo(\frac{1}{n})$.
	Therefore, we have that,
	\begin{eqnarray*}
		\mathbb{P}(E_2^c) &\leq& \OO(n^{\gamma/\beta}) \oo\bigg(\frac{1}{n}\bigg) + P(E_1^c) + P(E_3^c)\\
		&\leq& \frac{1}{\log n}
	\end{eqnarray*}
	for $n$ large enough.
\end{proof}

\begin{lemma}
	\label{lemma:phase2_drops}
	Let the interval of interest be $T(n)$ such that $T(n) = \Omega(1)$. If the learning phase of the storage policy lasts for the first $n^{\gamma}$ arrivals, $0<\gamma < 1$, the expected number of requests deferred in Phase 2 is $\Omega\big(T(n)n^{1-\gamma}n^{\frac{\gamma}{\beta}}\big)$.
\end{lemma}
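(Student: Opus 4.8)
The plan is to lower-bound the Phase~2 deferrals by counting only requests for content-types that have \emph{zero} copies stored, since every such request is deferred regardless of server occupancy. First I would invoke Lemma~\ref{lemma:E_1}: conditioned on $E_1$ (which holds with probability $1-\oo(1/n)$), at most $\OO(n^{\gamma/\beta})$ distinct content-types are seen during the learning phase, so the unseen set $U$ satisfies $|U| \ge \alpha n - \OO(n^{\gamma/\beta})$. Because there are only $n$ servers, the storage fixed at the start of Phase~2 holds at most $n$ distinct content-types; hence at least $|U| - n \ge (\alpha-1)n - \OO(n^{\gamma/\beta}) = \Omega(n)$ of the unseen types carry \emph{no} copy (recall $\alpha>1$), and every request for such a type is deferred throughout Phase~2.

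The crux is to show these zero-copy unseen types carry a constant fraction of the unseen popularity mass. Here I would use the defining restriction of learning-based static policies: all unseen content-types are treated as equally popular. Since the Phase~2 storage is measurable with respect to the learning-phase observations, and those observations (zero request counts) are identical across all labels in $U$, the policy cannot preferentially store the high-popularity unseen types. Formalizing this through exchangeability of the popularity-to-label assignment among unseen labels, for any (possibly randomized) choice $U_{\mathrm{stored}} \subseteq U$ with $|U_{\mathrm{stored}}| \le n$ one obtains $\mathbb{E}[\mathrm{mass}(U_{\mathrm{stored}})\mid\text{obs}] \le \tfrac{n}{|U|}\,\mathrm{mass}(U)$, so that the conditional zero-copy mass is at least $\big(1-\tfrac{n}{|U|}\big)\mathrm{mass}(U) \ge \tfrac{\alpha-1}{\alpha}(1-\oo(1))\,\mathrm{mass}(U)$, a constant fraction of $\mathrm{mass}(U)=\sum_{i\in U}p_i$.

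It then remains to lower-bound $\mathbb{E}[\mathrm{mass}(U)]$. The learning horizon has duration $\approx n^{\gamma-1}/\bar\lambda$, so any type with index $i \ge 2n^{\gamma/\beta}$ has $\lambda_i\cdot n^{\gamma-1}/\bar\lambda = n^{\gamma} p_i = \OO(1)$ and is therefore unseen with probability bounded away from $0$. Combined with $\sum_{i\ge 2n^{\gamma/\beta}} p_i = \Theta(n^{-\gamma(1-1/\beta)})$, which follows from the same integral estimate used in the proof of Lemma~\ref{lemma:E_1}, this yields $\mathbb{E}[\mathrm{mass}(U)] = \Omega(n^{-\gamma(1-1/\beta)})$. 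By the tower property the expected aggregate arrival rate to zero-copy unseen types is $n\bar\lambda \cdot \Omega(n^{-\gamma(1-1/\beta)}) = \Omega(n^{1-\gamma+\gamma/\beta})$. Since $\gamma<1$, the learning phase occupies time $\oo(1)$, so Phase~2 spans essentially the whole interval of length $\Theta(T(n))$; multiplying the deferral rate by this duration gives an expected deferral count of $\Omega\big(T(n)\,n^{1-\gamma}\,n^{\gamma/\beta}\big)$, as claimed.

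I expect the exchangeability step to be the main obstacle: making rigorous that no learning-based policy can, from the indistinguishable zero-count observations, capture more than the proportional share $\tfrac{n}{|U|}$ of the unseen mass. This is precisely the step separating the present bound from the far weaker ``genie'' bound $\Omega(n^{2-\beta})$ one would get if the policy could identify the most popular unseen types. Because the per-type rates of unseen content are tiny, a busy-server argument is useless and the entire deferral count must come from zero-copy types, so the whole estimate rests on this symmetry.
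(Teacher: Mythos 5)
Your proof is correct, but it takes a genuinely different route from the paper's. The paper splits on how many servers the policy devotes to the unseen types: if fewer than $\epsilon n$ (with $\epsilon \le 1-\bar\lambda/1000$), it argues exactly as you do that the equal-treatment constraint makes the allocation exchangeable over unseen types, so each unseen type is uncovered with a constant probability $\eta$; if more than $\epsilon n$, so that only $f(n)\le \frac{\bar\lambda}{1000} n$ servers store the seen types, it switches to an entirely different capacity argument --- a Good--Turing-style Chernoff bound showing the seen types carry true mass at least $1-\frac{3\log n}{2n^{\gamma}}$, hence $\Omega(nT(n))$ Phase-2 requests target seen types, and Lemma~\ref{lemma:sum_of_exponentials} shows $f(n)$ servers cannot serve even half of them within time $T(n)$. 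You collapse the two cases into one by exploiting $m=\alpha n$ with $\alpha>1$: since at most $n$ of the $|U|\ge \alpha n - n^{\gamma}$ unseen types can be stored at all, the exchangeably chosen stored set captures at most a $n/|U| \le 1/\alpha + \oo(1)$ share of the unseen mass in expectation, leaving a constant fraction uncovered \emph{regardless} of how the policy splits its servers. This is shorter and dispenses with the paper's second case (and its service-time concentration) entirely, but it buys this at the cost of genuinely needing $\alpha>1$: for $m\le n$ a policy could cover every unseen type and your zero-copy set would be empty, whereas the paper's dichotomy never uses slack between $m$ and $n$, and its Case II even yields the stronger $\Omega(nT(n))$ there. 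Two smaller remarks. First, the step you flag as the main obstacle is not an obstacle at all: the exchangeability of the allocation over unseen types is the paper's \emph{definitional} constraint on learning-based policies (``unseen content-types are treated equally''), and the paper invokes it verbatim to bound $\eta$ in its first case. Second, note that only idle servers are reloaded at the end of Phase 1, so servers busy at that instant carry their old content into Phase 2; but a busy server is serving the type it stores, which is by definition a seen type, so the unseen coverage is exactly the policy's exchangeable allocation and your proportional-share bound stands. Your estimate $\mathbb{E}[\mathrm{mass}(U)]=\Omega\big(n^{-\gamma(1-1/\beta)}\big)$ via per-type unseen probabilities also differs mildly from the paper, which conditions on $E_1$ and takes the worst case in which the $\OO(n^{\gamma/\beta})$ seen types are the most popular ones; both give the same tail-sum order.
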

\begin{proof}
	Let $N_2$ be the number of arrivals in Phase 2, then we have that,
	$E[N_2] = T(n)\bar{\lambda} n - n^{\gamma}$.
	
	Let $E_4$ be the event that $N_2 > E[N_2]/2$. Using the Chernoff bound, it can be shown that $P(E_4^c) = \oo(1/n)$.
	
	The rest of this proof is conditioned on $E_1$ defined in Lemma \ref{lemma:E_1} and $E_4$ defined above. We consider the following two cases depending on the number of servers allocated to content types not seen in Phase 1.\\
	
	\noindent Case I: The number of servers allocated to content types not seen in Phase 1 is less than $\epsilon n$ for some $\epsilon \leq 1-\frac{\bar{\lambda}}{1000}$. For $\beta > 1$,
	
	\begin{eqnarray*}
		Z(\beta) = \sum_{i=1}^{ \alpha n} i^{-\beta} \leq \sum_{i=1}^{\infty} i^{-\beta} = c_z < \infty.
	\end{eqnarray*}
	Therefore, for all $i$, $p_i \geq \frac{1}{c_z} i^{-\beta}.$
	The total mass of all content types $k,k+1, .. \alpha n$ is
	\begin{eqnarray*}
		\sum_{i=k}^{\alpha n} p_i &\geq& \sum_{i=k}^{\alpha n} \frac{1}{c_z} i^{-\beta} \geq \int_{k}^{\alpha n+1} \frac{1}{c_z} i^{-\beta} di\\
		&=& \frac{0.9}{c_z (\beta -1)} \dfrac{1}{k^{\beta-1}},
	\end{eqnarray*}
	for $n$ large enough.
	
	Therefore, the expected number of arrivals of types not requested in Phase 1 in Phase 2 is at least $(\frac{T(n) \bar{\lambda} n- n^{\gamma}}{2})\frac{0.9}{c_z (\beta -1)} \frac{n^{\frac{\gamma}{\beta}}}{n^{\gamma}}$.
	
	Let $E_5$ be the event that in Phase 2, there are at least $(\frac{T(n) \bar{\lambda} n- n^{\gamma}}{4})\frac{0.9}{c_z (\beta -1)} \frac{n^{\frac{\gamma}{\beta}}}{n^{\gamma}}$ arrivals of types not requested in Phase 1. Using the Chernoff bound,
	$
	\mathbb{P}(E_5^c) = \oo(1/n).
	$
	
	Conditioned on $E_1$, all but $\OO(n^{{\gamma}/{\beta}})$ content types, are not requested in Phase 1. Recall that all learning-based policies treat all these content types equally and that the total number of servers allocated to store the content types not seen in Phase 1 is less than $ \epsilon n$. Let $\eta$ be the probability that a content is not stored by the storage policy under consideration. Then,
	\begin{eqnarray*}
		\eta \geq 1 - \frac{\epsilon n}{n - \OO(n^{{\gamma}/{\beta}})} \geq 1- \frac{\epsilon}{2},
	\end{eqnarray*}
	for $n$ large enough. \\
	
	Let $E_6 = E_1 \cap E_3 \cap E_4 \cap E_5$ and $D_2$ be the number of requests deferred in Phase 2.
	\begin{eqnarray*}
		E[D_2 | E_6] &\geq& \eta \bigg(\bigg(\frac{T(n) \bar{\lambda} n- n^{\gamma}}{2}\bigg)\frac{0.9}{2 c_z (\beta -1)} \frac{n^{\gamma/\beta}}{n^{\gamma}}\bigg)\\
		&\geq& \bigg(1-\frac{\epsilon}{2}\bigg) \bigg(\frac{T(n) \bar{\lambda} n- n^{\gamma}}{2}\bigg)\frac{0.9}{2 c_z (\beta -1)} \frac{n^{\gamma/\beta}}{n^{\gamma}} \\
		& = & \Omega \big(T(n)n^{1-\gamma}n^{\gamma/\beta}\big).
	\end{eqnarray*}
	Therefore,
	\begin{eqnarray*}
		E[D_2] &\geq& E[D_2 | E_6] \mathbb{P}(E_6)\\
		&\geq& E[D_2 | E_6] \bigg(1- \dfrac{1}{\log n} - \dfrac{3}{n}\bigg) \\
		&=& \Omega \big(T(n)n^{1-\gamma}n^{\gamma/\beta}\big). \\
	\end{eqnarray*}
	
	\noindent Case II: The number of servers allocated to content types not seen in Phase 1 is more than $\epsilon n$ for some $\epsilon > 1-\frac{\bar{\lambda}}{1000}$.
	
	Let $f(n)$ be the number of servers allocated to store all content types that are requested in Phase 1. By our assumption, $f(n) \leq \frac{\bar{\lambda}}{1000} n$.
	
	Let $\mathbf{C}_{1}$ be the set of content types requested in Phase 1. Let $p = \sum_{c \in \mathbf{C}_{1}} p_c $ be the total mass of all content types $c \in \mathbf{C}_{1}$. Let $\hat{p}_c$ be the fraction of requests for content-type $c$ in Phase 1. By the definition of $\mathbf{C}_{1}$, the total empirical mass of all content types $c \in \mathbf{C}_{1}$ is obviously $\hat{p} = \sum_{c \in \mathbf{C}_{1}} \hat{p}_c  = 1$.
	
	Recall that there are $n^{\gamma}$ arrivals in Phase 1. Let $r = n^{\gamma}$. We now use the Chernoff bound to compute a lower bound on the true mass $p$, using a technique similar to that used in \cite{MS00} (Lemma 4). By the Chernoff bound, we know that,
	\begin{eqnarray*}
		\mathbb{P}(\hat{p} > (1+\kappa) p) \leq \exp \bigg(-\frac{pr\kappa^2}{3} \bigg).
	\end{eqnarray*}
	Let $\delta = \exp \bigg(-\dfrac{pr\kappa^2}{3} \bigg)$, then, we have that, with probability greater than $ 1- \delta$,
	\begin{eqnarray*}
		\hat{p} - p > \sqrt{\dfrac{-3 p \log \delta}{r}}.
	\end{eqnarray*}
	Solving for $p$, we get that, with probability greater than $ 1- \delta$,
	$p > 1 - \dfrac{3 \log (1/\delta)}{2r},$
	for $n$ large enough.
	Let $\delta = 1/n$, then we have that, with probability greater than $ 1 - 1/n$,
	$p > 1 - \dfrac{3 \log n}{2 n^{\gamma}}.$
	Conditioned on the event $E_4$, there are at least $\frac{T(n) \bar{\lambda} n -  n^{\gamma}}{2} $ arrivals in Phase 2. The remainder of this proof is conditioned on $E_4$. Let $A_2$ be the number of arrivals of types $c \in \mathbf{C}_{1}$ in phase 2. Let $E_7$ be the event that
	$$A_2 > \dfrac{T(n) \bar{\lambda} n -  n^{\gamma}}{2}\bigg(1 - \dfrac{3 \log n}{2 n^{\gamma}} \bigg).$$ Since the expected number of arrivals of content types $c \in \mathbf{C}_{1}$ in Phase 2 is at least
	$$(T(n) \bar{\lambda} n - n^{\gamma})\bigg(1 - \dfrac{3 \log n}{2 n^{\gamma}} \bigg), $$
	using the Chernoff bound, we can show that
	$
	\mathbb{P}(E_7^c) = \oo(1/n).
	$
	The rest of this proof is conditioned on $E_7$.
	By our assumption, the number of servers which can serve arrivals of types $c \in \mathbf{C}_{1}$ in Phase 2 is $f(n)$. Therefore, if at least $A_2/2$ requests are to be served in Phase 2, the sum of the service times of these $A_2/2$ requests should be less than $T(n) f(n)$ (since the number of servers which can serve these requests is $f(n)$). Let $E_8$ be the event that the sum of $A_2/2$ independent Exponential random variables with mean 1 is less than $ T(n) f(n)$. By substituting $v = A_2/2$ and $a = T(n) f(n)$ in Lemma \ref{lemma:sum_of_exponentials}, we have that,
	\begin{eqnarray*}
		\mathbb{P}(E_8) &\leq& \exp\bigg(\frac{A_2}{2}-T(n)\bigg) \bigg(\dfrac{2T(n)f(n)}{A_2} \bigg)^{\frac{A_2}{2}} \\
		&\leq& \exp\bigg(\frac{A_2}{2}\bigg) \bigg(\dfrac{2T(n)f(n)}{A_2} \bigg)^{\frac{A_2}{2}} = \oo\bigg(\frac{1}{n}\bigg)
	\end{eqnarray*}
	for $n$ large enough. Hence,
	\begin{eqnarray*}
		\mathbb{P}\bigg(D_2 \geq \frac{A_2}{2}\bigg) &\geq& 1 - \mathbb{P}(E_1^c) - o\bigg(\frac{1}{n}\bigg) \\
		\Rightarrow E[D_2] &=& \Omega\big(T(n)n^{1-\gamma}n^{\gamma/\beta}\big).
	\end{eqnarray*}
\end{proof}

\begin{proof} (Proof of Theorem \ref{thm:converse}) \\
	We consider two cases: \\
	Case I: The learning phase lasts for the first $n^\gamma$ arrivals where $0 \leq \gamma < 1$. \\
	Let $D_1$ be the number of requests deferred in Phase 1 and $D$ be total number of requests deferred in the interval of interest. Then, we have that,
	\begin{eqnarray*}
		E[D] = E[D_1] + E[D_2].
	\end{eqnarray*}
	By Lemmas \ref{lemma:E_2} and \ref{lemma:phase2_drops} and since $T(n) = \Omega(1)$, we have that,
	\begin{eqnarray*}
		E[D] &\geq& n^\gamma - ({n^{\gamma}\log n})^{\frac{1}{\beta-1}}\log n + E[D_2] \\
		&=& \Omega(nT(n))^{\frac{1}{2-1/\beta}}.
	\end{eqnarray*}
	Case II: The learning phase lasts for longer than the time taken for the first $n$ arrivals.\\
	By Lemma \ref{lemma:E_2}, the number of requests deferred in the first $n$ arrivals is at least $n - \OO(n^{1/\beta}\log n)$ with probability greater than $ 1-1/\log n$. Therefore, we have that,
	\begin{eqnarray*}
		E[D] &\geq& \bigg(n - \OO(n^{1/\beta}\log n)\bigg) \bigg(1-\frac{1}{\log n}\bigg) = \Omega(n) \\
		&=& \Omega(nT(n))^{\frac{1}{2-1/\beta}}.
	\end{eqnarray*}
\end{proof}

\subsection{Proof of Theorem 2}
In this section, we provide an outline of the proof of Theorem \ref{thm:converse_2}. The proof follows on the same lines as the proof of Theorem \ref{thm:converse}.
\begin{enumerate}
	\item First, we show that w.h.p., among the first $n^{\gamma}$ arrivals, i.e., during the learning phase, $\Omega(n^{\gamma})$ requests are deferred (Lemma \ref{lemma:E_2}).
	\item Since we are studying the performance of the MYOPIC policy for the Continuous Change Model, the relative order of popularity of contents keeps changing in the interval of interest. If the learning phase lasts for the first $n^{\gamma}$ arrivals for some $0<\gamma \leq 1$, we show that under Assumption \ref{ass:zipf}, w.h.p., in the learning phase, only $\OO(n^{\gamma/\beta})$ content types are requested.
	\item Next, we show that the expected the number of requests in Phase 2 for content types not requested in Phase 1 is $\Omega(n^{1-\gamma}n^{\gamma/\beta})$. Using this, we compute a lower bound on the number of requests deferred in Phase 2 (after the learning phase) by any learning-based static storage policy. This results follows by the same arguments as the proof of Lemma \ref{lemma:phase2_drops}.
	\item Using Steps 1 and 3, we lower bound the number of requests deferred in the interval of interest.
\end{enumerate}

\subsection{Proof of Theorem 3}
\label{sec:proof3}

\noindent We first present an outline the proof of Theorem \ref{thm:MYOPIC_static_arrival_rates}.

\begin{enumerate}
	\item We first show that under Assumption \ref{ass:zipf}, on every arrival in the interval of interest ($T(n)$), there are $\Theta(n)$ idle servers w.h.p. (Lemma \ref{lemma:F_1}).
	\item Next, we show that w.h.p., in the interval of interest of length $T(n)$, only $\OO\big((nT(n)\big)^{\frac{1}{\beta}})$ unique content types are requested (Lemma \ref{lemma:types_of_arrivals}).
	\item Conditioned on Steps 1 and 2, we show that, the MYOPIC policy ensures that in the interval of interest, once a content type is requested for the first time, there is always at least one idle server which can serve an incoming request for that content.
	\item Using Step 3, we conclude that, in the interval of interest, only the first request for a particular content type will be deferred. The proof of Theorem \ref{thm:MYOPIC_static_arrival_rates} then follows from Step 2.
\end{enumerate}

\begin{lemma}
	\label{lemma:occupancy}
	Let the cumulative arrival process to the content delivery system be a Poisson process with rate $\bar{\lambda} n$. At time $t$, let $\chi(t)$ be the number of occupied servers under the MYOPIC storage policy. Then, we have that, $\chi(t) \leq_{st} S(t)$, where $S(t)$ is a poisson random variable with rate $\bar{\lambda} n (1-e^{-t})$.
\end{lemma}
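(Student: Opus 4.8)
The plan is to dominate the MYOPIC system by an $M/M/\infty$ queue and then invoke the classical transient law of the infinite-server queue. The key observation is that under MYOPIC every request either is served, in which case it occupies exactly one server for the duration of its service time, or is deferred, in which case it occupies no server at all. Thus the occupied servers are generated by a \emph{subset} of the arrivals, whereas in a hypothetical system with infinitely many servers every arrival would be accepted. Since the service times are i.i.d.\ $\mathrm{Exp}(1)$ in both systems, the infinite-server system should have at least as many busy servers at every instant, and its transient occupancy is exactly Poisson with the stated mean.

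Concretely, first I would build a coupling on a common probability space: attach to the $k$-th arrival its arrival epoch $a_k$ (from the common Poisson process of rate $\bar{\lambda} n$) together with an independent service time $\sigma_k \sim \mathrm{Exp}(1)$. Define the auxiliary process $N(t) = |\{k : a_k \leq t < a_k + \sigma_k\}|$, which is precisely the number in system of an $M/M/\infty$ queue fed by this arrival stream with these service times, with no requests in service at $t=0$. In the MYOPIC system a request occupies a server on $[a_k, a_k + \sigma_k)$ exactly when it is served and is never counted when deferred, so $\chi(t) = |\{k \text{ served} : a_k \le t < a_k + \sigma_k\}|$. Because the served requests form a subset of all arrivals, this yields the pathwise bound $\chi(t) \le N(t)$ for every $t \ge 0$.

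Second, I would compute the law of $N(t)$. By the Poisson thinning (marking) theorem, an arrival occurring at time $s \le t$ is still present at time $t$ with probability $\PP(\sigma > t-s) = e^{-(t-s)}$, independently across arrivals, so $N(t)$ is Poisson with mean $\int_0^t \bar{\lambda} n\, e^{-(t-s)}\, ds = \bar{\lambda} n (1 - e^{-t})$; that is, $N(t) \stackrel{d}{=} S(t)$. Combining this with the pathwise domination gives $\PP(\chi(t) > x) \le \PP(N(t) > x) = \PP(S(t) > x)$ for all $x \ge 0$, which is exactly the claimed dominance $\chi(t) \leq_{st} S(t)$.

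The only point requiring care, and the mild obstacle, is justifying the coupling rigorously: one must check that MYOPIC's serve/defer decisions can be realized measurably with respect to the common arrival-and-service data without altering the marginal service-time law. This holds because service is non-preemptive and the $\sigma_k$ are i.i.d.\ and independent of the assignment decisions, so a request that MYOPIC chooses to serve can be assigned the very same $\sigma_k$ it carries in the $M/M/\infty$ system. Everything else is the standard transient analysis of the $M/M/\infty$ queue, so no heavy computation is needed.
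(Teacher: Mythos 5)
Your proposal is correct, and it shares the paper's overall strategy --- dominate the MYOPIC occupancy by an $M/M/\infty$ queue and use the transient Poisson law of the infinite-server system, which you compute by the same thinning argument the paper uses (the paper conditions on the arrival count and integrates $e^{-(t-t_0)}$ against the uniform arrival time, which is equivalent to your marking-theorem computation of the mean $\bar{\lambda} n (1-e^{-t})$). Where you genuinely diverge is in how the dominance $\chi(t) \leq_{st} S(t)$ itself is justified. The paper invokes ``a coupled construction similar to Figure \ref{fig:coupling}'' --- the Bernoulli-marked, rate-matching Markovian coupling built for the GENIE optimality proof --- and then gives only the intuition that deferred requests make the content system lag the infinite-server system. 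You instead build a direct pathwise coupling: put the arrival epochs $a_k$ and i.i.d.\ $\mathrm{Exp}(1)$ service times $\sigma_k$ on a common probability space, note that under MYOPIC each served request occupies exactly one server on $[a_k, a_k+\sigma_k)$ while deferred requests occupy none, and conclude $\chi(t) \le N(t)$ deterministically because the served requests are a subset of all arrivals. This is simpler and arguably more complete than the paper's sketch: it yields a sample-path inequality without any transition-rate bookkeeping, and you correctly flag and resolve the one real subtlety, namely that MYOPIC's serve/defer decision at $a_k$ is measurable with respect to the history generated by $(a_j)_{j\le k}$ and $(\sigma_j)_{j<k}$ (it never peeks at $\sigma_k$), so reusing the same $\sigma_k$ in both systems preserves the marginal laws. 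The paper's rate-based coupling buys uniformity with the machinery already developed for Theorem \ref{thm:stochastic_dominance}; your subset coupling buys a shorter, self-contained, and fully rigorous proof of this particular lemma.
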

\begin{proof}
	Consider an $M/M/\infty$ queue where the arrival process is Poisson($\bar{\lambda} n$). Let $S(t)$ be the number of occupied servers at time $t$ in this system. It is well known that $S(t)$ is a Poisson random variable with rate $\bar{\lambda} n (1-e^{-t})$. Here we provide a proof of this result for completeness. Consider a request $r^*$ which arrived into the system at time $t_0 < t$. If the request is still being served by a server, we have that,
	\begin{eqnarray*}
		t_0 + \mu(r^*) > t,
	\end{eqnarray*}
	where $\mu(r^*)$ is the service time of request $r^*$. Since $\mu(r^*) \sim$ Exp(1), we have that,
	\begin{eqnarray*}
		\mathbb{P}(\mu(r^*) > t - t_0 | t_0) = e^{-(t - t_0)}.
	\end{eqnarray*}
	Therefore,
	\begin{eqnarray*}
		\mathbb{P}(r^* \text{ in the system at time $t$}) &\leq& \int_{0}^t \frac{1}{t} e^{-(t - t_0)} dt_0 \\
		&=& \frac{1-e^{-t}}{t}.
	\end{eqnarray*}
	Therefore, every request that arrived in the system is still in the system with probability at most $\frac{1-e^{-t}}{t}$. Since the arrival process is Poisson, the number of requests in the system at time $t$ is stochastically dominated by a Poisson random variable with rate $\bar{\lambda} n t \big(\frac{1-e^{-t}}{t}\big) = \bar{\lambda} n (1-e^{-t})$.
	
	To show $\chi(t) \leq_{st} S(t)$, we use a coupled construction similar to Figure \ref{fig:coupling}. The intuition behind the proof is the following: the rate of arrivals to the content delivery system and the $M/M/\infty$ system (where each server can serve all types of requests) is the same. The content delivery system serves fewer requests than the $M/M/\infty$ system because some requests are deferred even when the servers are idle. Hence, the number of busy servers is the content delivery system is stochastically dominated by the number of busy servers in the $M/M/\infty$ queueing system.
\end{proof}

\begin{lemma}
	\label{lemma:F_1}
	Let the interval of interest be $[t_0, t_0 + T(n)]$ where $T(n) = \oo(n^{\beta-1})$ and $\varepsilon \leq \frac{1-\bar{\lambda}}{2}$. Let $F_1$ be the event that at the instant of each arrival in the interval of interest, the number of idle servers in the system is at least $\big(1-\bar{\lambda}-\varepsilon \big) n$. Then,
	$
	\mathbb{P}(F_1^c) = \oo\big(\frac{1}{n}\big).
	$
\end{lemma}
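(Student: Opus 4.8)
The plan is to control the number of busy servers uniformly over the arrival instants in $[t_0, t_0+T(n)]$, since the number of idle servers equals $n$ minus the occupancy; thus $F_1$ is exactly the event that the occupancy never exceeds $(\bar{\lambda}+\varepsilon)n$ at an arrival. First I would invoke Lemma~\ref{lemma:occupancy}: at every fixed time $t$ the occupancy $\chi(t)$ is stochastically dominated by a Poisson random variable of mean $\bar{\lambda}n(1-e^{-t})\le\bar{\lambda}n$. Because $\varepsilon\le\frac{1-\bar{\lambda}}{2}$ we have $\bar{\lambda}+\varepsilon<1$, so the threshold $(\bar{\lambda}+\varepsilon)n$ sits above the mean $\bar{\lambda}n$ by a positive constant gap; a standard Chernoff bound for the Poisson law then yields a fixed-time estimate
$$\mathbb{P}\big(\chi(t)>(\bar{\lambda}+\varepsilon)n\big)\le e^{-c(\varepsilon)\,n}$$
for a constant $c(\varepsilon)>0$ depending only on $\bar{\lambda}$ and $\varepsilon$, uniformly in $t$.

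The remaining task is to promote this fixed-time bound to one that holds simultaneously at all arrival epochs, of which there are Poisson-many. The key observation is that, since the arrival process is Poisson and independent of the service times, the occupancy seen just before an arrival has the same law as the occupancy at a deterministic time: conditioning on an arrival at time $t$ does not alter the distribution of the arrivals strictly before $t$, and hence of $\chi(t^-)$. A Campbell-formula (PASTA-type) computation therefore gives that the expected number of arrivals in the interval that see more than $(\bar{\lambda}+\varepsilon)n$ busy servers equals
$$\int_{t_0}^{t_0+T(n)}\bar{\lambda}n\,\mathbb{P}\big(\chi(s^-)>(\bar{\lambda}+\varepsilon)n\big)\,ds\;\le\;\bar{\lambda}\,n\,T(n)\,e^{-c(\varepsilon)\,n}.$$

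Finally I would apply Markov's inequality: $F_1^c$ is precisely the event that at least one arrival sees more than $(\bar{\lambda}+\varepsilon)n$ occupied servers, so $\mathbb{P}(F_1^c)$ is at most the displayed expectation $\bar{\lambda}\,n\,T(n)\,e^{-c(\varepsilon)\,n}$. The hypothesis $T(n)=\oo(n^{\beta-1})$ makes $n\,T(n)=\oo(n^{\beta})$ grow only polynomially, while the Chernoff factor decays exponentially in $n$, so the product is $\oo(1/n)$, as claimed. The main obstacle is exactly this second step: converting the per-time tail estimate into a uniform statement over the random set of arrival epochs. A naive union bound would require first controlling the arrival count and then justifying the fixed-time domination at random times; I would instead use the Campbell-formula identity above, which integrates the intensity $\bar{\lambda}n$ against the fixed-time tail and never conditions on a high-probability ``few arrivals'' event. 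An equivalent elementary route is to bound the arrival count by $2\bar{\lambda}nT(n)$ with probability $1-\oo(1/n)$ via a Chernoff bound on the Poisson arrival count, and then union-bound the fixed-time estimate over these polynomially-many epochs, invoking the independence of a fresh arrival from the past occupancy to license the Poisson domination at each epoch.
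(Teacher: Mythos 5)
Your proof is correct, and its skeleton matches the paper's: both start from Lemma~\ref{lemma:occupancy} to dominate the occupancy $\chi(t)$ at any fixed time by a Poisson$(\bar{\lambda}n)$ variable and apply a Poisson Chernoff bound (valid since $\bar{\lambda}+\varepsilon<1$) to get a uniform fixed-time tail $e^{-c(\varepsilon)n}$. Where you genuinely diverge is the step from fixed times to the random set of arrival epochs. The paper takes the elementary route you mention only as a fallback: it introduces the event $F_2$ that the number of arrivals in $[t_0,t_0+T(n)]$ is at most $(\bar{\lambda}+\varepsilon)nT(n)$, bounds $\mathbb{P}(F_2^c)=\oo(1/n)$ by a Chernoff bound on the Poisson count, and then union-bounds, obtaining $\mathbb{P}(F_1^c)\leq \mathbb{P}(F_2^c)+(\bar{\lambda}+\varepsilon)nT(n)\,e^{-c_1 n}=\oo(1/n)$. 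Your primary route instead uses the Campbell/Mecke identity (PASTA): the expected number of arrivals that see occupancy above $(\bar{\lambda}+\varepsilon)n$ is the integral of the intensity $\bar{\lambda}n$ against the fixed-time tail, bounded by $\bar{\lambda}nT(n)e^{-c(\varepsilon)n}$, after which Markov's inequality finishes. This buys you something real: the paper's union bound silently applies a deterministic-time tail estimate at the random arrival instants, and the justification --- that a fresh Poisson arrival is independent of the occupancy just before it, so $\chi(s^-)$ under the Palm distribution has the law of the unconditioned occupancy --- is exactly what your Mecke computation makes explicit; the paper leaves it implicit. Both arguments yield the same quantitative conclusion, a polynomial factor $nT(n)=\oo(n^{\beta})$ times an exponentially small tail, hence $\oo(1/n)$, and your closing ``elementary route'' paragraph is essentially verbatim the paper's proof.
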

\begin{proof}
	Let $F_2$ be the event that the number of arrivals in $[t_0, t_0 + T(n)] \leq nT(n)(\bar{\lambda}+\varepsilon)$. Using the Chernoff bound for the  Poisson process, we have that, \begin{eqnarray*}
		\mathbb{P}(F_2^c) = o\bigg(\frac{1}{n}\bigg).
	\end{eqnarray*}
	Consider any $t \in [t_0, t_0 + T(n)]$. By Lemma \ref{lemma:occupancy}, $\chi(t) \leq_{st} S(t) $, where $S(t) \sim$ Poisson($\bar{\lambda} n (1-e^{-t})$). Therefore,
	\begin{eqnarray*}
		\mathbb{P}(\chi(t)>(\bar{\lambda}+\varepsilon)n) \leq \mathbb{P}(S(t)>(\bar{\lambda}+\varepsilon)n).
	\end{eqnarray*}
	Moreover, $S(t) \leq_{st} W(t)$ where $W(t) =$ Poisson($\bar{\lambda} n$). Therefore, using the Chernoff bound for $W(t)$, we have that,
	\begin{eqnarray*}
		\mathbb{P}(S(t)>(\bar{\lambda}+\varepsilon)n) \leq \mathbb{P}(W(t)>(\bar{\lambda}+\varepsilon)n) = e^{-c_1n},
	\end{eqnarray*}
	for some constant $c_1 > 0$. Therefore,
	\begin{eqnarray*}
		\mathbb{P}(F_1^c) &\leq& \mathbb{P}(F_2^c) + (\bar{\lambda}+\varepsilon)nT(n) \mathbb{P}(\chi(t)>(\bar{\lambda}+\varepsilon)n)\\
		&=& o\bigg(\frac{1}{n}\bigg).
	\end{eqnarray*}
\end{proof}

\begin{lemma}
	\label{lemma:types_of_arrivals}
	Let $F_3$ be the event that in the interval of interest of duration $T(n)$ such that $T(n) = \oo(n^{\beta-1})$, no more than $\OO((nT(n))^{1/\beta})$ different types of contents are requested. Then,
	$
	\mathbb{P}(F_3^c) = \oo\big(\frac{1}{n}\big).
	$
\end{lemma}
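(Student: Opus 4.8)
The plan is to mirror the argument of Lemma~\ref{lemma:E_1}, replacing the fixed arrival budget $n^{\gamma}$ there by the (random, Poisson) number of arrivals generated in an interval of length $T(n)$. The key device is to split the content-types into a \emph{head} consisting of the $k$ most popular types and a \emph{tail} consisting of the rest, with $k = (nT(n))^{1/\beta}$. The head contributes at most $k = \OO((nT(n))^{1/\beta})$ distinct types for free, so the whole task reduces to bounding how many distinct \emph{tail} types are requested in the interval.

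First I would reuse the Zipf tail estimate already established inside the proof of Lemma~\ref{lemma:E_1}: for $\beta>1$ and $n$ large, $p_i \le \frac{\beta-1}{0.9} i^{-\beta}$ and hence $\sum_{i=k}^{\alpha n} p_i \le \frac{1}{0.9}(k-1)^{-(\beta-1)}$. Because each content-type $i$ arrives as an independent Poisson process of rate $\lambda_i = \bar{\lambda} n p_i$, the superposition of the tail processes is itself Poisson, so the total number $N_{\mathrm{tail}}$ of arrivals of types with index $\ge k$ in the interval is a Poisson random variable with mean $\mu_{\mathrm{tail}} = \bar{\lambda}\, n\, T(n) \sum_{i\ge k} p_i$. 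Note the head/tail split is well-defined: $k=(nT(n))^{1/\beta}\le \alpha n$ for $n$ large precisely because $T(n)=\oo(n^{\beta-1})$.

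Plugging in the tail-mass bound with $k=(nT(n))^{1/\beta}$ gives $\mu_{\mathrm{tail}} \le \frac{\bar{\lambda}}{0.9}\, n\, T(n)\, (k-1)^{-(\beta-1)} = \OO\big((nT(n))^{1/\beta}\big)$, since $nT(n)\cdot(nT(n))^{-(\beta-1)/\beta} = (nT(n))^{1/\beta}$. The number of \emph{distinct} tail types requested is at most $N_{\mathrm{tail}}$, so the total number of distinct types requested in the interval is at most $k + N_{\mathrm{tail}}$. Applying the Chernoff bound for a Poisson variable, $\mathbb{P}(N_{\mathrm{tail}} > 2\mu_{\mathrm{tail}}) \le e^{-\mu_{\mathrm{tail}}/3}$, I conclude that on the complementary event the number of distinct types is at most $k + 2\mu_{\mathrm{tail}} = \OO((nT(n))^{1/\beta})$, which is exactly the event $F_3$.

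The single point requiring care, and the main (minor) obstacle, is confirming that this failure probability is genuinely $\oo(1/n)$. This needs the Poisson mean to satisfy $\mu_{\mathrm{tail}} = \omega(\log n)$ so that $e^{-\mu_{\mathrm{tail}}/3} = \oo(1/n)$. Since $\mu_{\mathrm{tail}} = \Theta((nT(n))^{1/\beta}) \ge \Theta(n^{1/\beta})$ in the regime of interest ($T(n)=\Omega(1)$), it grows polynomially in $n$ and therefore dominates $\log n$, so the bound holds. Everything else is the same Zipf tail estimate and Poisson concentration already deployed in Lemmas~\ref{lemma:E_1} and~\ref{lemma:F_1}, so no new machinery is needed.
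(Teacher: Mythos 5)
Your proposal is correct and follows essentially the same route as the paper's proof: the same head/tail split at $k = (nT(n))^{1/\beta}$, the same Zipf tail-mass bound $\sum_{i\ge k} p_i \le \frac{1}{0.9}(k-1)^{-(\beta-1)}$ from Lemma~\ref{lemma:E_1}, and a Chernoff bound showing the number of tail requests is $\OO((nT(n))^{1/\beta})$ with failure probability $\oo(1/n)$. The only (cosmetic) difference is that you invoke Poisson superposition to make $N_{\mathrm{tail}}$ exactly Poisson, whereas the paper first conditions on the event $F_2$ bounding the total number of arrivals; your flagged requirement that $(nT(n))^{1/\beta} = \omega(\log n)$ is likewise implicit in the paper's own Chernoff step, so the two arguments match in substance.
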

\begin{proof}
	Recall from the proof of Lemma \ref{lemma:E_1} that the total mass of all content types $k, .. m=\alpha n$ is
	\begin{eqnarray*}
		\sum_{i=k}^{\alpha n} p_i &\leq& \frac{1}{0.9} \dfrac{1}{(k-1)^{\beta-1}}.
	\end{eqnarray*}
	Now, for $k = (nT(n))^{1/\beta}$ + 1, we have that,
	\begin{eqnarray*}
		\sum_{i=k}^{\alpha n} p_i  \leq \frac{1}{0.9} (nT(n))^{-\frac{\beta-1}{\beta}}.
	\end{eqnarray*}
	Conditioned on the event $F_2$ defined in Lemma \ref{lemma:F_1}, the expected number of requests for content types $k, k+1, .. \alpha n$ is less than $\frac{1}{0.9} (\bar{\lambda} + \varepsilon) (nT(n))^{1/\beta}$. Using the Chernoff bound, the probability that there are more than $\frac{2}{0.9} (\bar{\lambda} + \varepsilon) (nT(n))^{1/\beta}$ requests for content types $k, k+1, .. \alpha n$ in the interval of interest is less than $ \frac{1}{n^2}$ for $n$ large enough.
	
	Therefore, with probability greater than $ 1-1/n^2 - \mathbb{P}(F_2^c)$, the number different types of contents requests for in the interval of interest is less than $ (nT(n))^{1/\beta} + \frac{2}{0.9} (\bar{\lambda} + \varepsilon) (nT(n))^{1/\beta}$. Hence the result follows.
\end{proof}

\begin{proof} (Proof of Theorem \ref{thm:MYOPIC_static_arrival_rates})\\
	Let $F_4$ be the event that, in the interval of interest, every request for a particular content type except the first request is not deferred.
	The rest of this proof is conditioned on $F_1$ and $F_3$. Let $U(t)$ be the number of unique contents which have been requested in the interval of interest before time $t$ for $t \in [t_0, t_0 + T(n) ]$. Conditioned on $F_3$, as defined in Lemma~\ref{lemma:types_of_arrivals}, $U(t) \leq k_1 (nT(n))^{1/\beta}$ for some constant $k_1>0$ and $n$ large enough. Conditioned on $F_1$, there are always $(1- \bar{\lambda} - \varepsilon) n$ idle servers in the interval of interest. \\
	\newline CLAIM: For every $i$ and $n$ large enough, once a content $C_i$ is requested for the first time in the interval of interest, the MYOPIC policy ensures that there is always at least 1 idle server which can serve a request for $C_i$.\\
	\newline Note that since $T(n) = \oo(n^{\beta-1})$, $(nT(n))^{1/\beta} = \oo(n)$. Let $n$ be large enough such that $k_1 (nT(n))^{1/\beta} < (1- \bar{\lambda} - \varepsilon) n$, i.e., at any time $t \in [t_0, t_0 + T(n)]$, the number of idle servers is greater than $ U(t)$. We prove the claim by induction. Let the claim hold for time $t^-$ and let there be a request at time $t$ for content $C_i$. If this is not the first request for $C_i$ in $[t_0, t_0 + T(n)]$, by the claim, at $t = t^-$, there is at least one idle server which can serve this request. In addition, if there is exactly one server which can serve $C_i$ at $t^-$, then the MYOPIC policy replaces the content of some other idle server with $C_i$. Since there are more than $k_1 (nT(n))^{1/\beta}$ idle servers and $U(t) < k_1 (nT(n))^{1/\beta}$, at $t^+$, each content type requested in the interval of interest so far, is stored on at least one currently idle server. Therefore, conditioned on $F_1$ and $F_3$, every request for a particular content type except the first request, is not deferred. \\
	\newline Hence, putting everything together,
	\begin{eqnarray*}
		\mathbb{P}(F_4) \geq 1 - \mathbb{P}(F_1^c) - \mathbb{P}(F_3^c),
	\end{eqnarray*}
	thus $\mathbb{P}(F_4) \rightarrow 1$ as $n \rightarrow \infty$ and the result follows.
\end{proof}


\subsection{Proof of Theorem \ref{thm:MYOPIC_changing_arrival_rates}}
\label{sec:proof4}

\noindent We first present an outline of the proof of Theorem \ref{thm:MYOPIC_changing_arrival_rates}. 

\begin{enumerate}
	\item Since we are studying the performance of the MYOPIC policy for the Continuous Change Model, the relative order of popularity of contents keeps changing in the interval of interest. We show that w.h.p., the number of content types which are in the $n^{1/\beta}$ most popular content types at least once in the interval of interest is $\OO(n^{1/\beta})$ (Lemma \ref{lemma:change_in_heavy_hitters}).
	\item Next, we show that w.h.p., in the interval of interest of length $b$, only $\OO(n^{1/\beta})$ content types are requested (Lemma \ref{lemma:types_of_arrivals_changing_arrival_rates}).
	\item By Lemma \ref{lemma:F_1} and the proof of Theorem \ref{thm:MYOPIC_static_arrival_rates}, we know that, conditioned on Step 3, the MYOPIC storage policy ensures that in the interval of interest, once a content type is requested for the first time, there is always at least one idle server which can serve an incoming request for that content. Using this, we conclude that, in the interval of interest, only the first request for a particular content type will be deferred. The proof of Theorem \ref{thm:MYOPIC_changing_arrival_rates} then follows from Step 2.
\end{enumerate}


\begin{lemma}
	\label{lemma:change_in_heavy_hitters}
	Let $G_1$ be the event that, in the interval of interest of length $b$, the number of times that a content among the current top $n^{1/\beta}$ most popular contents changes its position in the popularity ranking is at most $\frac{4b}{\alpha} n^{1/\beta} \nu$. Then, $
	P(G_1) \geq 1-o\big( \frac{1}{n} \big).$
\end{lemma}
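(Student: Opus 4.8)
The plan is to view the quantity controlled by $G_1$ as a counting process driven entirely by the popularity-swap clocks of the Continuous Change Model. The position of a content in the ranking can change only at the instant some content-type's clock ticks, so it suffices to count the clock ticks that ``touch'' the current top-$n^{1/\beta}$ set. First I would observe that the clocks of all $m=\alpha n$ content-types, each of rate $\nu$, superpose into a single Poisson process of total rate $\alpha n \nu$, and that at each tick the two content-types involved --- the one whose clock fired (uniform over all $m$ types) and the partner $j$ it swaps with (uniform over the remaining $m-1$) --- are drawn independently of the entire past of the process.

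Next I would set up a thinning argument. Call a tick \emph{relevant} if at least one of the two participating content-types is, at that instant, among the current top $n^{1/\beta}$ contents; each relevant tick accounts for (at most) the position changes recorded by $G_1$, where the two ways a top content is disturbed are (a) its own clock firing and (b) an outside content's clock firing and selecting it as partner. The crucial point is that, because the top set always has exactly $n^{1/\beta}$ members and both participants are uniform and independent of the configuration, the probability $q$ that a given tick is relevant does not depend on the history: $q \le \frac{n^{1/\beta}}{\alpha n} + \frac{n^{1/\beta}}{\alpha n - 1} = \Theta\!\big(n^{1/\beta}/n\big)$. Conditioning on the whole past, each tick is relevant with this same probability, so the relevant ticks form a thinning of the rate-$\alpha n\nu$ Poisson process and are themselves Poisson with rate $\alpha n \nu\, q = \OO(n^{1/\beta}\nu)$.

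Finally I would apply concentration: over the interval of length $b$ the number $N$ of relevant ticks is Poisson with mean $\OO(n^{1/\beta}\nu\, b)$, and since $n^{1/\beta}\to\infty$, the Chernoff bound for a Poisson random variable at the threshold $\frac{4b}{\alpha}n^{1/\beta}\nu$ --- a constant multiple of (an upper bound on) the mean --- yields $\mathbb{P}(N > \tfrac{4b}{\alpha}n^{1/\beta}\nu) = \exp(-\Theta(n^{1/\beta})) = \oo(1/n)$, which gives $\mathbb{P}(G_1)\ge 1-\oo(1/n)$. The main obstacle is precisely that the top-$n^{1/\beta}$ set is a moving target, so one cannot simply fix a set of contents and count ticks on it; the observation that rescues a clean Poisson-thinning argument is the history-independence of the per-tick relevance probability $q$, which follows from the fixed size of the top set together with the uniform, memoryless choice of the two content-types in each swap.
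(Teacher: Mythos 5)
Your proposal follows essentially the same route as the paper's proof, which is a three-sentence compression of your argument: it multiplies the expected number of clock ticks in the interval by the per-tick probability that a tick involves a current top-$n^{1/\beta}$ content, and closes with a Chernoff bound. What you add --- the superposition of the $\alpha n$ clocks into one Poisson process, the observation that the per-tick relevance probability is history-independent because the top set has fixed cardinality $n^{1/\beta}$ while the firing content and its swap partner are uniform and independent of the past, and the resulting domination of the relevant ticks by a thinned Poisson process --- is precisely the justification the paper leaves implicit, and it is the right way to dispose of the ``moving target'' issue you correctly identify. Structurally, your proof is sound and on this point more rigorous than the original.

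There is, however, a concrete problem in your final step, and your careful bookkeeping in fact exposes a constant-factor slip in the lemma itself. With the correct total tick rate $\alpha n \nu$ and your bound $q \le \frac{n^{1/\beta}}{\alpha n} + \frac{n^{1/\beta}}{\alpha n - 1}$, the dominating Poisson mean is $\alpha n \nu b \, q \approx 2 b \nu n^{1/\beta}$, with no factor $1/\alpha$. The lemma's threshold $\frac{4b}{\alpha} n^{1/\beta} \nu$ is thus only $\frac{2}{\alpha}$ times your mean bound, and once $\alpha > 2$ it lies \emph{below} the mean, where the Poisson Chernoff bound yields nothing (if the mean bound is tight, the count would exceed the threshold w.h.p.). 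Your phrase ``a constant multiple of (an upper bound on) the mean'' silently assumes the multiple exceeds one, which fails for $\alpha > 2$. The paper's own proof contains the mirror image of this slip: it takes the expected number of ticks to be $bn\nu$ rather than $\alpha n \nu b$, which is exactly where the spurious $1/\alpha$ in the threshold originates. The repair is cosmetic: restate the threshold as $C b \nu n^{1/\beta}$ for a constant $C$ strictly larger than your mean constant (e.g., $C=8$, which also covers counting two position changes when both participants of a swap are top contents), after which your thinning argument gives $\mathbb{P}(G_1^c) \le \exp(-\Theta(n^{1/\beta})) = \oo(1/n)$, since $n^{1/\beta}$ grows polynomially in $n$. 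Nothing downstream is affected, because Lemma \ref{lemma:types_of_arrivals_changing_arrival_rates} and Theorem \ref{thm:MYOPIC_changing_arrival_rates} use only the order-wise conclusion that $\OO(n^{1/\beta})$ ranking changes occur w.h.p.
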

\begin{proof}
	The expected number of clock ticks in $b$ time-units is $bn\nu$. The probability that a change in arrival process involves at least one of the current $n^{1/\beta}$ most popular contents is $ \frac{n^{1/\beta}}{\alpha n}$. Therefore, the expected number of changes in arrival process which involve at least one of the current $n^{1/\beta}$ most popular contents is  $\frac{2b\nu}{\alpha} n^{1/\beta}$. By the Chernoff bound, we have that
	$
	P(G_1) \geq 1-\oo\big( \frac{1}{n} \big).
	$
\end{proof}

\begin{lemma}
	\label{lemma:types_of_arrivals_changing_arrival_rates}
	Let $G_2$ be the event that in the interval of interest, no more than $\OO(n^{1/\beta})$ different types of contents are requested. Then,
	$
	\mathbb{P}(G_2^c) = \oo\big(\frac{1}{n}\big).
	$
\end{lemma}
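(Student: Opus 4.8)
The plan is to mirror the proof of Lemma~\ref{lemma:types_of_arrivals}, but to split the requested content-types into those that are ``popular'' at some point during the interval and those that are always ``unpopular'' whenever requested, exploiting the fact that the Continuous Change Model merely permutes the Zipf rates rather than altering the multiset of rates. Fix the threshold rank $k = \Theta(n^{1/\beta})$ (concretely $k = (nb)^{1/\beta}+1$, where $b$ is the constant length of the interval). Call a content-type a \emph{heavy hitter at time $t$} if its current popularity rank is at most $k$.

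First I would bound the number of distinct content-types that are heavy hitters at some point in the interval. At the start there are exactly $k$ content-types in the top $k$, and each position change involving a top-$k$ content (as counted in Lemma~\ref{lemma:change_in_heavy_hitters}) can introduce at most a constant number of new types into the set of ``ever top-$k$'' content-types. Conditioned on the event $G_1$ of Lemma~\ref{lemma:change_in_heavy_hitters}, the number of such changes is $\OO(n^{1/\beta})$, so the total number of content-types that are ever heavy hitters is $k + \OO(n^{1/\beta}) = \OO(n^{1/\beta})$.

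Next I would bound the total number of requests that arrive for a content-type while its rank exceeds $k$. The key observation is that under the Continuous Change Model the collection of arrival rates $\{n\bar{\lambda} p_1,\dots,n\bar{\lambda} p_m\}$ is invariant in time: only the identities of the types holding each rate are shuffled. Hence the instantaneous aggregate arrival rate of all content-types currently ranked below $k$ is deterministic and equal to $n\bar{\lambda}\sum_{i>k}p_i$ at \emph{every} instant. Reusing the tail-mass estimate from Lemma~\ref{lemma:E_1}, namely $\sum_{i\ge k}p_i \le \frac{1}{0.9}(k-1)^{-(\beta-1)}$, this aggregate rate is $\OO\big(n\cdot n^{-(\beta-1)/\beta}\big)=\OO(n^{1/\beta})$. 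The ``tail'' request events therefore form a thinning of a Poisson process of constant rate $\OO(n^{1/\beta})$, so over the constant interval of length $b$ their number has mean $\OO(n^{1/\beta})$, and a Chernoff bound shows it exceeds a suitable constant multiple of $n^{1/\beta}$ with probability $\oo(1/n)$.

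Finally I would combine the two bounds: every distinct content-type requested in the interval either was a heavy hitter at some point (and is then counted among the $\OO(n^{1/\beta})$ types of the first step) or was always ranked below $k$ whenever requested, in which case each of its requests is a tail request and it is counted among the $\OO(n^{1/\beta})$ tail events of the second step. Hence the number of distinct requested types is $\OO(n^{1/\beta})$, and a union bound gives $\mathbb{P}(G_2^c) \le \mathbb{P}(G_1^c) + \oo(1/n) = \oo(1/n)$. The main obstacle is the second step, namely decoupling the tail-request count from the complicated, dependent shuffling dynamics; this is resolved precisely by the rate-permutation invariance of the Continuous Change Model, which makes the instantaneous tail arrival rate a fixed constant and reduces an otherwise time-varying counting problem to a single Poisson concentration inequality.
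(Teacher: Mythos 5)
Your proposal is correct and follows essentially the same route as the paper: condition on the event $G_1$ of Lemma~\ref{lemma:change_in_heavy_hitters} to bound the number of content-types ever among the top $\Theta(n^{1/\beta})$, then reuse the tail-mass-plus-Chernoff argument of Lemma~\ref{lemma:types_of_arrivals} for requests to types ranked below the threshold. Your explicit observation that the rate-multiset invariance of the Continuous Change Model keeps the instantaneous tail arrival rate constant in time (so tail requests form a Poisson process of rate $\OO(n^{1/\beta})$) is exactly the detail the paper leaves implicit when it states that the proof ``follows the same lines of arguments'' as Lemma~\ref{lemma:types_of_arrivals}.
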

\begin{proof}
	Conditioned on the event $G_1$ defined in Lemma \ref{lemma:change_in_heavy_hitters}, we have that in the interval of interest, at most $\big(\frac{2b}{\alpha} \nu + 1\big)n^{1/\beta}$ different contents are among the top $n^{1/\beta}$ most popular contents. Given this, the proof follows the same lines of arguments as in the proof of Lemma \ref{lemma:types_of_arrivals}.
\end{proof}
The proof of the theorem then follows from Lemma \ref{lemma:types_of_arrivals_changing_arrival_rates} and uses the same line of arguments as in the proof of Theorem \ref{thm:MYOPIC_static_arrival_rates}.

\subsection{Proof of Theorem 5}
\label{sec:proof1}

To show that GENIE is the optimal policy, we consider the process $X(t)$ which is the number of occupied servers at time $t$ when the storage policy is GENIE. Let $Y(t)$ be the number of occupied servers at time $t$ for some other storage policy $\mathcal{A} \in \mathds{A}$. We construct a coupled process $(X^*(t),Y^*(t))$ such that the marginal rates of change in $X^*(t)$ and $Y^*(t)$ is the same as that of $X(t)$ and $Y(t)$ respectively.

Recall $\bar{\lambda} = \dfrac{\sum_{i=1}^m \lambda_i}{n}$. At time $t$, let $\mathbf{C^{GENIE}}(t)$ and $\mathbf{C^{\mathcal{A}}}(t)$ be the sets of contents stored on idle servers by GENIE and $\mathcal{A}$ respectively. The construction of the coupled process $(X^*(t),Y^*(t))$ is described in Figure \ref{fig:coupling}. We assume that the system starts at time $t=0$ and $X^*(0) = Y^*(0) = 0$. In this construction, we maintain two counters $Z_{X^*}$ and $Z_{Y^*}$ which keep track of the number of departures from the system. Let $Z_{X^*}(0) = Z_{Y^*}(0) = 0$. Let Exp$(\mu)$ be an Exponential random variable with mean $\frac{1}{\mu}$ and Ber$(p)$ be a Bernoulli random variable which is 1 with probability (w.p.) $p$.

\begin{figure}[h]
	\hrule
	\vspace{0.1in}
	\begin{algorithmic}[1]
		\STATE Generate: ARR $\sim$ Exp$(n \bar{\lambda})$, DEP $\sim$ Exp$(\max\{X^*,Y^*\})$
		\STATE $t=t+\min\{$ARR,DEP$\}$
		\IF {ARR$<$DEP, }
		\IF {($X^*=Y^*$)}
		\STATE Generate $u_1 \sim$ Ber$\bigg(\dfrac{\sum_{i \in \mathbf{C^{GENIE}}(t)} \lambda_i}{n \bar{\lambda}}\bigg)$
		\IF {($u_1 =1$)}
		\STATE $X^* \gets X^* + 1$
		\STATE Generate $u_2 \sim$ Ber$\bigg(\dfrac{\sum_{i \in \mathbf{C^{\mathcal{A}}}(t)} \lambda_i}{\sum_{i \in \mathbf{C^{GENIE}(t)}}\lambda_i}\bigg)$
		\STATE \textbf{if} ($u_2 =1$) \textbf{then} $Y^* \gets Y^* +1$
		\ENDIF
		\ELSE
		\STATE Generate $u_1 \sim$ Ber$\bigg(\dfrac{\sum_{i \in \mathbf{C^{GENIE}}(t)} \lambda_i}{n \bar{\lambda}}\bigg)$
		\STATE \textbf{if}($u_1 =1$) \textbf{then} $X^* \gets X^* + 1$
		\STATE Generate $u_2 \sim$ Ber$\bigg(\dfrac{\sum_{i \in \mathbf{C^{\mathcal{A}}}(t)} \lambda_i}{\sum_{i \in \mathbf{C^{GENIE}(t)}}\lambda_i}\bigg)$
		\STATE \textbf{if}($u_2 =1$) \textbf{then} $Y^* \gets Y^* + 1$
		\ENDIF
		\ELSE
		\IF{($X^* \geq Y^*$)}
		\STATE $X^* \gets X^* - 1$, $Z_{X^*} \gets Z_{X^*} + 1$
		\STATE Generate $u_3 \sim$ Ber$\bigg(\dfrac{Y^*}{X^*}\bigg)$
		\STATE \textbf{if} ($u_3 =1$) \textbf{then} $Y^* \gets Y^* - 1$, $Z_{Y^*} \gets Z_{Y^*} + 1$
		\ELSE
		\STATE $Y^* \gets Y^* - 1$, $Z_{Y^*} \gets Z_{Y^*} + 1$
		\STATE Generate $u_4 \sim$ Ber$\bigg(\dfrac{X^*}{Y^*}\bigg)$
		\STATE \textbf{if} ($u_4 =1$) \textbf{then} $X^* \gets X^* - 1$, $Z_{X^*} \gets Z_{X^*} + 1$
		\ENDIF
		\ENDIF
		\STATE Goto 1
	\end{algorithmic}
	\vspace{0.1in}
	\hrule
	\caption{Coupled Process}
	\label{fig:coupling}
\end{figure}

\begin{lemma}
	\label{lemma:marginals}
	$X^*(t)$ and $Y^*(t)$ have the same marginal rates of transition as $X(t)$ and $Y(t)$ respectively.
\end{lemma}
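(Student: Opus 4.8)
The plan is to treat Lemma~\ref{lemma:marginals} as a direct rate-matching verification: in each infinitesimal step of the coupled construction of Figure~\ref{fig:coupling}, the rate at which $X^*$ moves up (resp.\ down) should equal that of $X(t)$, and similarly for $Y^*$ versus $Y(t)$. The first step is therefore to record the target rates of the uncoupled chains. Since arrivals are Poisson of total rate $n\bar\lambda$ and an arrival is for content $i$ with probability $\lambda_i/(n\bar\lambda)$, the occupancy process $X(t)$ under GENIE jumps up at rate $\sum_{i\in\mathbf{C^{GENIE}}(t)}\lambda_i$ (a request is accepted exactly when it is for a content currently on an idle server) and, since each of the $X(t)$ busy servers completes at rate $1$, jumps down at rate $X(t)$; the analogous statements with $\mathbf{C^{\mathcal{A}}}(t)$ and $Y(t)$ hold for policy $\mathcal{A}$.

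Next I would make the timing of the coupled chain precise: the pair (ARR, DEP) is a race of independent exponentials of rates $n\bar\lambda$ and $\max\{X^*,Y^*\}$, so by the competing-exponentials/thinning property one may equivalently regard arrivals as occurring at rate $n\bar\lambda$ and departures at rate $\max\{X^*,Y^*\}$, with each event's type decided independently of its timing. The up-rates then follow: conditioned on an arrival, $X^*$ increments iff $u_1=1$, which in both branches ($X^*=Y^*$ and $X^*\neq Y^*$) has probability $\sum_{i\in\mathbf{C^{GENIE}}(t)}\lambda_i/(n\bar\lambda)$, so the up-rate of $X^*$ is $\sum_{i\in\mathbf{C^{GENIE}}(t)}\lambda_i$, matching $X$. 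For $Y^*$ the conditional increment probability telescopes to $\sum_{i\in\mathbf{C^{\mathcal{A}}}(t)}\lambda_i/(n\bar\lambda)$: in the branch $X^*=Y^*$ it is the product of the two Bernoulli parameters, where the $\sum_{\mathbf{C^{GENIE}}}$ factor cancels, a legitimate cancellation because equal occupancy forces equal idle counts and hence $\sum_{i\in\mathbf{C^{\mathcal{A}}}(t)}\lambda_i\le\sum_{i\in\mathbf{C^{GENIE}}(t)}\lambda_i$ by GENIE's greedy choice; in the other branch the same increment probability is read off directly. Hence $Y^*$ has up-rate $\sum_{i\in\mathbf{C^{\mathcal{A}}}(t)}\lambda_i$, matching $Y$.

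The step I expect to demand the most care is the departure side, because the two coordinates share a single departure clock running at rate $\max\{X^*,Y^*\}$ rather than two independent clocks. Here I would check both orderings. When $X^*\ge Y^*$ the clock rate is $X^*$ and $X^*$ always decrements, giving down-rate $X^*$, while $Y^*$ decrements only when $u_3=1$, i.e.\ with probability $Y^*/X^*$, giving down-rate $X^*\cdot(Y^*/X^*)=Y^*$; when $X^*<Y^*$ the symmetric computation with $u_4\sim\mathrm{Ber}(X^*/Y^*)$ again yields down-rate $X^*$ for $X^*$ and $Y^*$ for $Y^*$. Thus in every case the marginal departure rates are exactly $X^*$ and $Y^*$, reproducing the $\mathrm{Exp}(1)$-per-busy-server mechanism of the uncoupled chains. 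I would close by noting that under GENIE the set $\mathbf{C^{GENIE}}(t)$ is a deterministic function of the idle count $n-X^*$ (the most popular contents), so these matched rates reproduce the law of $X(t)$ exactly, while $\mathbf{C^{\mathcal{A}}}(t)$ is produced by running $\mathcal{A}$ on the coupled event stream, so the pair (configuration of $\mathcal{A}$, $Y^*$) is Markov with the correct generator and $Y^*(t)$ reproduces the law of $Y(t)$; assembling the four rate identities completes the proof.
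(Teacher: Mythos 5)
Your proof is correct and takes essentially the same route as the paper's: a branch-by-branch rate-matching verification, with the up-rate of $Y^*$ in the equal-occupancy branch obtained as the product of the two Bernoulli parameters (valid, as you and the paper both note, because GENIE's idle servers carry maximal popularity mass when $X^*=Y^*$), and the down-rates recovered by thinning the shared $\mathrm{Exp}(\max\{X^*,Y^*\})$ departure clock with ratios $Y^*/X^*$ or $X^*/Y^*$. One remark: in the unequal branch you read the $u_2$ parameter as $\sum_{i \in \mathbf{C^{\mathcal{A}}}(t)} \lambda_i/(n\bar{\lambda})$, which is exactly what the paper's own proof uses — the denominator $\sum_{i \in \mathbf{C^{GENIE}}(t)}\lambda_i$ printed in that branch of Figure~\ref{fig:coupling} appears to be a typo, since it would neither yield the correct marginal rate for $Y^*$ nor remain a valid Bernoulli parameter when $X^*>Y^*$, so your reading (and your extra care about the thinning and Markov-closure details) is the right one.
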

\begin{proof}
	Consider a small interval of time $[t_0, t_0 + \delta]$. By the definition of $X(t)$,
	\begin{eqnarray*}
		\mathbb{P}(X(t_0+\delta) = X(t_0)+1)  &\approx&  \bigg(\sum_{i \in \mathbf{C^{GENIE}}(t)} \lambda_i\bigg) \delta, \\
		\mathbb{P}(X(t_0+\delta) = X(t_0 )-1) &\approx& X(t_0) \delta.
	\end{eqnarray*}
	The above probabilities are implicitly conditioned on a suitable state definition for the system; we henceforth drop the conditioning on the state for notational compactness. For the process $X^*(t)$,
	\begin{eqnarray*}
		\mathbb{P}(X^*(t_0+\delta) = X^*(t_0 )+1) &\approx& n \bar{\lambda} \bigg(\dfrac{\sum_{i \in \mathbf{C^{GENIE}}(t)} \lambda_i}{n \bar{\lambda}}\bigg) \delta \\
		& = & \bigg(\sum_{i \in \mathbf{C^{GENIE}}(t)} \lambda_i\bigg) \delta.
	\end{eqnarray*}
	If $(X^*(t_0) \geq Y^*(t_0))$,
	\begin{eqnarray*}
		\mathbb{P}(X^*(t_0+\delta) = X^*(t_0 )-1) \approx X^*(t_0) \delta,
	\end{eqnarray*}
	and if $(X^*(t_0) < Y^*(t_0))$,
	\begin{eqnarray*}
		\mathbb{P}(X^*(t_0+\delta) = X^*(t_0 )-1) &\approx& Y^*(t_0) \dfrac{X^*(t_0)}{Y^*(t_0)} \delta \\
		&=& X^*(t_0) \delta.
	\end{eqnarray*}
	The approximations become exact as $\delta \rightarrow 0$, since the inter-event (arrival or departure) times are exponential. This proves the lemma for $X^*$ and $X$.
	\newline \newline By the definition of $Y(t)$,
	\begin{eqnarray*}
		\mathbb{P}(Y(t_0+\delta) = Y(t_0 )+1)  &\approx&  \bigg(\sum_{i \in \mathbf{C^{\mathcal{A}}}(t)} \lambda_i\bigg) \delta, \\
		\mathbb{P}(Y(t_0+\delta) = Y(t_0 )-1) &\approx& Y(t_0) \delta.
	\end{eqnarray*}
	Consider the case when $Y^*(t_0 ) = X^*(t_0 )$.
	\newline From Section \ref{subsec:optimal}, we know that, under the GENIE storage policy, if the number of idle servers at time $t$ is $\idle (t)$, they store the $\idle (t)$ most popular contents. Given this, if $X^*(t_0 ) = Y^*(t_0 )$, $\dfrac{\sum_{i \in \mathbf{C^{\mathcal{A}}}(t)} \lambda_i}{\sum_{i \in \mathbf{C^{GENIE}(t)}}\lambda_i} \leq 1$. Therefore, $u_2$ as defined in Step 8 of the coupling construction is a valid bernoulli random variable and in addition, $u_1 \times u_2$ is a bernoulli random variable with parameter $\bigg(\dfrac{\sum_{i \in \mathbf{C^{\mathcal{A}}}(t)} \lambda_i}{n \bar{\lambda}}\bigg)$. Therefore, we have that,
	\begin{eqnarray*}
		\mathbb{P}(Y^*(t_0+\delta) = Y^*(t_0 )+1) &\approx& n \bar{\lambda} \bigg(\dfrac{\sum_{i \in \mathbf{C^{\mathcal{A}}}(t)} \lambda_i}{n \bar{\lambda}}\bigg) \delta \\
		& = & \bigg(\sum_{i \in \mathbf{C^{\mathcal{A}}}(t)} \lambda_i\bigg) \delta.
	\end{eqnarray*}
	If $Y^*(t_0 ) \neq X^*(t_0 )$,
	\begin{eqnarray*}
		\mathbb{P}(Y^*(t_0+\delta) = Y^*(t_0 )+1) &\approx& n \bar{\lambda} \bigg(\dfrac{\sum_{i \in \mathbf{C^{\mathcal{A}}}(t)} \lambda_i}{n \bar{\lambda}}\bigg) \delta \\
		& = & \bigg(\sum_{i \in \mathbf{C^{\mathcal{A}}}(t)} \lambda_i\bigg) \delta.
	\end{eqnarray*}
	If $(X^*(t_0) \geq Y^*(t_0))$,
	\begin{eqnarray*}
		\mathbb{P}(Y^*(t_0+\delta) = Y^*(t_0 )-1) &\approx& X^*(t_0) \dfrac{Y^*(t_0)}{X^*(t_0)} \delta\\
		&=& Y^*(t_0) \delta,
	\end{eqnarray*}
	and if $(X^*(t_0) < Y^*(t_0))$,
	\begin{eqnarray*}
		\mathbb{P}(Y^*(t_0+\delta) = Y^*(t_0 )-1) &\approx& Y^*(t_0) \delta \\
		&=& Y^*(t_0) \delta.
	\end{eqnarray*}
	This completes the proof.
\end{proof}

\begin{lemma}
	\label{lemma:marginals_2}
	Let $D^{(GENIE)}(t)$ be the number of jobs deferred by time $t$ by the GENIE adaptive storage policy and $D^{(\mathcal{A})}(t)$ to be the number of jobs deferred by time $t$ by a policy $\mathcal{A} \in \mathds{A}$. In the coupled construction, let $W^*(t)$ be the number of arrivals by time $t$. Let, $D^{X^*}(t) = W^*(t) - Z_{(X^*)}(t) - X^*(t)$ and $D^{Y^*}(t)=W^*(t) - Z_{(Y^*)}(t) - Y^*(t)$. Then, $D^{X^*}(t)$ and $D^{Y^*}(t)$ have the same marginal rates of transition as $D^{(GENIE)}(t)$ and $D^{(\mathcal{A})}(t)$ respectively.
\end{lemma}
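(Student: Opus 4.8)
The plan is to interpret both $D^{X^*}(t)$ and $D^{Y^*}(t)$ through a conservation identity and then match their jump rates to those of $D^{(GENIE)}(t)$ and $D^{(\mathcal{A})}(t)$, one branch of the coupling at a time. First I would record the balance relation that holds for \emph{any} storage policy: every request that has arrived by time $t$ is either still deferred, currently in service (contributing to the occupancy), or already departed. Hence, writing $W(t)$ for the number of arrivals, $X(t)$ for the occupancy and $Z_X(t)$ for the number of departures under GENIE,
\begin{eqnarray*}
D^{(GENIE)}(t) = W(t) - X(t) - Z_X(t),
\end{eqnarray*}
and symmetrically $D^{(\mathcal{A})}(t) = W(t) - Y(t) - Z_Y(t)$. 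This is exactly the form of $D^{X^*}(t)$ and $D^{Y^*}(t)$ in the statement, so those quantities are the coupled-process counterparts of the true deferral counts.

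Next I would compute the jump rates of the true processes. Since $D^{(GENIE)}(t)$ is nondecreasing and increases by one precisely when an arrival is deferred, and GENIE defers exactly those arrivals whose content-type is \emph{not} among the contents $\mathbf{C^{GENIE}}(t)$ stored on idle servers, the up-rate of $D^{(GENIE)}(t)$ is $n\bar{\lambda} - \sum_{i \in \mathbf{C^{GENIE}}(t)} \lambda_i$ and its down-rate is zero; likewise $D^{(\mathcal{A})}(t)$ has up-rate $n\bar{\lambda} - \sum_{i \in \mathbf{C^{\mathcal{A}}}(t)} \lambda_i$ and down-rate zero.

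Then I would read off the corresponding rates from the coupled construction of Figure \ref{fig:coupling}. The key bookkeeping observations are: (i) on any departure event each decrement of $X^*$ is paired with an increment of $Z_{X^*}$ while $W^*$ is untouched, so $D^{X^*} = W^* - X^* - Z_{X^*}$ is unchanged, and identically for $D^{Y^*}$; hence departures induce no transitions of $D^{X^*}$ or $D^{Y^*}$; and (ii) on any arrival event $W^*$ increments by one, so $D^{X^*}$ increments by one exactly when $X^*$ does \emph{not} increment, and is otherwise unchanged. By Lemma \ref{lemma:marginals} the $X^*$ marginal has arrival-driven up-rate $\sum_{i \in \mathbf{C^{GENIE}}(t)} \lambda_i$ against the full arrival rate $n\bar{\lambda}$, so the rate at which an arrival occurs without incrementing $X^*$ is $n\bar{\lambda} - \sum_{i \in \mathbf{C^{GENIE}}(t)} \lambda_i$; this is precisely the up-rate of $D^{X^*}$, with down-rate zero. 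The symmetric argument using the $Y^*$ marginal of Lemma \ref{lemma:marginals} gives $D^{Y^*}$ up-rate $n\bar{\lambda} - \sum_{i \in \mathbf{C^{\mathcal{A}}}(t)} \lambda_i$ and down-rate zero. Matching against the rates of the previous paragraph yields the claim.

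The main obstacle I anticipate is the second bookkeeping point: verifying across \emph{all} branches of the coupling (the $X^* = Y^*$ and $X^* \neq Y^*$ arrival branches, and the $X^* \geq Y^*$ and $X^* < Y^*$ departure branches) that the event ``arrival not served by the $X^*$ process'' is exactly the complement, in instantaneous rate, of the served event whose rate Lemma \ref{lemma:marginals} supplies. This is where the conservation identity $W^* = D^{X^*} + X^* + Z_{X^*}$ does the real work: it guarantees that whatever the arrival stream does not route into an occupancy increment is automatically accounted for as a deferral, so no separate rate computation for deferred arrivals is needed beyond subtracting the served rate from $n\bar{\lambda}$.
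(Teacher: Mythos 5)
Your proof is correct and follows essentially the same route as the paper's: both reduce the claim to Lemma \ref{lemma:marginals} via the accounting identity $D = W - X - Z$, noting that departures leave the deferral count unchanged (each decrement of $X^*$ or $Y^*$ is paired with a departure-counter increment) and that at arrival events the deferral count increments exactly when the occupancy does not, so the deferral rate is $n\bar{\lambda}$ minus the served rate supplied by Lemma \ref{lemma:marginals}. The paper states this in one sentence; your write-up simply makes the conservation identity and the branch-by-branch bookkeeping explicit.
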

\begin{proof}
	This follows from Lemma \ref{lemma:marginals} due to the fact that $X(t)$ have the same distribution as $X^*(t)$ and the marginal rate of increase of $D^{X^*}(t)$ given $X^*(t)$ is the same as the rate of increase of $D^{(GENIE)}(t)$ given $X(t)$. The result for $D^{Y^*}(t)$ follows by the same argument.
\end{proof}

\begin{lemma}
	\label{lemma:x_stochastic_dominance_y}
	$X^* \geq Y^*$ for all $t$ on every sample path.
\end{lemma}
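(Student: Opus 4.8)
The plan is to establish the invariant $X^*(t) \ge Y^*(t)$ by induction on the successive events (arrivals and departures) produced by the coupled construction of Figure~\ref{fig:coupling}, leaning on the defining property of GENIE from Section~\ref{subsec:optimal}: whenever there are $\idle(t)$ idle servers, GENIE stores the $\idle(t)$ most popular content-types on them. The base case is immediate, since $X^*(0)=Y^*(0)=0$. For the inductive step I would assume $X^*\ge Y^*$ just before an event and show it persists afterwards, splitting into the arrival case ($\mathrm{ARR}<\mathrm{DEP}$) and the departure case ($\mathrm{DEP}<\mathrm{ARR}$), each further split according to whether $X^*=Y^*$ or $X^*>Y^*$.

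The arrival subcase $X^*=Y^*$ is where the GENIE property is essential. Equal occupancy means both policies have the same number of idle servers, so GENIE's idle set achieves the largest possible total rate and $\sum_{i\in\mathbf{C^{GENIE}}(t)}\lambda_i \ge \sum_{i\in\mathbf{C^{\mathcal{A}}}(t)}\lambda_i$; this makes the Bernoulli $u_2$ of Step~8 well-defined, and, crucially, $Y^*$ is incremented only inside the branch $u_1=1$ in which $X^*$ has already been incremented. Hence from a state with $X^*=Y^*$ an arrival can leave the gap $X^*-Y^*$ at $0$ or raise it to $1$, but never make it negative. In the arrival subcase $X^*>Y^*$, the updates to $X^*$ and $Y^*$ are driven by independent Bernoullis, so the gap can shrink by one (when $Y^*$ increments and $X^*$ does not); since $X^*\ge Y^*+1$ beforehand, it remains nonnegative.

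For departures, the coupling removes a job from the more heavily loaded chain first. In the subcase $X^*\ge Y^*$, Step~19 decrements $X^*$ with certainty while $Y^*$ is decremented with probability $Y^*/X^*$ via $u_3$. The delicate point to emphasize is the boundary $X^*=Y^*$, where $u_3\sim\mathrm{Ber}(Y^*/X^*)=\mathrm{Ber}(1)$ forces $Y^*$ to decrease in lockstep with $X^*$, keeping the gap at $0$; when instead $X^*\ge Y^*+1$, even the worst case (only $X^*$ decrements) lowers the gap by at most one and leaves it nonnegative. The subcase $X^*<Y^*$ is never entered under the invariant and so requires no analysis.

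The main obstacle is not any one computation but the bookkeeping of verifying, across all branches, the two boundary behaviours that drive the argument: at equal occupancy $Y^*$ can never increment unless $X^*$ also increments (arrivals), and $Y^*$ is forced to decrement jointly with $X^*$ (departures). Both follow from the GENIE inequality $\sum_{i\in\mathbf{C^{GENIE}}(t)}\lambda_i \ge \sum_{i\in\mathbf{C^{\mathcal{A}}}(t)}\lambda_i$ at equal occupancy together with the nested form of the coupling Bernoullis; once these are confirmed, the integer-valued gap $X^*-Y^*$ stays nonnegative on every sample path.
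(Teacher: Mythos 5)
Your proof is correct and takes essentially the same route as the paper's: induction over the coupled events with the identical four-case split (arrival versus departure, crossed with $X^*=Y^*$ versus $X^*>Y^*$), using the nested Bernoulli at equal occupancy so that $Y^*$ can only increment when $X^*$ does, and the $\mathrm{Ber}(Y^*/X^*)$ departure coupling forcing lockstep decrements at the boundary. Your additional remarks on the validity of $u_2$ via the GENIE rate dominance belong to the paper's Lemma~\ref{lemma:marginals} rather than to this lemma, but they are consistent with the construction.
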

\begin{proof} 
	The proof follows by induction. $X^*(0) = Y^*(0)$ by construction.
	Let $X^*(t_0^-) \geq Y^*(t_0^-)$ and let there be an arrival or departure at time $t_0$. There are 4 possible cases:
	\begin{itemize}
		\item[i:] If ARR$<$DEP and $X^*(t_0^-) = Y^*(t_0^-)$, $Y^*(t_0) = Y^*(t_0^-) + 1$ only if $X^*(t_0) = X^*(t_0^-) + 1$. Therefore, $X^*(t_0) \geq Y^*(t_0)$.
		\item[ii:] If ARR$<$DEP and $X^*(t_0^-) > Y^*(t_0^-)$, $Y^*(t_0) \leq Y^*(t_0^-) + 1 \leq X^*(t_0^-) \leq X^*(t_0)$. Therefore, $X^*(t_0) \geq Y^*(t_0)$.
		\item[iii:] If DEP$<$ARR and $X^*(t_0^-) = Y^*(t_0^-)$, $X^*(t_0) = Y^*(t_0)$.
		\item[iv:] If DEP$<$ARR and $X^*(t_0^-) > Y^*(t_0^-)$, $X^*(t_0) = X^*(t_0^-)-1 \geq Y^*(t_0^-) \geq Y^*(t_0)$. Therefore, $X^*(t_0) \geq Y^*(t_0)$.
	\end{itemize}
\end{proof}

\begin{lemma}
	\label{lemma:departures_stochastic_dominance}
	$Z_{X^*} \geq Z_{Y^*}$ for all $t$ on every sample path.
\end{lemma}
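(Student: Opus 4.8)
The plan is to argue by induction over the sequence of events (arrivals and departures) generated by the coupled construction of Figure \ref{fig:coupling}, showing that the inequality $Z_{X^*}(t) \geq Z_{Y^*}(t)$ is preserved at every event. Since both counters are initialized to $Z_{X^*}(0) = Z_{Y^*}(0) = 0$ and are modified only on departure events, it suffices to control their joint increment at each departure.

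First I would invoke Lemma \ref{lemma:x_stochastic_dominance_y}, which guarantees that $X^*(t) \geq Y^*(t)$ on every sample path at all times. In particular, at the instant just before any departure we have $X^* \geq Y^*$, so the construction always executes the branch in which $X^*$ is decremented first (the $X^* \geq Y^*$ case of the departure step in Figure \ref{fig:coupling}), and never the complementary branch. The crucial observation is that in this branch $Z_{X^*}$ is incremented deterministically, whereas $Z_{Y^*}$ is incremented only when the coupling coin $u_3 \sim \mathrm{Ber}(Y^*/X^*)$ comes up $1$. Because $X^* \geq Y^*$, this coin has a valid parameter in $[0,1]$, and in every realization the increment of $Z_{Y^*}$ is accompanied by an increment of $Z_{X^*}$.

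This yields the step I want: on every departure event $\Delta Z_{X^*} = 1 \geq \Delta Z_{Y^*} \in \{0,1\}$, while on arrival events both counters are unchanged. Combining this with the base case $Z_{X^*}(0) = Z_{Y^*}(0) = 0$ and inducting over the events gives $Z_{X^*}(t) \geq Z_{Y^*}(t)$ on every sample path, for all $t$.

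The \emph{main subtlety} — rather than a genuine obstacle — is ensuring that the complementary branch ($X^* < Y^*$), in which $Z_{Y^*}$ would increment while $Z_{X^*}$ might not, is never reached; this is exactly what Lemma \ref{lemma:x_stochastic_dominance_y} secures. One should also note that the Bernoulli parameter $Y^*/X^*$ in $u_3$ is well defined (at most one) precisely because $X^* \geq Y^*$, so that the coupled departures are faithfully realized and the per-event increment comparison is valid. With these two points in place the induction is routine.
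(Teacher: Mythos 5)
Your proof is correct and takes essentially the same route as the paper: induction over departure events, invoking Lemma \ref{lemma:x_stochastic_dominance_y} to guarantee $X^* \geq Y^*$ at each departure, so that the coupling increments $Z_{X^*}$ by one while $Z_{Y^*}$ increments by at most one (via $u_3 \sim \mathrm{Ber}(Y^*/X^*)$). The paper compresses this into the phrase ``by the coupling construction,'' and your write-up merely makes explicit the branch analysis and the validity of the Bernoulli parameter that the paper leaves implicit.
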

\begin{proof}
	The proof follows by induction. Since the system starts at time $t=0$, $Z_{X^*}(0) = Z_{Y^*}(0)$. Let $Z_{X^*}(t_0^-) \geq Z_{Y^*}(t_0^-)$ and let there be a departure at time $t_0$. By Lemma \ref{lemma:x_stochastic_dominance_y}, we know that, $X^*(t_0^-) \geq Y^*(t_0^-)$. Therefore, $Z_{X^*}(t_0) \geq Z_{Y^*}(t_0)$ by the coupling construction.
\end{proof}

\begin{proof} (Proof of Theorem \ref{thm:stochastic_dominance}) \\
	By Lemmas \ref{lemma:x_stochastic_dominance_y} and \ref{lemma:departures_stochastic_dominance}, for any sample path,
	\begin{eqnarray*}
		X^*(t) + Z_{X^*}(t) \geq Y^*(t) + Z_{Y^*}(t).
	\end{eqnarray*}
	Therefore, for every sample path, the number of requests already served (not deferred) or being served by the servers by a content delivery system implementing the GENIE policy is more than that by any other storage policy. This implies that for each sample path, the number of requests deferred by GENIE is less than that of any other storage policy. Sample path dominance in the coupled system implies stochastic dominance of the original process. Using this and Lemma \ref{lemma:marginals_2}, we have that,
	\begin{eqnarray*}
		D^{(GENIE)}(t) \leq_{st} D^{(\mathcal{A})}(t).
	\end{eqnarray*}
\end{proof}

\subsection{Proof of Theorem \ref{thm:GENIE_static_arrival_rates}}
\begin{proof}
	The key idea of the GENIE policy is to ensure that at any time $t$, if the number of idle servers is $\idle (t)$, the $\idle (t)$ most popular contents are stored on exactly one idle server each. Since the total number of servers is $n$, and the number of content-types is $m = \alpha n$ for some constant $\alpha > 1$, all content-types $C_i$ for $i > n$ are never stored on idle servers by the GENIE policy. This means that under the GENIE policy, all arrivals for content types $C_i$ for $i > n$ are deferred. For $\beta > 1$,for all $i$,$p_i \geq \frac{1}{c_z} i^{-\beta}$, for some constant $c_z < \infty$.
	The cumulative mass of all content types $i = n+1, .. \alpha n$ is
	\begin{eqnarray*}
		\sum_{i=n+1}^{\alpha n} p_i &\geq& \sum_{i=k}^{\alpha n} \frac{1}{c_z } i^{-\beta} \geq \int_{n+1}^{\alpha n+1} \frac{1}{c_z } i^{-\beta} di \\
		&\geq& \frac{0.9}{c_z  (\beta -1)} \dfrac{1}{(n+1)^{\beta-1}},
	\end{eqnarray*}
	for $n$ large enough.
	
	Let the length of the interval of interest be $b$. The expected number of arrivals of types $n+1, n+2, .. \alpha n$, in the interval of interest is at least $\dfrac{0.9 b \bar{\lambda} n}{c_z (\beta -1)} \\ \dfrac{1}{(n+1)^{\beta-1}}$. Therefore, the expected number of jobs deferred by the GENIE policy in an interval of length $b$ is $\Omega (n^{2-\beta})$.\\
\end{proof}

\subsection{Proof of Theorem \ref{thm:adaptation_cost}}
\begin{proof}
	From the proof of Theorem \ref{thm:MYOPIC_static_arrival_rates}, we know that if $T = o(n^{\beta-1})$, w.h.p.,
	\begin{itemize}
		\item[-] no more than $O(nT)^{1/\beta}$ different types of contents are requested,
		\item[-] once a content $C_i$ is requested for the first time, the MYOPIC policy ensures that there is always at least 1 idle server which can serve a request for $C_i$.
	\end{itemize}
	
	It follows that once a content is requested for the first time, there is at least one copy of that content in the system (more specifically, there is at least one copy of that content on an idle server). Therefore, w.h.p., the number of external fetches is equal to the number of unique content types requested in the interval of interest and the result follows.
	
	For the GENIE policy, before the first arrival, the GENIE policy fetches the $n$ most popular contents to place on the servers.
	
	Let the number of idle servers at $t^-$ be $k(t)$ and let there be a departure from the system at time $t$. After this departure, the content of the new idle server is replaced with $C_{k(t^-)+1}$. From Lemma \ref{lemma:F_1}, we have that with probability $\geq 1 -\oo\big(\frac{1}{n}\big)$, $\Theta(n)$ servers are idle at all times in the interval of interest. Therefore, $k(t^-)+1 > \epsilon n$ for some $\epsilon > 0$ and $\lambda_{k(t^-)+1} \leq \frac{\bar{\lambda} n}{(\epsilon n)^{\beta}}$. The number of currently busy servers serving a request for content $k(t^-)+1$ is stochastically dominated by a Poisson random variable with rate $\frac{\bar{\lambda} n}{(\epsilon n)^{\beta}}$. Therefore, at time $t^+$, with probability $\geq 1 - \frac{\bar{\lambda} n}{(\epsilon n)^{\beta}}$, there is no currently busy server in the system serving a request for $C_{k(t^-)+1}$. By the properties of the GENIE policy, the other $k(t^-)$ idle servers store the $k(t^-)$ most popular contents. Therefore, content $k(t^-)+1$ is not available in the system (on a busy or idle server) at time $t^+$ and will be fetched from the back-end server. Therefore, w.h.p., each departure is followed by an external fetch. Since there are $\Theta(nT)$ departures in an interval of duration $T$, the result follows.
	
\end{proof}

\section{Related Work}
\label{sec:related}
Our model of content delivery systems shares several features with
recent models and analyses for content placement and request
scheduling in multi-server queueing systems
\cite{LLM12,XT13,LLM13,Whitt07}. All these works either assume known
demand statistics, or a low-dimensional regime (thus permiting
``easy'' learning). Our study is different in its focus on unknown,
high-dimensional and time-varying demand statistics, thus making it
difficult to consistently estimate statistics.  Our setting also
shares some aspects of estimating large alphabet distributions with
only limited samples, with early contributions from Good and
Turing~\cite{Good53}, to recent variants of such
estimators~\cite{MS00,VV11}.

Our work is also related to the rich body of work on the content
replication strategies in peer-to-peer networks, e.g., \cite{TM13,
	KRR02, LC02, KRT07, WL12, ZFC13, CMGLT12, cluster02}.  Replication
is used in various contexts: \cite{TM13} utilizes it in a setting with
large storage limits, \cite{KRR02,LC02} use it to decrease the time
taken to locate specific content, and \cite{ZFC13,CMGLT12,cluster02}
use it to increase bandwidth in the setting of video
streaming.
However, the common assumption is that the number of content-types
does not scale with the number of peers, and that a request can be
served in parallel by multiple servers (and with increased network
bandwidth as the number of peers with a specific content-type
increases) which is fundamentally different from our setting.

Finally, our work is also related to the vast literature on content
replacement algorithms in server/web cache management. As discussed in
\cite{Wang99}, parameters of the content (e.g., how large is the
content, when was it last requested) are used to derive a cost, which
in-turn, is used to replace content.
Examples of
algorithms that have a cost-based interpretation include the Least
Recently Used (LRU) policy, the Least Frequently Used (LFU) policy,
and the Max-Size policy \cite{Size96}. 
We refer to \cite{Wang99} for a survey of web caching schemes. There
is a huge amount of work on the performance of replication strategies
in single-cache systems; however the analysis of adaptive caching
schemes in distributed cache systems under stochastic models of
arrivals and departures is very limited.


\section{Conclusions}\label{conclusion}
In this paper, we considered the high dimensional setting where the
number of servers, the number of content-types, and the number of
requests to be served over any time interval all scale as $\OO(n)$;
further the demand statistics are not known a-priori. This setting is
motivated by the enormity of the contents and their time-varying
popularity which prevent the consistent estimation of demands.

The main message of this paper is that in such settings, separating
the estimation of demands and the subsequent use of the estimations to
design optimal content placement policies (``learn-and-optimize''
approach) is order-wise suboptimal. This is in contrast to the
low dimensional setting, where the existence of a constant bound on the
number of content-types allows asymptotic optimality of a
learn-and-optimize approach.


\bibliographystyle{unsrt}
\bibliography{myref2}



\end{document}